\newcommand{\E}{\mathbb{E}}
\newcommand{\PP}{\mathbb{P}}
\renewcommand{\hat}{\widehat}
\newtheorem{thm}{Theorem}
\newtheorem{lem}{Lemma}
\newtheorem{prop}{Proposition}
\newtheorem{definition}{Definition}
\newtheorem*{remark}{Remark}
\newcounter{myalgctr}
\numberwithin{myalgctr}{section}
\DeclareMathOperator*{\argmin}{arg\,min}
\DeclareRobustCommand{\rchi}{{\mathpalette\irchi\relax}}
\newcommand{\irchi}[2]{\raisebox{\depth}{$#1\chi$}}
\newcommand*{\rom}[1]{\expandafter\@slowromancap\romannumeral #1@}
\def\namedlabel#1#2{\begingroup
    #2%
    \def\@currentlabel{#2}%
    \phantomsection\label{#1}\endgroup
}
\title{Doubly Robust Calibration of Prediction Sets under Covariate Shift}
\author[1]{
    Yachong Yang\thanks{Email and address: {\tt yachong@wharton.upenn.edu}, Academic Research Building, 265 S 37th St, Philadalphia, PA, US.}}
\author[2]{
    Arun Kumar Kuchibhotla\thanks{Email and address: {\tt arunku@cmu.edu}, Baker Hall, 4909 Frew St, Pittsburgh, PA, US.}}
\author[1]{
   Eric Tchetgen Tchetgen\thanks{Email and address: {\tt ett@wharton.upenn.edu}, Academic Research Building, 265 S 37th St, Philadalphia, PA, US.}}    
\affil[1]{Department of Statistics, University of Pennsylvania}
\affil[2]{Department of Statistics \& Data Science, Carnegie Mellon University}
\begin{document}
\maketitle

\begin{abstract}
Conformal prediction has received tremendous attention in recent years and has offered new solutions to problems in missing data and causal inference; yet these advances have not leveraged modern semiparametric efficiency theory for more robust and efficient uncertainty quantification. In this paper, we consider the problem of obtaining distribution-free prediction regions accounting for a shift in the distribution of the covariates between the training and test data. Under an \textit{explainable covariate shift} assumption analogous to the standard missing at random assumption, we propose three variants of a general framework to construct well-calibrated prediction regions for the unobserved outcome in the test sample. Our approach is based on the efficient influence function for the quantile of the unobserved outcome in the test population combined with an arbitrary machine learning prediction algorithm, without compromising asymptotic coverage. We establish that the resulting prediction sets eventually attain nominal coverage in large samples. This guarantee is a consequence of the product bias form of our proposal which implies correct coverage if either the propensity score or the conditional distribution of the response is estimated sufficiently well. Our results also provide a framework for construction of doubly robust prediction sets of individual treatment effects, under the unconfoundedness condition. We further discuss aggregation of prediction sets from different machine learning algorithms for optimal prediction and illustrate the performance of our methods in both synthetic and real data. Finally, inspired by sensitivity analysis in missing data, we briefly discuss how our proposal could be extended to account for departures from the explainable covariate shift setting.
\end{abstract}

\section{Introduction}

Prediction is a major focus of modern machine learning literature. Most machine learning methods are designed for point prediction, but accurately quantifying the uncertainty associated with a given point prediction algorithm remains an important challenge in many applications. Given independent and identically distributed (i.i.d.) pairs $\left(X_{i}, Y_{i}\right)$, $i=1, \ldots, N$, from a distribution $P = P_{X}\otimes P_{Y|X}$ supported on $\mathcal{X}\times\mathbb{R}$ (e.g., $\mathcal{X} = \mathbb{R}^d, d\ge1$),
and given a desired nominal coverage rate $1-\alpha \in (0,1)$, the goal of prediction with well-calibrated uncertainty quantification is to build a prediction set $\widehat{C}_{N,\alpha}$, such that
\begin{equation}\label{eq:general-goal}
\mathbb{P}\bigl(Y_f \in \widehat{C}_{N,\alpha} (X_f)\bigr) \ge 1 - \alpha,
\end{equation}
where the probability is taken over the marginal distribution of all the training data along with $(X_f, Y_f)$.
Note that \eqref{eq:general-goal} does not imply conditional coverage $\mathbb{P}\bigl(Y_f \in \widehat{C}_{N,\alpha} (X_f) | X_f = x_f \bigr) \geq 1-\alpha$, which is known to be impossible without assumptions over the underlying distribution as shown in  \cite{barber2019limits}. {\color{black}This goal, however, can be \emph{approximately} achieved where \emph{approximately} is meant either asymptotically or by conditioning on $X_f$ belonging to a set $A$ rather than $X_f$ being equal to, say $x_f$.}
Conformal prediction introduced by~\cite{vovk2005algorithmic} provides a simple and finite-sample valid solution to~\eqref{eq:general-goal} without any assumption on the distribution $P$, requiring only that the training data and $(X_f, Y_f)$ to be exchangeable (jointly).  It provides a valid prediction set by wrapping around any point prediction algorithm, irrespective of what the point prediction algorithm is. 

Recently several works have considered the extension of conformal prediction methodology to the case of non-exchangeable data; see Section~\ref{sec:literature-review} for a review. Our work lies in this space. One important distinction from the exchangeable case is that finite sample coverage guarantees are generally impossible for non-exchangeable data without very restrictive assumptions. Formally, we consider the problem of prediction with uncertainty quantification under covariate shift where the completely observed training data is drawn i.i.d. from $P_X\otimes P_{Y|X}$ but the test point that needs to be covered by a prediction set is drawn from $Q_X\otimes P_{Y|X}$. This problem, which we term \textit{explainable covariate shift} problem, was first posed 
by~\cite{tibshirani2019conformal} who provides a solution called ``Weighted Conformal Prediction" (WCP) which assumes that the covariate shift is known via $dQ_X/dP_X$. This method has been extended by~\cite{lei2020conformal} to allow for unknown covariate shift. 

The prediction problem under covariate shift can be equivalently stated as predicting the label for a given feature vector observed from a different covariate distribution. In terms of real-world applications, prediction under covariate shift is important in semi-supervised and transfer learning settings. In health care and related problems, it is often the case that the amount of labeled data is limited in comparison to unlabeled data. This is, particularly, true with electronic health record (EHR) data where labeling the response can be costly and/or laborious~\citep{chakrabortty2018efficient}. Instead of assuming the data are missing completely at random (i.e., observations are chosen to be labeled completely at random), it may be preferable to allow for the possibility that they are labeled based on observed covariate features. Prediction in this case is same as prediction under covariate shift as proved in Section~\ref{sec:notation-covariate} and can be useful either for imputation or for understanding the spread in the response distribution. Recently,~\cite{lei2020conformal} and~\cite{jin2021sensitivity} showed how to use prediction under covariate shift to construct prediction intervals for individual treatment effect under traditional causal inference assumptions. This can potentially be more useful in understanding the impact of a treatment at the individual level than the standard average treatment effect; please see appendix \ref{sec:notation-causal} for more details.

Therefore, prediction under covariate shift is a building block of several important prediction problems and an improvement upon the weighted conformal method of~\cite{tibshirani2019conformal} would lead to advancements in many other directions as well, which is exactly what we aim to do in this work. We reconsider the problem of prediction under covariate shift from a missing data point of view and provide a novel solution that is more robust and computationally efficient using modern semiparametric efficiency theory.

\section{Conformal prediction: literature review}\label{sec:literature-review}

There are different forms of guarantees that one might consider for the validity of a prediction set. A test point $(X, Y)$ is covered by a set $C$ if and only if $\mathbbm{1}\{Y\in C(X)\} = 1$. Often in practice, the set $C$ is constructed based on a training data and the test point $(X, Y)$ is independent of $C$. If the prediction set denoted as $\widehat{C}$ is computed from a training data and the test point $(X, Y)$ is drawn from a distribution $P$, then the (prediction) miscoverage loss with respect to $P$ is given by
\[
L_P(\widehat{C}) := \mathbb{E}_{(X, Y)\sim P}[\mathbbm{1}\{Y\notin \widehat{C}(X)\}\big|\widehat{C}] = \int \mathbbm{1}\{y\notin\widehat{C}(x)\}dP(x, y).
\]
Observe that $L_P(\widehat{C})$ is a random variable if $\widehat{C}$ is a random set, which often it is. Further, note that there is no requirement that the training data used to construct $\widehat{C}$ comes from the same distribution $P$ as the test data $(X, Y)$ in defining the loss $L_P(\widehat{C})$.

By a $(1-\alpha)$-prediction set, one might expect/ask for $L_P(\widehat{C}) \le \alpha$. But this is too much to expect/ask in general without very restrictive assumptions on $P$ or $\widehat{C}$. More actionable goals are to require $L_P(\widehat{C})$ to be less than $\alpha$ either in expectation or with some specific probability with respect to the randomness of $\widehat{C}$. We refer to the goal of $\mathbb{E}[L_P(\widehat{C})] \le \alpha$ as {\bf joint coverage}. This is the problem solved by conformal prediction (as introduced by~\cite{vovk2005algorithmic}) and many of its variants in the literature. The goal of $\mathbb{P}(L_P(\widehat{C}) \le \alpha) \ge 1 - \delta$ is referred to as {\bf$(\alpha, \delta)$ probably approximately correct (PAC)} in the literature, which belongs to the class of classical tolerance regions --- sets that cover a pre-specified fraction of the population distribution, see e.g. \cite{guttman1967statistical} and  \cite{krishnamoorthy2009statistical}. Notably, the PAC prediction set depends on an additional parameter $\delta$ which is not required by the conformal approach. We note that, in general, neither will imply the other. \cite{vovk2012conditional} shows that the split conformal prediction sets with $(1-\alpha)$ joint coverage also satisfy $(\alpha + f_N(\delta), \delta)$-PAC guarantee, for an explicitly computable $f_N(\delta)$.

\subsection{Under Exchangeability}\label{sec:exchangeable}
We will now provide a brief introduction to the literature on conformal prediction with some emphasis on split conformal prediction that will be important to understanding the current work. \cite{vovk2005algorithmic} first introduced a version of conformal prediction called the transductive conformal method (later described in~\cite{lei2013distribution} as the full conformal method) that requires fitting the learning algorithm to samples $(X_i,Y_i), 1\le i\le N$ and $X_{N+1} = x$ for all $x \in \mathcal{X}$. While this method makes full use of the data for prediction, it is computationally intensive in practice. \cite{papadopoulos2002inductive} proposed an alternative method called the inductive conformal method (or the split conformal method in~\cite{lei2014distribution}) which splits the data into two different parts, and the learning algorithm is trained only on the first part and the prediction set is constructed using conformal ``scores'' on the second part of the data. For concreteness, we describe the split conformal algorithm here in the regression setting with $(X_i, Y_i)\in\mathcal{X}\times\mathbb{R}$. Let $\mathcal{A}:\mathcal{X}\to\mathbb{R}$ be any prediction algorithm trained on the first split of the data, i.e., for any $x\in\mathcal{X}$, $\mathcal{A}(x)$ is the point prediction for $Y$. If $m$ is the number of observations in the second split, $R_i = |Y_i - \mathcal{A}(X_i)|, i=1,\dots, m$ are residuals computed on the second split of the data, and $\widehat{Q}_{\alpha}$ is the $\lceil(m+1)(1-\alpha)\rceil$-th largest element of $R_i$'s, then for any $(X_{N+1}, Y_{N+1})$ that is exchangeable with $(X_i, Y_i), i=1, \dots, N$, it holds that $\mathbb{P}(|Y_{N+1} - \mathcal{A}(X_{N+1})| \le \widehat{Q}_{\alpha}) \ge 1 - \alpha.$ In particular, for $\widehat{C}_{\alpha}(x) = \{ y:\,|y - \mathcal{A}(x)| \le \widehat{Q}_{\alpha}\}$, we have $\mathbb{P}( Y_{N+1} \in \widehat{C}_{\alpha} (X_{N+1} ) ) \ge 1 - \alpha$. {\color{black}Both full and split conformal algorithms require only the assumption of exchangeable data for coverage validity.} 
To overcome potential statistical inefficiency due to data splitting, several papers including \cite{barber2021predictive}, \cite{romano2020classification} and \cite{kim2020predictive} introduced aggregation techniques such as jackknife+, CV+, bootstrap after jackknife+  to make better use of the data. All aforementioned prediction set constructions are valid under exchangeability of the training data as well as the test point $(X, Y)$ to be covered, as discussed in \cite{kuchibhotla2020exchangeability}. Also, see~\cite{solari2021multi}.


The above works consider joint coverage as the criterion for valid prediction set and focus on exchangeability for coverage validity.
There also is a separate line of research including, but not limited to, \cite{gyofi2020nearest} and \cite{yang2021finite} on prediction sets based on i.i.d. data assumption and concentration inequalities. The advantage is that one attains coverage guarantees conditional on the training data used to construct the set, which may be more informative and automatically implies unconditional (PAC) coverage. To elaborate on this line of thought and as an initial glimpse into the current work, consider the split conformal prediction procedure described above. In that description, under i.i.d. data assumption, the residuals computed on the second split of the data $R_i = |Y_i - \mathcal{A}(X_i)|, i=1,\dots,m$ are i.i.d. along with $R = |Y - \mathcal{A}(X)|$, irrespective of what $\mathcal{A}(\cdot)$ is. We aim to find $\widehat{r}_{\alpha}$ such that $\mathbb{P}(|Y - \mathcal{A}(X)| \le \widehat{r}_{\alpha}|\mathcal{A}) \ge 1 - \alpha$. The concentration inequalities approach to prediction sets can now proceed as follows: the well-known DKW inequality (see e.g. ~\cite{massart1990tight}) implies that
    \[
    \E \biggl[ \sup_{t} \Bigl|\frac{1}{m}\sum_{i=1}^m \mathbbm{1}\{|Y_i - \mathcal{A}(X_i)| \le t\} - \PP \bigl(|Y - \mathcal{A}(X)| \le t|\mathcal{A} \bigr) \Bigr| |\mathcal{A}\biggr] \le \sqrt{\frac{2}{m}}.
    \]
     Hence, an $\widehat{r}_{\alpha}$ at which the empirical CDF of the residuals is at least $(1 - \alpha) + \sqrt{2/m}$ yields a valid prediction set of joint coverage of at least $1-\alpha$. This also implies that if we just use the sample $(1-\alpha)-$quantile of the residuals, then this would give a prediction set that has an approximate coverage of $1-\alpha$ with a slack at most $\sqrt{2/m}.$ Moreover, the application of DKW-type inequalities here also readily yields a PAC guarantee. The well-developed theory of efficient estimation in semiparametric theory shows that the sample quantile is an optimal estimator of the population quantile~\citep{pfanzagl1976investigating}. This optimality theory is leveraged throughout the manuscript in deriving well-calibrated prediction sets under covariate shift.

\subsection{Under Non-exchangeability}

Several authors have considered the problem of prediction sets for non-exchangeable data. The literature can be divided into three parts: (1) works that consider independent data but with potential non-identical distributions, (2) works that consider prediction set construction for dependent data including time series, and (3) works that are agnostic to the randomness structure in the data. Our current work belongs to the first category. In the following, we only review the works directly related to our work and leave the review of the latter two categories to Section \ref{sec:non-exchangeable} of the apppendix.

\cite{tibshirani2019conformal} introduced the problem of prediction set construction under covariate shift. Building on \cite{tibshirani2019conformal}, \cite{lei2020conformal} worked in a covariate shift setting, establishing asymptotic joint coverage under certain conditions, accounting for estimation of the covariate density ratio.
\cite{cauchois2020robust} provided a method that produces approximate valid prediction set for any test distribution in an $f$-divergence ball around the training population and also discussed how to estimate the expected data shift and build robustness to it.
\cite{lei2020conformal} and~\cite{kivaranovic2020conformal} constructed prediction sets for counterfactuals and individual treatment effects (ITE). The former work uses the classical SUTVA assumption and rewrites the problem in terms of covariate shift assumption. The latter work assumes covariates are independent of treatment assignment and therefore implicitly rules out any covariate shift. These works represent the first connection of conformal prediction to causal inference. 

In this paper, we will focus on the specific form of non-exchangeability called \emph{covariate shift} considered in~\cite{tibshirani2019conformal} and~\cite{lei2020conformal} where the training data $\mathcal{D}^{\mbox{tr}}$ is composed of two parts $\mathcal{D}^{\mbox{tr}}_P$ and $\mathcal{D}^{\mbox{tr}}_Q$, and  
random variables $(X_i,Y_i)$'s in $\mathcal{D}^{\mbox{tr}}_P$ are  i.i.d. from $P_X\otimes P_{Y|X}$, while random variables $X_i$'s in $\mathcal{D}^{\mbox{tr}}_Q$ are i.i.d. from $Q_X$ with missing response. The goal is to obtain a prediction set such that the probability a new data pair $(X_f,Y_f)$ falls into this prediction set is larger than some nominal level $1-\alpha$ where $(X_f, Y_f)$ is from $Q_X\otimes P_{Y|X}$. As noted, all prior works focused primarily on achieving this goal asymptotically with the exception of~\cite{tibshirani2019conformal} who assumed the covariate shift is known (i.e., known $dQ_X/dP_X$), in which case they achieved finite sample coverage guarantee. We prove in this paper that it is impossible to construct a \emph{non-trivial} finite sample valid prediction set without complete a priori knowledge of either the covariate shift or the conditional distribution of $Y_f$ given $X_f$; see Theorem~\ref{thm:impossibility-finite}. We thus resort to the goal of constructing a prediction set with an asymptotic coverage of at least $1-\alpha$.

Our approach to construction of prediction sets under covariate shift requires estimation of the quantile of a univariate function of $(X_f, Y_f)$ akin to the conformal score. This task involves nuisance parameters that must be estimated sufficiently well to ensure the asymptotic coverage guarantee of our prediction sets. Fortunately, as we establish using modern semiparametric theory, this task can be accomplished in a robust fashion by using the efficient influence function as an estimating equation for the quantile. As we show, the efficient influence function is endowed with a double robustness property (see e.g. \cite{scharfstein1999adjusting}, \cite{robins2000robust} and \cite{bang2005doubly}) which ensures that the coverage bias of our prediction sets can be made negligible even if the nuisance parameters are estimated at nonparametric rates using highly adaptive machine learning algorithms.   

Before completing the review of the relevant literature, we mention a concurrent work~\cite{qiu2022distribution} that uses similar connection to semiparametric statistics as ours does but targets a different type of asymptotic PAC guarantee than we are able to provide in Theorem~\ref{thm:coverage-final}, a detailed discussion comparing the two goals and corresponding methods is given at the end of Section~\ref{subsec:covariate-shift-problem}.

\paragraph{Organization} The remainder of the paper is organized as follows. In Section \ref{sec:problem-notation}, we formally introduce the explainable covariate shift problem, providing the missing data formulations of the problem that gives notation used throughout the paper and also makes its connection to causal inference. 
In Section \ref{sec:impossibility-finite}, we formally establish that it is impossible to construct a well-calibrated prediction set that is informative without a priori knowledge about the covariate shift such as knowledge of the ratio of covariate densities, in the sense that any valid prediction set would with high probability have infinite Lebesgue measure.
  In Section \ref{sec:method-split}, we provide our first doubly robust algorithm for the explainable covariate shift problem, using a sample splitting strategy which attains nominal asymptotic prediction coverage.
  In Section \ref{sec:method-full}, we provide a second doubly robust algorithm that makes more efficient use of the observed data by avoiding sample splitting, with  guaranteed validity under certain regularity conditions we establish. Section \ref{sec:simulations} reports simulation studies (both synthetic and real data) validating the theoretical results of the proposed doubly robust methods, and comparing them to the weighted conformal prediction method of \cite{tibshirani2019conformal}. In Section \ref{sec:efcp}, we consider the problem of aggregating a collection of prediction sets by providing an explicit algorithm adapted from  \cite{yang2021finite} in an effort to optimize prediction accuracy; we illustrate the efficiency of the proposed algorithm through simulations. 
  In Section \ref{sec:sensitivity}, we relax the explainable covariate shift assumption by allowing for the presence of latent covariate shift encoded in a sensitivity parameter and discuss the efficient influence function. This is the first step in extending the framework of the current paper to account for a departure from explainable covariate shift as well as constructing prediction sets for ITE under unmeasured confounding. Further elaboration of Section~\ref{sec:sensitivity} will appear elsewhere.
  Finally in Section \ref{sec:discussion}, we conclude the paper with a brief discussion. 

Proofs of all results and supporting lemmas are provided in the supplementary where for convenience, the sections and equations are prefixed with ``S.'' and ``E.'', respectively.

\section{Our Problem and Notation}\label{sec:problem-notation}
In this section, we provide a formal description of the \emph{explainable} covariate shift problem, and introduce both missing data and counterfactual formulations of the problem. 

\subsection{The Covariate Shift Problem}\label{subsec:covariate-shift-problem}
The common assumption that the training and test data follow a common probability distribution can fail in practice, for example training data may be collected under stringent laboratory conditions that cannot be met when deployed in clinical practice; likewise, image training data may be obtained in one region whereas the test data may be collected in another. Therefore, it is important to consider situations where training and test distributions are different, also known as \textit{covariate shift}, which we formalize below. Assuming we have training data $\mathcal{D}^{\mbox{tr}}$ composed of two parts $\mathcal{D}^{\mbox{tr}}_P$ and $\mathcal{D}^{\mbox{tr}}_Q$, where
\begin{equation}\label{eq:training-data}
\mathcal{D}^{\mbox{tr}}_P ~:=~ \{Z_i = (X_i, Y_i):\, 1\le i\le n\}\quad\text{ and }\quad \mathcal{D}^{\mbox{tr}}_Q ~:=~ \{Z_i = X_i:\, n+1 \le i\le N\},
\end{equation}
and random variables in $\mathcal{D}^{\mbox{tr}}_P$ are  i.i.d. from $P_X\otimes P_{Y|X}$, while random variables in $\mathcal{D}^{\mbox{tr}}_Q$ are i.i.d. from $Q_X$.
Note that in $\mathcal{D}^{\mbox{tr}}_Q$ only data on covariates are available and the outcome/response is missing, and thus the subject of prediction. In such setting, a covariate shift problem is said to be present as the covariates in $\mathcal{D}^{\mbox{tr}}_Q$ are sampled from $Q_X$ which may be different from $P_X$ (the distribution of covariates in $\mathcal{D}^{\mbox{tr}}_P$). This setting readily extends to the more general case where one also observes samples from $Q_X \otimes P_{Y|X}$, and/or samples from $Q_X \otimes Q_{Y|X}$; settings that are closely related to \textit{transfer learning} (see e.g. \cite{kpotufe2018marginal} and \cite{reeve2021adaptive}) and in the special case when $P_X$ and $Q_X$ are identical, this reduces exactly to the setting of \textit{semi-supervised learning} (see e.g. \cite{zhu2009introduction} and \cite{zhang2019semi}). Though several variants of the covariate shift problem have been of interest in statistics and ML literatures, the specific setup considered  has recently generated renewed interest in ML literature. While the focus has  usually been on approaches to account for covariate shift while conducting model selection such as regression, see e.g. \cite{sugiyama2007covariate}, \cite{quinonero2008dataset}, \cite{bickel2009discriminative}, \cite{reddi2015doubly}, \cite{chen2016robust}, the goal here is to obtain a prediction set such that
\begin{equation}\label{eq:goal}
\mathbb{P}\bigl(Y_f \in \widehat{C}_{N,\alpha}(X_f)\bigr) \ge 1 - \alpha,\quad\mbox{whenever}\quad (X_f, Y_f)\sim Q_X\otimes P_{Y|X}.
\end{equation}
This problem was first posed in \cite{tibshirani2019conformal} where they developed a weighted version of conformal prediction that produces valid prediction sets when the likelihood ratio between training and test distributions is known. The idea of their construction is similar to importance sampling Monte Carlo. If, as in most practical settings, the likelihood ratio between the two distributions is unknown and therefore must be estimated, they empirically demonstrate in a low-dimensional setting that approximate coverage might still be possible via simulation studies. However, \cite{tibshirani2019conformal} do not formally consider the extent to which bias in estimating the likelihood ratio propagates to impact coverage. {\color{black}In addition, as noted by the authors of that paper, for every new test point $(x_f,y_f)$, their prediction set that involves a weighted quantile has to be recalculated, and this would be computationally intensive.}  
Specifically, their approach requires that the
test point $x_f$  is specified in advance. 
  
In this work we will construct prediction regions $\widehat{C}_N$ that are determined by one-dimensional functions of $(X_f, Y_f)$, i.e., we take an arbitrary function $(x,y)\mapsto R(x, y)\in\mathbb{R}$ and estimate the quantile of $R(X_f, Y_f)$. For now, we will think of $R(\cdot, \cdot)$ as a fixed non-stochastic function. In practice, $R(\cdot, \cdot)$ is a conformal score computed from an independent sample; examples include $R(x, y) = |y - \mathcal{A}(x)|$~\citep[regression residual,][]{lei2018distribution} or $R(x, y) = \max\{\widehat{q}_{\alpha/2}(x) - y, y - \widehat{q}_{1-\alpha/2}(x)\}$~\citep[conformalized quantile residual with estimated conditional quantiles $\widehat{q}_{\alpha/2}(\cdot)$ and $\widehat{q}_{1-\alpha/2}(\cdot)$,][]{romano2019conformalized}. If $r_{\alpha}$ is the smallest $(1 - \alpha)$-quantile of $R(X_f, Y_f)$ in the target population $Q_X\otimes P_{Y|X}$, then
\[
\mathbb{P}\bigl( R(X_f, Y_f) \le r_{\alpha} \bigr) \ge 1 - \alpha,
\]
and hence, for $C_{\alpha} = \{(x, y):\, R(x, y) \le r_{\alpha}\}$, we have $\mathbb{P}\bigl( (X_f, Y_f) \in C_{\alpha} \bigr) \ge 1 - \alpha$. This result holds, irrespective of the choice of function $R(\cdot, \cdot)$. Note that the assumption that $Y_i$ conditional on $X_i$ for $1\le i\le n$ has the same distribution as $Y_f$ conditional on $X_f$ essentially means that the covariate shift problem can be completely accounted for by conditioning on observed covariates $X$, hence the reference to this setting as  ``explainable covariate shift" problem. It implies that the conditional distribution of $R(X_i, Y_i)$ given $X_i$ coincides with the conditional distribution of $R(X_f, Y_f)$ given $X_f$, i.e.,
\begin{equation}\label{eq:conditional-equal-Residuals}
\mathbb{P} \bigl(R(X_i, Y_i)\in B\big|X_i = x \bigr) = \mathbb{P} \bigl(R(X_f, Y_f) \in B\big|X_f = x \bigr), 1 \le i \le n
\end{equation}
for all Borel sets $B\subseteq\mathbb{R}$.
We borrow results from semiparametric theory and estimate $r_\alpha$ based on its efficient influence function, which is intimately related to efficient influence function of the average treatment effect among the treated (ATT) functional; we then combine this influence function with arbitrary training map $R(\cdot, \cdot)$ and establish that the resulting prediction set  has asymptotic nominal coverage with coverage bias of a product form which implies correct coverage if either the likelihood ratio between the two distributions or the conditional distribution of $R$ given $X$ can be estimated sufficiently well, also known as double robustness. 
 Note that in our construction of the prediction set $\widehat{C}_{N,\alpha}$, the mapping $R$ does not depend on the test point $x_f$ at which prediction is needed. 
\cite{tibshirani2019conformal} approached~\eqref{eq:goal} using the equation
\begin{equation}\label{eq:Weighted-conformal-equation}
\mathbb{P}_{(X_f, Y_f)\sim Q_X\otimes P_{Y|X}}(R(X_f, Y_f) \le \theta) = \mathbb{E}_{(X, Y)\sim P_X\otimes P_{Y|X}}\left[\mathbbm{1}\{R(X, Y) \le \theta\}\frac{dQ_X}{dP_X}(X)\right].
\end{equation}
This implies that consistent estimation of $dQ_X/dP_X$ allows for consistent estimation of $r_{\alpha}$. But, given~\eqref{eq:conditional-equal-Residuals}, consistent estimation of the conditional distribution of $R(X, Y)$ given $X$ also allows for consistent estimation of $r_{\alpha}$. This gives a hint at double robustness in estimating $r_{\alpha}$ which we will formalize later in the paper.

In \cite{lei2020conformal}, the authors proposed a method that targets the covariate shift problem under the framework of counterfactual prediction in a causal inference setting, which we consider in Section \ref{sec:notation-causal}. Interestingly, the coverage of their prediction set has bias of the order of the minimum of two errors, that of the prediction ML algorithm and that of the estimated covariate likelihood ratio, a property that appears to hold under the so-called conformal quantile regression (CQR) which restricts the choice of conformal score to a quantile regression function for the outcome in view; it is unclear whether similar robustness extends beyond CQR.  In contrast, while our approach equally applies to the counterfactual prediction framework considered by \cite{lei2020conformal}, as we establish, its coverage bias is guaranteed to be of the order of the product of two errors, that of an estimated quantile function for $R$ with that of the covariate likelihood ratio, an immediate consequence of double robustness. Therefore, the bias of our coverage error rate can be substantially smaller relative to that of \cite{lei2020conformal}. In addition, the product bias property of the proposed method is guaranteed to hold for any ML technique used to empirically construct $R$, therefore making our approach potentially more general than theirs. 
\cite{park2021pac} uses probably approximately correct (PAC) prediction sets (tolerance regions that cover a pre-specified fraction of the population distribution) for deep learning models including in  the presence of covariate shift that also requires prior knowledge on the shift of the distributions. Notably, the PAC prediction set depends on an additional parameter $\delta$ which is not required by our approach.


Concurrently to our paper, \cite{qiu2022distribution} consider a related prediction setting, also drawing from modern semiparametric theory, but mainly focusing on a particular form of asymptotic PAC guarantee with covariate shift. We view these contributions as complementary. Specifically, in our covariate shift setting, asymptotic joint coverage is interpreted as: $\mathbb{P}(Y_f\in\widehat{C}_{N,\alpha}(X_f)) \ge 1 - \alpha - o(1)$ as $N\to\infty$. In contrast, the $(\alpha, \delta)$-PAC guarantee can be met approximately with negligible errors in either $\alpha$, $\delta$ or both. The target of~\cite{qiu2022distribution} is to develop $(\alpha, \delta + o(1))$-PAC prediction set. The methodology developed in the current paper provides $(\alpha+o_p(1), \delta)$-PAC guarantee along with the joint coverage guarantee. Whichever version of asymptotic PAC guarantee is more useful depends on the application. Notably, in order to minimize the impact of bias due to nuisance parameter estimation, \cite{qiu2022distribution} leverage the product bias property of the efficient influence function for the coverage probability of their prediction set, while we leverage the product bias structure of the efficient influence function for the $(1-\alpha)$-quantile of the statistic generating the prediction set; these two influence functions are equal up to a multiplicative constant for a given coverage guarantee, although the difference in their use leads to a different guarantee. Importantly, their PAC approach involves the construction of a valid confidence interval for the coverage probability which in turn requires a regular asymptotic linear estimator of coverage, in which case, their product bias must be of order smaller than root-$n$. Our proposed approach does not have this requirement, and therefore our guarantee is attainable even if the product bias is of order larger than root-$n$ provided that at least one of our nuisance functions is consistent.    

\subsection{Reformulation as a Missing Data Problem} \label{sec:notation-covariate}
In this section, we formulate the covariate shift problem in a missing data framework, and introduce notation used throughout the remainder of the paper. This reformulation allows us to make use of the modern theory of semiparametric statistics. Recall the training data~\eqref{eq:training-data}. 
For each $(X_i, Y_i)$ contained in $\mathcal{D}^{\mbox{tr}}_{P}$, define $R_i = R(X_i, Y_i)$ and set $T_i = 0$. For each $(X_i, Y_i)$ contained in $\mathcal{D}_Q^{\mbox{tr}}$, $R_i$ is unobserved because the corresponding $Y_i$ is unobserved. Hence the observed data $\mathcal{D} = \mathcal{D}_P^{\mbox{tr}}\cup\mathcal{D}_Q^{\mbox{tr}}$ can be succinctly written as $Z_i = \bigl(X_i, T_i, (1 - T_i)R_i \bigr), 1\le i\le N$ such that
\begin{equation}\label{eq:covariate-shift}
\mathbb{P}(X_i\in A|T_i = 0) =: P_X(A)\quad\mbox{and}\quad \mathbb{P}(X_i\in A|T_i = 1) = :Q_X(A),
\end{equation}
while
\begin{equation}\label{eq:conditional-dist-same}
\mathbb{P}(R_i \in B|T_i = 0, X_i = x) = \mathbb{P}(R_i \in B|T_i = 1, X_i = x) =: P_{Y|X=x}(B)\quad\mbox{a.e.}\quad x.
\end{equation}
Equations in~\eqref{eq:covariate-shift} signify that the covariates in $\mathcal{D}_P^{\mbox{tr}}$ are distributed as $P_X$ and that the covariates in $\mathcal{D}_{Q}^{\mbox{tr}}$ are distributed as $Q_X$. Condition~\eqref{eq:conditional-dist-same}, on the other hand, signifies that the conditional distribution of the response $Y$ given $X$ is the same for $\mathcal{D}_P^{\mbox{tr}}$, $\mathcal{D}_Q^{\mbox{tr}}$, and the future data $(X_f, Y_f)$. Condition~\eqref{eq:conditional-dist-same} restates condition~\eqref{eq:conditional-equal-Residuals} in terms of $T_i$. 

Condition
~\eqref{eq:conditional-dist-same} implies that $R_i$ is independent of $T_i$ conditional on $X_i$ for all $1\le i\le N$; this is denoted by $R_i\perp T_i|X_i$ and is equivalent to the missing at random (MAR) assumption in missing data literature.  This assumption, which is not testable without an additional condition, essentially states that there is no unmeasured factor that is related with both $R$ and $T$. 
For identification, we further assume that for any Borel set $B$,
\begin{equation}\label{eq:measure-dominating}
    P_X(B)=0 \text{ implies } Q_X(B)=0.
\end{equation}
This is the same as assuming the measure $Q_X$ is absolutely continuous w.r.t $P_X$. In other words, the support of $X|T=1$ is contained in the support of $X|T=0$. Assumption~\eqref{eq:measure-dominating} is needed for~\eqref{eq:Weighted-conformal-equation}, which is a crucial component of the double robustness of our methodology. 

{\color{black}Summarizing the above discussion, the covariate shift assumption which states that only the covariate distributions between $\mathcal{D}_P^{\mbox{tr}}$ and $\mathcal{D}_{Q}^{\mbox{tr}}$ can be different but not the conditional distributions of the response given the covariates is equivalent to the MAR assumption.}

For any $\theta\in\mathbb{R}$, define sets
\begin{equation}
\widehat{C}(\theta; x):= \bigl\{ y: R(x,y) \leq \theta \bigr\}.   
\end{equation}
In this notation, 
under condition \eqref{eq:measure-dominating}, we aim to find a data-dependent random variable $\widehat{r}_{\alpha}$ such that for any $(X_f, Y_f)\sim Q_X\otimes P_{Y|X}$,
\begin{equation}\label{eq:desired-coverage}
\begin{aligned}
\mathbb{P}\bigl(  Y_f \in \widehat{C}(\widehat{r}_{\alpha} ; X_f) \bigr)  = \PP_{(X, Y) \sim Q_{X} \times P_{Y \mid X}}( R(X,Y) \leq \widehat{r}_{\alpha} )  = \PP ( R(X,Y) \leq \widehat{r}_{\alpha}|T = 1 ) \ge 1-\alpha,
\end{aligned}
\end{equation}
while $r_\alpha$, the ``target'' of $\widehat{r}_{\alpha}$ is defined to be the smallest real number such that
\begin{equation}\label{eq:definition-quantile}
\PP \bigl( R(X,Y) \leq r_{\alpha} |T = 1 \bigr) \ge  1-\alpha.
\end{equation}
Because we do not observe random variables $R$ when $T=1$, this goal is not achievable in finite samples without restrictive assumptions such as a known $ \PP(T=1|X)/ \PP(T=0|X)$~\citep{tibshirani2019conformal}.
We provide a random variable $\widehat{r}_{\alpha}$ so that~\eqref{eq:desired-coverage} is achieved with a slack that converges to zero as $N$ tends to $\infty.$

\section{Impossibility of finite sample coverage}\label{sec:impossibility-finite}
Recall our aim from~\eqref{eq:desired-coverage}. Resorting to semiparametric theory, in most cases, implies that the resulting coverage guarantee is only asymptotic. In our problem of covariate shift, one can prove that it is impossible to construct a finite sample valid \emph{non-trivial} prediction set without the knowledge of either the covariate shift or the conditional distribution of $Y$ given $X$. Here, by a non-trivial prediction set, we mean a set with a finite Lebesgue measure. Lemma S1 of Section S6.1 of~\cite{qiu2022distribution} prove an analogous but weaker result for PAC guarantee as they establish a result similar to \eqref{eq:finite-marginal-set-infinite} of the following Theorem, however \eqref{eq:infinite-new} appears to be an entirely novel contribution.

\begin{thm}\label{thm:impossibility-finite}
Suppose the observed data consists of $n$ i.i.d. tuples $(X_i, T_i, (1 - T_i)Y_i)$. Further assume that $\mathcal{X}\subseteq\mathbb{R}^d$ is the support of $X_i$ and $\mathcal{Y}\subseteq\mathbb{R}$ is the support of $Y_i$. Let $\bar{\mathcal{P}}^0$ be the set of all distributions $\bar{P}^0$ on the random vector $\bar{O} = (X,T,Y)$ such that $T$ is independent of $Y$ given $X$ ($T\perp Y|X$), and the joint distribution of $(X, Y)$ is absolutely continuous with respect to the Lebesgue measure on $\mathcal{X} \times \mathcal{Y}$. 

Suppose that a (possibly randomized) prediction set $\hat{C}_{\alpha}$ has finite-sample joint coverage guarantee in the target population, that is,
\begin{equation}\label{eq:finite-marginal-set}
\sup_{\bar{P}^0\in\bar{\mathcal{P}}^0}\,\PP_{\bar{P}^0}(Y \notin \hat{C}_{\alpha}(X) \mid T=1 ) \leq \alpha,\quad\mbox{for some}\quad \alpha\in(0, 1).
\end{equation}
Then, for any $\bar{P}^0 \in \bar{\mathcal{P}}^0$ and a.e. $y \in \mathcal{Y}$ with respect to the Lebesgue measure,
\begin{equation}\label{eq:finite-marginal-set-infinite}
\PP_{\bar{P}^0} \bigl( y \notin \hat{C}_{\alpha}(X) \bigr) \leq \alpha .
\end{equation}
Furthermore, $\hat{C}_{\alpha}(X)$ would at least cover one of the end points $\mathcal{Y}$ with probability at least $1-\alpha$, and hence if $\mathcal{Y} = \mathbb{R}$, then 
\begin{equation}\label{eq:infinite-new}
\mathbb{E}_{\bar{P}^0}[\mbox{Leb}(\widehat{C}_{\alpha}(X))] = \infty.
\end{equation}
\end{thm}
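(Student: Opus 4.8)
The plan is to prove the three conclusions in sequence, each building on the previous. The key device is that under the null model class $\bar{\mathcal{P}}^0$, the conditional independence $T \perp Y \mid X$ means the conditional law of $Y$ given $X$ is the \emph{same} in the $T=1$ and $T=0$ strata, while we have complete freedom to choose the marginal of $Y \mid X$ (subject only to absolute continuity) and the propensity $\PP(T=1\mid X)$. This freedom is what forces any finite-sample-valid set to be uninformative.

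\textbf{Step 1: from conditional-on-$T=1$ coverage to unconditional coverage at a fixed point.} Fix a base distribution $\bar P^0 \in \bar{\mathcal{P}}^0$ and a point $y_0 \in \mathcal{Y}$. I would construct, for each small $\varepsilon > 0$, a perturbed distribution $\bar P^0_\varepsilon$ that agrees with $\bar P^0$ on the law of $X$ and on $\PP(T \mid X)$, but replaces the conditional law of $Y \mid X$ by a mixture $(1-\varepsilon) P_{Y\mid X} + \varepsilon \delta$-like smooth bump concentrated near $y_0$ (using a narrow absolutely continuous density so we stay inside $\bar{\mathcal{P}}^0$). Under $\bar P^0_\varepsilon$, with probability at least $\varepsilon$ (roughly) a draw has $Y \approx y_0$; crucially, because $T \perp Y \mid X$, the same perturbation appears in the $T=1$ stratum. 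Now $\hat C_\alpha$ is built from the \emph{observed} data $(X_i, T_i, (1-T_i)Y_i)$, so its distribution changes only through the changed marginal of $X$ in the labeled sample — but here the labeled-sample law of $(X,Y)$ does change. The cleaner route: let $\varepsilon \to 0$ so that $\bar P^0_\varepsilon \to \bar P^0$ in total variation on the observed-data vector; then the miscoverage probability of $\hat C_\alpha$ at the ``bump'' is continuous, and taking $\varepsilon \to 0$ forces $\PP_{\bar P^0}(y_0 \notin \hat C_\alpha(X)) \le \alpha$ for a.e. $y_0$. I would make this rigorous by a dominated-convergence / Scheffé argument showing the joint law of $(\hat C_\alpha, X, Y)$ converges appropriately, and by a Fubini argument to pass from ``for a.e.\ $y_0$, taking $\varepsilon\to 0$'' to the stated ``for a.e.\ $y\in\mathcal Y$''.

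\textbf{Step 2: at least one endpoint of $\mathcal{Y}$ is covered with probability $\ge 1-\alpha$.} From \eqref{eq:finite-marginal-set-infinite}, for a.e.\ $y$ we have $\PP_{\bar P^0}(y \in \hat C_\alpha(X)) \ge 1-\alpha$. I would integrate a sequence $y_k \to \sup\mathcal{Y}$ (and separately $y_k \to \inf\mathcal{Y}$) and apply Fatou's lemma to $\mathbbm{1}\{y_k \in \hat C_\alpha(X)\}$: this gives $\PP_{\bar P^0}(\limsup_k \{y_k \in \hat C_\alpha(X)\}) \ge 1-\alpha$. Since $\hat C_\alpha(X)$ is a set, $y_k \in \hat C_\alpha(X)$ infinitely often along a sequence escaping to $+\infty$ implies $\hat C_\alpha(X)$ is unbounded above (hence ``covers'' the endpoint $+\infty$ in the relevant closure sense); similarly for the lower end. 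Some care is needed because a priori $\hat C_\alpha(X)$ need not be an interval, so ``covers an endpoint'' must be phrased as ``$\hat C_\alpha(X)$ is unbounded in that direction.''

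\textbf{Step 3: infinite expected Lebesgue measure when $\mathcal{Y}=\mathbb{R}$.} Here $\sup\mathcal{Y}=+\infty$ and $\inf\mathcal{Y}=-\infty$. By Step 2, with probability $\ge 1-\alpha$ the random set $\hat C_\alpha(X)$ is unbounded above, and with probability $\ge 1-\alpha$ it is unbounded below. The cleanest way to conclude $\E[\mathrm{Leb}(\hat C_\alpha(X))]=\infty$: integrate \eqref{eq:finite-marginal-set-infinite} over $y\in\mathbb{R}$ using Tonelli, $\E[\mathrm{Leb}(\hat C_\alpha(X))] = \int_{\mathbb{R}} \PP_{\bar P^0}(y \in \hat C_\alpha(X))\,dy \ge \int_{\mathbb{R}}(1-\alpha)\,dy = \infty$, since $1-\alpha>0$. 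This is in fact the quickest path and bypasses Step 2 entirely for the measure claim, though Step 2 is worth stating for interpretability.

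\textbf{Main obstacle.} The delicate point is Step 1: justifying that perturbing the conditional law $P_{Y\mid X}$ near a point $y_0$ does not let the prediction algorithm ``notice'' and adapt, i.e.\ controlling how $\hat C_\alpha$ — a measurable function of the observed data — responds to the perturbation, and legitimately passing to the limit $\varepsilon\to0$ while keeping the constraint $\bar P^0_\varepsilon\in\bar{\mathcal P}^0$ (in particular absolute continuity of the joint law of $(X,Y)$, which rules out literal point masses and forces the bump-density construction). Getting the total-variation continuity of the relevant miscoverage functional, and handling the ``for a.e.\ $y$'' quantifier uniformly via Fubini rather than pointwise, is where the real work lies; everything after that is Tonelli and Fatou.
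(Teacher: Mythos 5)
Your Step~1 has a genuine gap, and the issue is structural rather than a matter of tightening estimates. To keep $\bar P^0_\varepsilon$ inside $\bar{\mathcal P}^0$ you must preserve $Y\perp T\mid X$, which forces the bump near $y_0$ to appear in \emph{both} the $T=1$ and $T=0$ strata. Since the labeled training data are drawn from the $T=0$ stratum, the training algorithm sees the bump and $\hat C_\alpha$ changes distribution along with the test point. The coverage constraint
\[
\PP_{\bar P^0_\varepsilon}\bigl(Y\notin\hat C_\alpha(X)\mid T=1\bigr)
=(1-\varepsilon)A_\varepsilon+\varepsilon B_\varepsilon\le\alpha,
\]
with $A_\varepsilon$ the base-part and $B_\varepsilon$ the bump-part miscoverage, constrains only the \emph{mixture}. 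If $A_0<\alpha$ strictly (which is allowed), the constraint on $B_\varepsilon$ is vacuous as $\varepsilon\to0$; and even at exact calibration $A_0=\alpha$, controlling $B_\varepsilon$ requires $\alpha-A_\varepsilon=O(\varepsilon)$, whereas the training-data TV perturbation is only $O(n\varepsilon)$, which could push $A_\varepsilon$ below $\alpha$ by a constant multiple of $\varepsilon$ and leave $B_\varepsilon$ bounded only by something like $\alpha\cdot O(n)$. Total-variation continuity alone cannot isolate the bump.

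The paper's proof sidesteps this by \emph{leaving} $\bar{\mathcal P}^0$. It constructs from $\hat C_\alpha$ a randomized test $\eta$ of the hypothesis $Y\perp T\mid X$ (reject iff $Y_{n+1}\notin\hat C_\alpha(X_{n+1})$ when $T_{n+1}=1$; otherwise flip an $\alpha$-coin), shows $\eta$ has level $\alpha$, and then invokes the Shah--Peters no-free-lunch theorem for conditional independence testing to conclude $\eta$ has power at most $\alpha$ against \emph{any} alternative law $\bar Q$. Crucially, one may pick $\bar Q\notin\bar{\mathcal P}^0$ with $Y\mid X,T=1$ uniform on an arbitrary set $D_x$, while matching $\bar P^0$ on $(P_X,P_{T\mid X},P_{Y\mid X,T=0})$. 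Because the observed data $(X,T,(1-T)Y)$ depend only on those three components, $\hat C_\alpha$ has the \emph{same} law under $\bar P^0$ and $\bar Q$, so the power bound under $\bar Q$ transfers directly to a pointwise miscoverage bound under $\bar P^0$. This decoupling of the test-point conditional from the training-data conditional is exactly what your within-$\bar{\mathcal P}^0$ perturbation cannot achieve, and the Shah--Peters hardness result is the nontrivial ingredient that licenses it; there is no purely elementary TV-continuity substitute.

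On the positive side, your Step~3 (Tonelli: $\E[\mathrm{Leb}(\hat C_\alpha(X))]=\int_{\mathbb R}\PP(y\in\hat C_\alpha(X))\,dy\ge(1-\alpha)\cdot\infty$) is correct and in fact cleaner than the paper's endpoint argument, which only establishes that the convex hull of $\hat C_\alpha(X)$ is unbounded with probability at least $1-\alpha$ and does not on its own imply infinite Lebesgue measure. Your Step~2 Fatou argument is also fine, though, as you note yourself, it is bypassed by Step~3.
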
 
We defer the proof to the Section \ref{sec:impossibility-appendix} of the appendix. The proof is based on the lack of a non-trivial test for the problem of conditional independence hypothesis testing proved in~\cite{shah2020hardness}. The connection to conditional independence testing can be seen from the fact that any prediction set that is valid under the conditional independence of $T$ and $Y$ given $X$ can be rewritten as a valid test for the hypothesis that $T$ is conditionally independent of $Y$ given $X$; see the proof in Section~\ref{sec:impossibility-appendix} for more details.

Given the lack of finite-sample valid non-trivial prediction, we resort to finding an efficient asymptotically valid prediction set based on semi-parametric theory in the following sections.
\section{Methodology with split data}\label{sec:method-split}
In this section, we discuss our novel prediction set construction under the covariate shift setting using semiparametric theory. 

Recall our notation $Z = (X, T, (1 - T)R)$. Suppose that one is interested in the $q$th-quantile of a random variable $R|T=1$, denoted $\theta_0 := \inf\{r: F(r) \geq q\}$, where $F$ is the CDF of $R |T=1$.  
An estimator 
$\widehat{\theta}$ is said to be asymptotically linear 
if it satisfies
$$
\sqrt{N}\bigl(\widehat{\theta}-\theta_0 \bigr)=\frac{1}{\sqrt{N}} \sum_{i=1}^{N} \psi (Z_{i} )+o_{p}(1),\quad \E [\psi(Z)]=0, \E \bigl[\psi(Z)^{\top} \psi(Z) \bigr]<\infty,
$$
as $N\to\infty$.
Clearly, the asymptotic variance of $\widehat{\theta}$ is then $\E \bigl[\psi(Z) \psi(Z)^{\top} \bigr]/N$. The function $\psi(z)$ is referred to as the influence function, following terminology of \cite{hampel1974influence}. Furthermore, as the model is nonparametric in the sense that the observed data distribution is not restricted and all such distributions are regular, (see Chapter 2 of \cite{ bickel1993efficient} on regularity), any estimator satisfying the above expansion is said to attain the semiparametric efficiency bound, and $ \psi (\cdot)$ is said to be the efficient influence function of $\theta_{0}$ in the nonparametric model. For more on influence functions and semiparametric theory, see e.g. \cite{newey1990semiparametric}, Chapter 25 of \cite{van2000asymptotic} and~\cite{van2002semiparametric}.

We now state the efficient influence function for the $(1-\alpha)$-th quantile of $R|T = 1$ under the MAR assumption. For every $x\in\rchi$ and $r\in\mathbb{R}$, define
\begin{equation}\label{eq:true-functions}
\begin{split}
\pi^{\star}(x) ~&:=~ \mathbb{P}(T = 1|X = x)/\mathbb{P}(T = 0|X = x),\\
m^{\star}(r, x) ~&:=~ \mathbb{E}[\mathbbm{1}\{R \le r\}|X = x].
\end{split}
\end{equation}
The function $\pi^{\star}(\cdot)$ represents the true density ratio of the covariates among labeled and unlabeled data. The function $m^{\star}(\cdot, \cdot)$ represents the true conditional mean function that by assumption~\eqref{eq:conditional-equal-Residuals} is common for both labeled and unlabeled data.

{\color{black}We now state the efficient influence function for the quantile of interest under regularity conditions for the data distribution that will motivate our proposed method, where it should be noted that strictly speaking, the regularity conditions are actually not needed for the theoretical guarantees of our proposed methods. }
\begin{lem}\label{lem:if}
Suppose $\E [\mathbbm{1}\{T=0\} \pi^{\star2} (X)] = \E {\PP^2(T=1|X )}/{\PP(T=0|X)}$ is finite and that the density of the conditional distribution of $R|T=1$ at $r_\alpha$ is bounded away from zero. Then the efficient influence function of the $(1-\alpha)$-quantile of $R|T=1$ in the nonparametric model for $Z$ which allows the distribution of $Z$ to remain unrestricted under the condition that it is regular, is given up to a proportionality constant by
\begin{equation}\label{eq:if}
\begin{aligned}
\psi(z) = \mathrm{IF}(r_\alpha, x, r, t; \pi^\star, m^\star) &= \mathbbm{1}\{t = 0\}\pi^\star(x)\Big[\mathbbm{1}\{r \le r_\alpha\} - m^\star(r_\alpha, x)\Big] \\ &\quad+ \mathbbm{1}\{t = 1\}\Big[m^\star(r_\alpha, x) - (1 - \alpha)\Big]. 
\end{aligned}
\end{equation}
\end{lem}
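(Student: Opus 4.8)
The plan is to obtain $\psi$ as the unique gradient of the implicitly-defined quantile functional, exploiting the fact that the observed-data model is \emph{nonparametric}. The first step is to record that although the full-data law obeys the MAR restriction $R\perp T\mid X$, this imposes nothing on the law of the observed vector $Z=(X,T,(1-T)R)$: that law is parametrized freely by the marginal of $X$, the propensity $\PP(T=1\mid X=\cdot)$, and the conditional law of $R$ given $(X,T=0)$, exactly as in the unrestricted model. Hence the tangent space is all of $L_2^0(P_Z)$, every regular influence function of a pathwise-differentiable parameter is unique, and it automatically coincides with the efficient influence function. It therefore suffices to exhibit \emph{one} influence function, i.e.\ to verify $\frac{d}{dt}\theta_0(P_t)\big|_{t=0}=\E[\psi(Z)s(Z)]$ along an arbitrary smooth one-dimensional submodel $\{P_t\}$ with score $s$ and $P_0$ the truth, where $\theta_0=r_\alpha$.

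Next I would express the target as an implicit functional. Using MAR and the definition of $r_\alpha$, $\theta_0$ is characterized by $\Phi(\theta_0;P)=1-\alpha$, where $\Phi(\theta;P):=\PP_P(R\le\theta\mid T=1)=\E_P[\mathbbm{1}\{T=1\}m_P(\theta,X)]/\PP_P(T=1)$ and $m_P(\theta,x):=\E_P[\mathbbm{1}\{R\le\theta\}\mid X=x,T=0]$. Differentiating $t\mapsto\Phi(\theta_0(P_t);P_t)\equiv 1-\alpha$ and solving gives $\dot\theta_0=-\dot\Phi(\theta_0;P)\big/\partial_\theta\Phi(\theta;P)\big|_{\theta_0}$; the denominator is the conditional density $f_{R\mid T=1}(r_\alpha)$, nonzero by hypothesis (this is where that assumption enters, and it supplies the unspecified proportionality constant), and one also needs $\theta\mapsto\Phi(\theta;P)$ to be $C^1$ near $\theta_0$ for the implicit-function step. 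What remains is the efficient influence function of $P\mapsto\Phi(r_\alpha;P)$ with $r_\alpha$ held fixed — a ``mean of $\mathbbm{1}\{R\le r_\alpha\}$ on the $T=1$ stratum'' functional in which the outcome is observed only on $\{T=0\}$.

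For this I would decompose the submodel score along the factorization of $p(z)$ into an $X$-marginal, a $T\mid X$, and an $R\mid X$ piece, $s(z)=s_X(x)+s_{T\mid X}(t\mid x)+(1-t)s_{R\mid X}(r\mid x)$, and differentiate $\Phi(r_\alpha;P_t)=\E_{P_t}[\mathbbm{1}\{T=1\}m_{P_t}(r_\alpha,X)]/\PP_{P_t}(T=1)$ by the product/quotient rule: one term from perturbing the $(X,T)$-law with $m$ fixed, one from perturbing $m_{P_t}$, and one from the normalization $\PP_{P_t}(T=1)$. The only delicate computation is the middle term, $\E[\mathbbm{1}\{T=1\}\frac{d}{dt}m_{P_t}(r_\alpha,X)\big|_0]$ with $\frac{d}{dt}m_{P_t}(r_\alpha,x)\big|_0=\E[(\mathbbm{1}\{R\le r_\alpha\}-m^\star(r_\alpha,x))s_{R\mid X,T=0}(R\mid x)\mid X=x,T=0]$; the trick is to rewrite this conditional-on-$\{T=0,X\}$ expectation as an unconditional one against $s(Z)$ by inserting the weight $\mathbbm{1}\{T=0\}/\PP(T=0\mid X)$, after which the density ratio $\pi^\star(X)=\PP(T=1\mid X)/\PP(T=0\mid X)$ appears in front of the residual $\mathbbm{1}\{R\le r_\alpha\}-m^\star(r_\alpha,X)$, while the $X$-measurable ``centering'' part of $s_{R\mid X,T=0}$ drops out because that residual is mean-zero given $(X,T=0)$. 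This reweighting — the familiar ``recycled predictions'' step in the derivation of ATT-type influence functions — is the main obstacle; everything else is bookkeeping.

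Finally I would assemble the pieces: the $(X,T)$-perturbation contributes $\mathbbm{1}\{T=1\}m^\star(r_\alpha,X)$, the $m$-perturbation contributes $\mathbbm{1}\{T=0\}\pi^\star(X)[\mathbbm{1}\{R\le r_\alpha\}-m^\star(r_\alpha,X)]$, and the normalization contributes $-(1-\alpha)(\mathbbm{1}\{T=1\}-\PP(T=1))$; using $\E[m^\star(r_\alpha,X)\mid T=1]=1-\alpha$ to merge the $\mathbbm{1}\{T=1\}$ terms shows the influence function of $\Phi(r_\alpha;\cdot)$ equals $\psi(z)/\PP(T=1)$ with $\psi$ as in \eqref{eq:if}. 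Combining with the implicit-function step, the efficient influence function of $r_\alpha$ is $-\psi(z)\big/\big(\PP(T=1)\,f_{R\mid T=1}(r_\alpha)\big)$, i.e.\ $\psi$ up to a proportionality constant. I would close by verifying $\E[\psi(Z)]=0$ (each bracket is mean-zero, one conditionally on $X$ and one conditionally on $T=1$) and $\E[\psi(Z)^2]<\infty$ — the latter being precisely the stated moment condition $\E[\mathbbm{1}\{T=0\}\pi^{\star 2}(X)]<\infty$ together with boundedness of the indicators — which is what makes $\psi$ a bona fide square-integrable influence function.
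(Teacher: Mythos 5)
Your proposal is correct and mirrors the paper's own derivation: both characterize $r_\alpha$ implicitly via the cumulative distribution function restricted to $T=1$, use MAR to swap the inner conditional from $T=1$ to $T=0$, split the pathwise derivative into the $(X,T)$-perturbation and the $m$-perturbation, and deploy the ATT-style $\mathbbm{1}\{T=0\}/\PP(T=0\mid X)$ reweighting to produce the $\pi^\star$ factor, with the density at $r_\alpha$ supplying the proportionality constant. The minor organizational difference (pulling out the normalization $\PP_{P_t}(T=1)$ as a separate piece versus folding it into the centered $u(R;\theta)$) and the explicit up-front appeal to the tangent space being all of $L_2^0(P_Z)$ are just bookkeeping and framing, not a genuinely different route.
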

\begin{proof}
 \cite{hahn1998role}
gave a derivation of the influence function for the average treatment effect among the treated (ATT). We adapt their proof to that of the conditional quantile and give the complete derivation of \eqref{eq:if} along with a basic introduction to semiparametric theory needed to derive the result in Section \ref{sec:if} of the supplementary.
\end{proof}


For any two functions $\pi(\cdot)$ and $m(\cdot, \cdot)$, let 
\begin{align}
 \mathrm{IF}(\theta, x, r, t; \pi, m):= \mathbbm{1}\{t = 0\}\pi(x)\Big[\mathbbm{1}\{r \le \theta\} - m(\theta, x)\Big] + \mathbbm{1}\{t = 1\}\Big[m(\theta, x) - (1 - \alpha)\Big],
\end{align}
ignoring the scaling factor. Note that $\mathrm{IF}(\theta, x, r, t; \pi, m)$ is only a function of $(x, t, (1-t)r)$ because the term that depends on $r$ has a multiplicative factor of $\mathbbm{1}\{t = 0\}$. Let $P [f]$ denote integration conditional on the training sample. For example, for any $\theta, \pi, m$ that are potentially data-dependent, 
\[
P[\mathrm{IF}(\theta, x, r, t; \pi, m)] = \int \mathrm{IF}(\theta, x, r, t; \pi, m)dP_{R|X = x}(r|x)dP_{T|X = x}(t|x)dP_X(x).
\]
This is a random variable if $\theta$ or $\pi$ or $m$ are random.

First, we draw a key connection between the desired coverage and the aforementioned influence function.
\begin{lem}\label{lem:connection-if-coverage}
Let $\pi:\rchi\to\mathbb{R}_+$ and $m:\mathbb{R}\times\rchi\to[0,1]$ be any two functions. Then for every (potentially) data-dependent $\theta\in\mathbb{R}$, the representation 
\begin{align}
\mathbb{P}_{(X, Y)\sim Q_X\otimes P_{Y|X}}\bigl(Y \in \widehat{C}(\theta ; X) \mid\theta\bigr) ~&=~ \mathbb{P}(Y \in \widehat{C}(\theta; X)|\theta, T = 1)\\
~&=~ 1-\alpha +\frac{P[\mathrm{IF}(\theta, X,R,T;\pi,m)]}{\PP(T=1)}, \label{eq:connection-if-coverage}
\end{align}
holds true, whenever either of the following holds true:
\begin{enumerate}
    \item $\pi(x) = \pi^{\star}(x)$ for all $x$; or
    \item $m(\gamma, x) = m^{\star}(\gamma, x)$ for all $\gamma$ and $x$.
\end{enumerate}
\end{lem}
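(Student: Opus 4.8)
The plan is to establish~\eqref{eq:connection-if-coverage} by a direct computation of $P[\mathrm{IF}(\theta, X, R, T; \pi, m)]$. I first treat $\theta$, $\pi$, $m$ as fixed deterministic objects; since $P[\cdot]$ integrates against the true population law of $(X,T,R)$ with these three arguments held fixed, and since the identity I derive holds for \emph{every} deterministic choice of $(\theta,\pi,m)$, the version with data-dependent $(\theta,\pi,m)$ follows by plugging them in (this is precisely the conditioning-on-the-training-sample interpretation of $P[\cdot]$ and of the ``$\mid\theta$'' on the left-hand side). Expanding the definition of $P[\cdot]$ by integrating $R$ against $P_{R\mid X}$ and $T$ against $P_{T\mid X}$, and noting that the only $r$-dependent term carries the factor $\mathbbm{1}\{t=0\}$, I get, using $\mathbb{E}[\mathbbm{1}\{R\le\theta\}\mid X]=m^\star(\theta,X)$ from~\eqref{eq:true-functions},
\[
P[\mathrm{IF}(\theta,X,R,T;\pi,m)] = \mathbb{E}_X\!\Big[\mathbb{P}(T=0\mid X)\,\pi(X)\big(m^\star(\theta,X)-m(\theta,X)\big)+\mathbb{P}(T=1\mid X)\big(m(\theta,X)-(1-\alpha)\big)\Big].
\]

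Next I handle the two cases. In case 1, $\pi=\pi^\star$ gives $\mathbb{P}(T=0\mid X)\pi^\star(X)=\mathbb{P}(T=1\mid X)$, so the two occurrences of $m(\theta,X)$ cancel and the bracket collapses to $\mathbb{P}(T=1\mid X)\big(m^\star(\theta,X)-(1-\alpha)\big)$. In case 2, $m=m^\star$ annihilates the first summand directly, leaving the same expression. Thus in either case
\[
P[\mathrm{IF}(\theta,X,R,T;\pi,m)] = \mathbb{E}_X\!\Big[\mathbb{P}(T=1\mid X)\big(m^\star(\theta,X)-(1-\alpha)\big)\Big].
\]
I then identify this quantity: writing $\mathbb{E}_X[\mathbb{P}(T=1\mid X)g(X)]=\mathbb{E}[\mathbbm{1}\{T=1\}g(X)]$ and using MAR (condition~\eqref{eq:conditional-dist-same}, i.e.\ $R\perp T\mid X$) to write $m^\star(\theta,X)=\mathbb{E}[\mathbbm{1}\{R\le\theta\}\mid X,T=1]$, the tower property gives $\mathbb{E}[\mathbbm{1}\{T=1\}m^\star(\theta,X)]=\mathbb{P}(R\le\theta,T=1)$. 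Hence $P[\mathrm{IF}]=\mathbb{P}(R\le\theta,T=1)-(1-\alpha)\mathbb{P}(T=1)$, and dividing by $\mathbb{P}(T=1)$ and adding $1-\alpha$ yields $\mathbb{P}(R\le\theta\mid T=1)$.

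To close, I invoke the definitions: since $\widehat{C}(\theta;x)=\{y:R(x,y)\le\theta\}$ we have $\{Y\in\widehat{C}(\theta;X)\}=\{R\le\theta\}$, and by~\eqref{eq:covariate-shift}--\eqref{eq:conditional-dist-same} the conditional law of $(X,Y)$ given $T=1$ is exactly $Q_X\otimes P_{Y\mid X}$; hence both forms of the left-hand side of~\eqref{eq:connection-if-coverage} equal $\mathbb{P}(R\le\theta\mid T=1)$, which completes the argument. I do not expect a genuine obstacle: this is the classical double-robustness identity for the ATT-type influence function specialized to the indicator outcome $\mathbbm{1}\{R\le\theta\}$. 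The only points needing care are the transfer of a deterministic identity to data-dependent $(\theta,\pi,m)$ via conditioning (exactly what $P[\cdot]$ encodes), applying MAR at precisely the step converting $m^\star(\theta,X)=\mathbb{E}[\mathbbm{1}\{R\le\theta\}\mid X]$ into the $T=1$ conditional CDF, and recognizing that the cancellation mechanism differs across the two cases — propensity reweighting when $\pi=\pi^\star$, exact outcome fit when $m=m^\star$ — even though both reduce to the same intermediate expression.
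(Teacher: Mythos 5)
Your proposal is correct and follows essentially the same strategy as the paper's proof: expand $P[\mathrm{IF}]$ by integrating out $R$ and $T$ conditionally on $X$, use either $\pi=\pi^\star$ (propensity cancellation) or $m=m^\star$ (outcome cancellation) to collapse both cases to $\E_X[\PP(T=1\mid X)(m^\star(\theta,X)-(1-\alpha))]$, and identify this with $\PP(R\le\theta,T=1)-(1-\alpha)\PP(T=1)$. The only minor departure is that you pass from $\E_X[\PP(T=1\mid X)\,m^\star(\theta,X)]$ to $\PP(R\le\theta,T=1)$ directly via the tower property and MAR, whereas the paper detours through a Bayes-rule rewriting into a $T=0$ conditional expectation weighted by the covariate density ratio (their equations (E.17)--(E.25)) and matches it to a separately-derived density-ratio representation of the coverage; your shortcut is sound and a bit more economical.
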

\begin{proof}
See Section~\ref{sec:proof-of-connection-if-coverage} for a proof.
\end{proof}
Note that even if data-dependent, an estimate of $\theta$ remains independent of a future observation $(X, Y)$.
Lemma~\ref{lem:connection-if-coverage} has a key implication that $\mathrm{IF}(\cdots)$ is a doubly robust influence function. Because
\[
\mathbb{P}\bigl(Y \in \widehat{C}(\theta ; X) \mid\theta, T = 1\bigr) = \mathbb{P}\bigl(R(X, Y) \le \theta \mid \theta, T = 1 \bigr),
\]
taking $\theta = r_{\alpha}$, the quantile of $R(X, Y)$ conditional on $T = 1$,\footnote{We assume here that $\mathbb{P}_{(X,Y)\sim Q_X\otimes P_{Y|X}}(R(X, Y) \le r_{\alpha}) = 1 - \alpha$, which is mild as one can add small Gaussian noise to $R(X, Y)$.} Lemma~\ref{lem:connection-if-coverage} implies that
\begin{equation}\label{eq:double-robustness}
P[\mathrm{IF}(r_{\alpha}, X, R, T; \pi, m)] = 0,\quad\mbox{if either }\quad \pi \equiv \pi^{\star}\mbox{ or }m \equiv m^{\star}.
\end{equation}
Because $r_{\alpha}$ is a constant, $P[\mathrm{IF}(r_{\alpha}, X, R, T; \pi, m)] = \mathbb{E}[\mathrm{IF}(r_{\alpha}, X, R, T; \pi, m)]$. 

\begin{remark}
Note that our results do not actually require  uniqueness of a solution to $\E \big[ \mathrm{IF}(\theta, x, r, t; \pi^\star, m^\star) \bigl]=0$. This could arise for example in settings when $R$ is discrete. In principle, our result would continue to hold for any element $\theta$ of a solution set. 
\end{remark}

Property~\eqref{eq:double-robustness} is a double robustness property in that the expectation is zero, as long as one of $\pi$ and $m$ is the true function. This property implies that when, as would generally be the case in practice, $\pi^\star$ and $m^\star$ are estimated, the resulting bias is of the following product form, where we introduce the notation $\|f(\cdot)\|_2$ as the $L_2$-norm of $f(\cdot)$, where $\|f(\cdot)\|_2:= [ \int f^2(x)M(\mathrm{d}x) ]^{1/2} $, and $M(\cdot)$ is the probability measure of $X$ that for any Borel set $B$, $M(B)$ is given by
\begin{align*}
M(B) &= \mathbb{P}(X \in B)\\ 
&= \mathbb{P}(X \in B|T = 1)\mathbb{P}(T = 1) + \mathbb{P}(X \in B|T = 0)\mathbb{P}(T = 0)\\ 
&= Q_X(B)\mathbb{P}(T = 1) + P_X(B)\mathbb{P}(T = 0).
\end{align*}

\begin{thm}\label{thm:product-bias}
For any functions $\widehat\pi(\cdot)$ and $\widehat{m}(\cdot,\cdot) $,  it holds that
\begin{align}\label{eq:product bias}
    &\sup_{\gamma \in \mathbb{R}} \Bigl| P \bigl[ \mathrm{IF}(\gamma, X, R, T; \widehat{\pi},\widehat{m}) - \mathrm{IF}(\gamma, X, R, T; {\pi}^\star,m^\star ) \bigr] 
    \Bigr|\\ 
    &\qquad\quad\leq \| \widehat\pi - \pi^\star \|_2 \sup_{\gamma} \| \widehat{m}(\gamma,\cdot) - m^\star(\gamma,\cdot)\|_2.
\end{align}
\end{thm}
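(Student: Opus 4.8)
The plan is to expand the difference of the two influence functions pointwise and observe that all terms not involving both $\widehat\pi - \pi^\star$ and $\widehat m - m^\star$ vanish under the integration operator $P[\cdot]$. Fix $\gamma \in \mathbb{R}$. Writing out $\mathrm{IF}(\gamma, x, r, t; \pi, m)$ from its definition, the term $\mathbbm{1}\{t = 1\}[m(\gamma, x) - (1-\alpha)]$ contributes $\mathbbm{1}\{t=1\}[\widehat m(\gamma, x) - m^\star(\gamma, x)]$ to the difference, while the $\mathbbm{1}\{t=0\}$ part contributes $\mathbbm{1}\{t=0\}\big(\widehat\pi(x)[\mathbbm{1}\{r \le \gamma\} - \widehat m(\gamma,x)] - \pi^\star(x)[\mathbbm{1}\{r\le\gamma\} - m^\star(\gamma,x)]\big)$. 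The first step is to integrate over $r$ given $x$ inside the $t=0$ block: since $P(R \le \gamma \mid X = x, T = 0) = P(R \le \gamma \mid X = x) = m^\star(\gamma, x)$ by~\eqref{eq:conditional-dist-same}, the inner conditional expectation of $\mathbbm{1}\{r \le \gamma\}$ is exactly $m^\star(\gamma, x)$.

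\textbf{Key algebraic cancellation.} After taking the conditional expectation over $r$, the $t=0$ block becomes $\mathbbm{1}\{t=0\}\big(\widehat\pi(x)[m^\star(\gamma,x) - \widehat m(\gamma,x)] - \pi^\star(x)\cdot 0\big) = -\mathbbm{1}\{t=0\}\widehat\pi(x)[\widehat m(\gamma,x) - m^\star(\gamma,x)]$. Next integrate over $t$ given $x$: using $P(T = 0 \mid X = x) \pi^\star(x) = P(T=1\mid X=x)$ (by the definition of $\pi^\star$ in~\eqref{eq:true-functions}), the $t=0$ contribution integrates to $-P(T=0\mid X=x)\widehat\pi(x)[\widehat m(\gamma,x) - m^\star(\gamma,x)]$ and the $t=1$ contribution integrates to $P(T=1\mid X=x)[\widehat m(\gamma,x) - m^\star(\gamma,x)] = P(T=0\mid X=x)\pi^\star(x)[\widehat m(\gamma,x) - m^\star(\gamma,x)]$. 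Summing, the $P[\cdot]$-integrand reduces to
\[
P(T = 0 \mid X = x)\,[\pi^\star(x) - \widehat\pi(x)]\,[\widehat m(\gamma, x) - m^\star(\gamma, x)],
\]
which is manifestly a product of the two error functions — this is the crux and exhibits the double robustness structure promised by~\eqref{eq:double-robustness}.

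\textbf{From the product form to the norm bound.} It remains to integrate this over $x \sim P_X$ (which is what the outermost layer of $P[\cdot]$ does, since the $P_X$ in the definition of $P[\cdot]$ here should be read as the marginal law of $X$ in the relevant population) and bound the result. I would absorb the factor $P(T=0 \mid X = x) \in [0,1]$ into the integrand and note that the measure $M(\cdot)$ dominates the relevant integration measure up to the constant $P(T=0)$, or more directly just bound $P(T=0\mid X=x) \le 1$ and apply Cauchy--Schwarz: $\big|\int P(T=0\mid x)[\pi^\star - \widehat\pi][\widehat m(\gamma,\cdot) - m^\star(\gamma,\cdot)] \, dM\big| \le \|\widehat\pi - \pi^\star\|_2 \, \|\widehat m(\gamma,\cdot) - m^\star(\gamma,\cdot)\|_2$ with the $L_2(M)$ norm. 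Finally take the supremum over $\gamma$ on both sides; since the right-hand factor $\|\widehat m(\gamma,\cdot) - m^\star(\gamma,\cdot)\|_2$ is the only $\gamma$-dependent quantity, $\sup_\gamma$ of the bound equals $\|\widehat\pi - \pi^\star\|_2 \sup_\gamma \|\widehat m(\gamma,\cdot) - m^\star(\gamma,\cdot)\|_2$, giving~\eqref{eq:product bias}.

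\textbf{Main obstacle.} The calculation itself is routine; the only place requiring care is bookkeeping the nested conditional expectations in $P[\cdot]$ and being explicit that $\widehat\pi, \widehat m$ are held fixed (measurable w.r.t. the training split) when the inner integrals over $r$ and $t$ are taken, so that~\eqref{eq:conditional-dist-same} and the identity $P(T=0\mid x)\pi^\star(x) = P(T=1\mid x)$ can be applied verbatim. A secondary subtlety is matching the normalization of $M$ versus $P_X$ in the definition of $\|\cdot\|_2$; since $M \ge P(T=0)\,P_X$ and $P(T=0\mid x)\,dP_X \le dM$ up to constants, bounding $P(T=0\mid x)\le 1$ and working throughout with the $M$-norm as defined in the paper avoids any stray constant and yields the clean inequality as stated.
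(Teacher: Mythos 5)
Your proof is correct and follows essentially the same route as the paper's: integrate out $r$ and $t$ conditionally on $x$, observe that the centered residual against $m^\star$ and the identity $\mathbb{P}(T=0\mid X)\pi^\star(X)=\mathbb{P}(T=1\mid X)$ collapse everything to the single product term $\mathbb{P}(T=0\mid X)[\pi^\star(X)-\widehat\pi(X)][\widehat m(\gamma,X)-m^\star(\gamma,X)]$, then bound $\mathbb{P}(T=0\mid X)\le 1$ and apply Cauchy--Schwarz in $L_2(M)$. The only cosmetic difference is that you subtract the two $\mathrm{IF}$'s first and simplify, whereas the paper simplifies $P[\mathrm{IF}(\cdot;\widehat\pi,\widehat m)]$ by an add-and-subtract decomposition of $\widehat\pi$ and then compares to $P[\mathrm{IF}(\cdot;\pi^\star,m^\star)]$; the resulting product term and the final Cauchy--Schwarz step are identical.
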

\begin{proof}
See Section \ref{appsec:proof-product-bias} for a proof.
\end{proof}
Theorem~\ref{thm:product-bias} implies that $P[\mathrm{IF}(\gamma, X, R, T; \widehat{\pi}, \widehat{m})]$ converges to $P[\mathrm{IF}(\gamma, X, R, T; \pi^{\star}, m^{\star})]$ as long as one of $\pi^{\star}$ and $m^{\star}$ is estimated consistently. 

Lemma~\ref{lem:connection-if-coverage} is the main building block for our methodology. In order to ensure approximately correct coverage of $1 - \alpha$, we need to find $\theta$ such that $P[\mathrm{IF}(\theta, X, R, T; \pi, m)]$ is approximately zero, with either $\pi\equiv \pi^{\star}$ or $m \equiv m^{\star}$. In practice where we do not have access to either $\pi^{\star}$ or $m^{\star}$ and even if we know either of them, one cannot compute $P[\mathrm{IF}(\theta, X, R, T; \pi, m)]$ without access to the true distribution of $(X, (1-T)R, T)$. Our methodology, hence, is as follows. We construct estimators $\widehat{\pi}(\cdot)$ and $\widehat{m}(\cdot, \cdot)$ such that $\|\widehat{\pi} - \pi\|_2 = o_p(1)$ and $\|\widehat{m} - m\|_2 = o_p(1)$ for some $\pi(\cdot)$ and $m(\cdot, \cdot)$, and either $\pi \equiv \pi^{\star}$ or $m \equiv m^{\star}$. Then we find the smallest $\hat{\theta}$ such that
\begin{equation}\label{eq:definition-theta}
\mathbb{P}_{N} \bigl[\mathrm{IF}(\hat{\theta}, X, R, T; \widehat{\pi}, \widehat{m}) \bigr] := \frac{1}{N}\sum_{1 \le i \le N } \mathrm{IF}(\hat{\theta}, X_i, R_i, T_i; \widehat{\pi}, \widehat{m}) \ge 0.
\end{equation}
We can prove under certain regularity conditions on $\widehat{\pi}$ and $\widehat{m}$ that 
\begin{equation}\label{eq:empirical-to-expectation-IF}
\mathbb{P}_{N} \bigl[\mathrm{IF}(\hat{\theta}, X, R, T; \widehat{\pi}, \widehat{m}) \bigr] - P \bigl[\mathrm{IF}(\hat{\theta}, X, R, T; \pi, m) \bigr] = o_p(1),
\end{equation} 
even for a data-dependent $\hat{\theta}$. Then, Lemma~\ref{lem:connection-if-coverage} implies that $\widehat{\theta}$ satisfying~\eqref{eq:definition-theta} also satisfies
\begin{equation}\label{eq:expectation-desired}
\mathbb{P}_{(X,Y)\sim Q_X\otimes P_{Y|X}}\bigl( Y \in\widehat{C}(\hat{\theta} ; X) \mid \hat{\theta} \bigr) \ge (1 - \alpha)
+ o_p(1).
\end{equation} 
This yields the desired coverage guarantee~\eqref{eq:desired-coverage}. In finding $\hat{\theta}$ and proving~\eqref{eq:empirical-to-expectation-IF}, one can avoid restrictive regularity conditions (such as smoothness or Donsker class) on $\widehat{\pi}, \widehat{m}$ by splitting the data into two parts, using the first part to determine  $\widehat{\pi}, \widehat{m}$ and using the second part to compute $\mathbb{P}_{\mathcal{I}_2}[\mathrm{IF}(\cdots)]$. The detailed split sample procedure is succinctly described in Algorithm~\ref{alg:influence-conformal-split}. 
\begin{algorithm}[h]
    \SetAlgoLined
    \SetEndCharOfAlgoLine{}
   \KwIn{Training data $\mathcal{D}^{\text{tr}} = \mathcal{D}^{\text{tr}}_P \cup \mathcal{D}^{\text{tr}}_Q$; Coverage probability $1 - \alpha$, a training method $\mathcal{A}$ and estimators $\widehat{\pi}, \widehat{m}$, the point for prediction $x$.}
    \KwOut{A valid prediction set $\widehat{C}_{\alpha}(x)$.}
    Split training data $\mathcal{D}^{\text{tr}}$ randomly into $\mathcal{D}_1$ and $\mathcal{D}_2$, where $\mathcal{D}_1 = \{Z_i \in \mathcal{D}^{\text{tr}}, i \in \mathcal{I}_1\}$ and $\mathcal{D}_2 = \{Z_i \in \mathcal{D}^{\text{tr}}, i \in \mathcal{I}_2\}$.\;
    Fit the training method $\mathcal{A}$ on $\mathcal{D}_1$ and using fitted method $\mathcal{A}$, construct an increasing (nested) sequence of sets $\{\mathcal{F}_t\}_{t\in\mathcal{T}}$. Here $\mathcal{T}$ is a subset of $\mathbb{R}$. The nested sets $\{\mathcal{F}_t\}_{t\in\mathcal{T}}$ can depend arbitrarily on $\mathcal{D}_1$.\;
  For each $i\in\mathcal{I}_2$ that satisfies $T_i=0$, define the conformal score
    \[
    r_i ~=~ r(Z_i) := \inf\{t\in\mathcal{T}:\,Z_i\in\mathcal{F}_t\}.
    \]\;
    \vspace{-1em}
    Fit estimators $\widehat{\pi},\widehat{m}$ on $\mathcal{D}_1$ and find the smallest $\hat{\theta} = \widehat{r}_{\alpha}$ such that $\mathbb{P}_{\mathcal{I}_2}[\mathrm{IF}(\hat{\theta}, X, R, T; \widehat{\pi}, \widehat{m})] \ge 0$, where
    \begin{align*}
    \mathbb{P}_{\mathcal{I}_2}[\mathrm{IF}(\hat{\theta}, X, R, T; \widehat{\pi}, \widehat{m})] ~&=~ \frac{1}{|\mathcal{I}_2|}\sum_{i \in \mathcal{I}_2} \mathbbm{1}\{t_i=0\} \widehat{\pi}(x_i) \Big[ \mathbbm{1}\{r_i \leq \hat{\theta} \} - \widehat{m}(\hat{\theta},x_i) \Bigr]\\
    ~&\quad+~ \frac{1}{|\mathcal{I}_2|} \sum_{i \in \mathcal{I}_2} \mathbbm{1}\{t_i=1\} \Bigl[\widehat{m}(\hat{\theta},x_i) - (1-\alpha) \Bigr].
    \end{align*}
  
    \Return the prediction set $\widehat{C}_{\alpha}(x):= \bigl\{ y: R(x,y) \leq \widehat{r}_{\alpha} \bigr\}$.
    \caption{Split doubly robust prediction}
    \label{alg:influence-conformal-split}
\end{algorithm}

In step 2 of Algorithm~\ref{alg:influence-conformal-split}, the training method $\mathcal{A}$ can be a regression estimator of $Y$ on $X$ leading to $\widehat{\mu}$ and the nested sets, for example, could be $\mathcal{F}_t = \{(x, y):\,|y - \widehat{\mu}(x)| \le t, t \ge 0 \}$. This corresponds to using $R = R(x, y) = |y - \widehat{\mu}(x)|$. Alternatively, one can also consider a training method that leads to conditional quantile estimator $\widehat{q}_{\alpha/2}(\cdot)$ and $\widehat{q}_{1-\alpha/2}(\cdot)$. The nested sets, for example, could be $\mathcal{F}_t = \{(x, y):\, y\in[\widehat{q}_{\alpha/2}(x) - t, \widehat{q}_{1-\alpha/2}(x) + t]\}.$ This corresponds to the map $R = R(x, y) = \max\{\widehat{q}_{\alpha/2}(x) - y, y - \widehat{q}_{1-\alpha/2}(x)\}$ which is the conformal score of the conformalized quantile regression (CQR) method of~\cite{romano2019conformalized}. For discrete/categorial response $Y$, conditional probability of each class $\PP(Y = j|X = x)$ could be used to construct the nested sets and the map $R$, see for example, Section 4 of \cite{kuchibhotla2021nested}.

Now we list some assumptions that will be used in the following theorems.

\begin{enumerate}[label=\bf(A\arabic*)]
\setcounter{enumi}{0}
\item \label{assump:DGP}$(X_i, T_i, (1 - T_i)R_i), i\in\mathcal{I}_2$ are independent and identically distributed random vectors satisfying condition~\eqref{eq:conditional-dist-same}.
\item \label{assump:bounded} The functions $(\theta, x)\mapsto \widehat{m}(\theta, x)$ and $x\mapsto \widehat{\pi}(x)$ are bounded, i.e., there exist $m_0$ and $\pi_0$ such that for all $\theta \in \mathbb{R}$ and $x \in \mathbb{R}^d$, $| \widehat{m}(\theta, x) | \leq m_0$ and $| \widehat{\pi}(x) | \leq \pi_0$. 
\item \label{assump:monotone} The estimator $\widehat{m}(\theta, x)$ is a nondecreasing function of $\theta$. 
\end{enumerate}

Assumptions \ref{assump:DGP} and \ref{assump:bounded} are both standard conditions where we note that $m^\star(\cdot,\cdot)$ is a conditional CDF contained in the unit interval [0,1].
For assumption \ref{assump:monotone}, because $m^\star(\theta, x)$ is a conditional CDF which must be monotonically nondecreasing in $\theta$, any given estimator $\widetilde{m}$ can be improved upon by transforming it into a monotone estimator $\widehat{m}^*$ such that $\|\widehat{m}^* - m \| \leq \| \widetilde{m} - m\|$, see e.g. the first two properties of Proposition 2 of \cite{chernozhukov2009improving}. We state their result in Proposition \ref{prop:chernozhukov2009improving} in \ref{sec:supporting} of the supplementary for completeness. Given this proposition, it is natural that we restrict the estimator $\widehat{m}(\cdot,\cdot)$ to the class of  functions that are non-decreasing in their first argument. Hence we impose assumption \ref{assump:monotone}.

Under assumptions~\ref{assump:DGP}--\ref{assump:monotone}, we now provide a coverage guarantee for the prediction set $\widehat{C}_{\alpha}$ returned by Algorithm~\ref{alg:influence-conformal-split}. Observe that following Lemma~\ref{lem:connection-if-coverage} and the discussion surrounding~\eqref{eq:definition-theta} and~\eqref{eq:empirical-to-expectation-IF}, we obtain
\begin{equation}\label{eq:coverage-guarantee-split-IF}
    \begin{split}
    \mathbb{P}\Bigl( Y \in\widehat{C}_{\alpha} (X) \mid \mathcal{D}^{\mathrm{tr}}, T = 1\Bigr) - (1 - \alpha)
    ~&=~ \frac{\mathbb{P}_{\mathcal{I}_2}[\mathrm{IF}(\widehat{r}_{\alpha}, X, R, T; \widehat{\pi}, \widehat{m})]}{\mathbb{P}(T = 1)}\\
    ~&\quad+ \frac{ 
    P[\mathrm{IF}(\widehat{r}_{\alpha}, X, R, T; \widehat{\pi}, \widehat{m})] - \mathbb{P}_{\mathcal{I}_2}[\mathrm{IF}(\widehat{r}_{\alpha}, X, R, T; \widehat{\pi}, \widehat{m})] 
    }{\mathbb{P}(T = 1)}\\
    ~& \quad+ \frac{ 
    P[\mathrm{IF}(\widehat{r}_{\alpha}, X, R, T; \pi^{\star}, m^{\star})] - P[\mathrm{IF}(\widehat{r}_{\alpha}, X, R, T; \widehat{\pi}, \widehat{m})] 
    }{\mathbb{P}(T = 1)}\\
    ~&\ge~ 0 + \mathbf{I} + \mathbf{II}.
    \end{split}
\end{equation}
Here we use the fact that $\widehat{r}_{\alpha}$ satisfies $\mathbb{P}_{\mathcal{I}_2}[\mathrm{IF}(\widehat{r}_{\alpha}, X, R, T; \widehat{\pi}, \widehat{m})] \ge 0$. The term $\mathbf{II}$ in~\eqref{eq:coverage-guarantee-split-IF} can be bounded in absolute value using Theorem~\ref{thm:product-bias}.
We now provide a bound on $\mathbf{I}$ in Theorem~\ref{thm:convergence-if-function} below under assumptions~\ref{assump:DGP}--\ref{assump:monotone}. Theorem~\ref{thm:convergence-if-function} actually proves the tail and expectation bound for
\[
\sup_{\theta\in\mathbb{R}}|\mathbb{P}_{\mathcal{I}_2}[\mathrm{IF}(\theta, X, R, T; \widehat{\pi}, \widehat{m})] - P[\mathrm{IF}(\theta, X, R, T; \widehat{\pi}, \widehat{m})]|.
\]

\begin{thm}\label{thm:convergence-if-function}
Under assumption \ref{assump:DGP}, for any estimators $\widehat{\pi}, \widehat{m}$ satisfying assumptions \ref{assump:bounded} and \ref{assump:monotone}, there exists a universal constant $\mathfrak{C}$ such that for any $\delta>0$,
\begin{align}
    \mathbb{P}\left(|\mathbf{I}| \le \frac{\mathfrak{C}}{\mathbb{P}(T = 1)}\sqrt{\frac{ (m_0+\pi_0 +1- \alpha)^2 \log \bigl( {1}/{\delta}  \bigr) + (m_0+\pi_0)^2 }{ |\mathcal{I}_2|}}\;\bigg|\mathcal{D}_1\right) ~\ge~ 1 - \delta.
\end{align}
Moreover, there exists a universal constant $\mathfrak{C}'$ such that
\[
\mathbb{E}\left[|\mathbf{I}|\big|\mathcal{D}_1\right] ~\le~ \frac{\mathfrak{C}'}{\mathbb{P}(T = 1)}\sqrt{\frac{(m_0 + \pi_0 +1 - \alpha)^2   +(m_0+\pi_0)^2 }{|\mathcal{I}_2|}} \leq \frac{\mathfrak{C}'}{\mathbb{P}(T = 1)}\frac{m_0 + \pi_0   +1}{\sqrt{|\mathcal{I}_2|}}.
\]
\end{thm}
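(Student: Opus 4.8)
The plan is to identify $\mathbf{I}$ as the fluctuation of an empirical process and bound it, conditionally on $\mathcal{D}_1$, by a uniform-over-$\theta$ concentration inequality for a function class whose complexity does \emph{not} depend on $m_0$, $\pi_0$, or the arbitrary machine-learned nuisance estimators. Since $\widehat{r}_\alpha$ is itself a data-dependent functional of $\mathcal{D}_2$, I would not plug it in directly but rather use
\[
|\mathbf{I}| \;\le\; \frac{1}{\mathbb{P}(T=1)}\,\sup_{\theta\in\mathbb{R}}\bigl|(\mathbb{P}_{\mathcal{I}_2}-P)[\mathrm{IF}(\theta,X,R,T;\widehat{\pi},\widehat{m})]\bigr|,
\]
and, correspondingly, bound $\mathbb{E}[|\mathbf{I}|\mid\mathcal{D}_1]$ by $\mathbb{P}(T=1)^{-1}\mathbb{E}[\sup_\theta|\cdots|\mid\mathcal{D}_1]$. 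Conditionally on $\mathcal{D}_1$, the maps $\widehat{\pi}$, $\widehat{m}$ and the score map $r(\cdot)$ are deterministic and, by \ref{assump:DGP}, the vectors $Z_i=(X_i,T_i,(1-T_i)R_i)$, $i\in\mathcal{I}_2$, are i.i.d.; note also that $\mathrm{IF}(\theta,\cdot;\widehat{\pi},\widehat{m})$ involves $R_i$ only through the factor $\mathbbm{1}\{T_i=0\}$ and so is a genuine function of the observed $Z_i$. It therefore suffices to control the supremum of the empirical process indexed by the class $\mathcal{F}:=\{z\mapsto\mathrm{IF}(\theta,z;\widehat{\pi},\widehat{m}):\theta\in\mathbb{R}\}$, in expectation and in the tail, uniformly over admissible $\widehat{\pi},\widehat{m}$.

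The crux — and the only genuinely non-routine step — is to split $\mathrm{IF}$ into pieces that are each \emph{pointwise monotone in $\theta$} (or constant in $\theta$). Write $\mathrm{IF}(\theta,z;\widehat{\pi},\widehat{m})=g_1(\theta,z)-g_2(\theta,z)-g_3(z)$, where $g_1(\theta,z):=\mathbbm{1}\{t=0\}\widehat{\pi}(x)\mathbbm{1}\{r\le\theta\}+\mathbbm{1}\{t=1\}\widehat{m}(\theta,x)$, $g_2(\theta,z):=\mathbbm{1}\{t=0\}\widehat{\pi}(x)\widehat{m}(\theta,x)$, and $g_3(z):=(1-\alpha)\mathbbm{1}\{t=1\}$. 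Because $\widehat{\pi}\ge 0$ (it estimates a density ratio) and $\theta\mapsto\widehat{m}(\theta,x)$ is nondecreasing by \ref{assump:monotone}, for each fixed $z$ both $\theta\mapsto g_1(\theta,z)$ and $\theta\mapsto g_2(\theta,z)$ are nondecreasing, while $g_3$ does not involve $\theta$. A one-parameter family that is pointwise monotone is a VC-subgraph class of \emph{universal} index (at most $2$): for points $(z_i,s_i)$ the set $\{\theta:s_i<g_j(\theta,z_i)\}$ is a half-line, so the realizable traces are nested and no two-point set is shattered — crucially, this holds no matter how complicated $\widehat{m}(\cdot,x)$ is as a function of $x$, since we index only by the scalar $\theta$. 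The one point that needs care is the data-dependent multiplicative weight $\widehat{\pi}(x)$: multiplying a pointwise-monotone family by a \emph{fixed} (given $\mathcal{D}_1$) nonnegative bounded function preserves the VC-subgraph property and its index and merely rescales the envelope. By \ref{assump:bounded} (and, harmlessly, taking $\widehat{m}$ valued in $[0,1]$, as $m^\star$ is a conditional distribution function) the envelopes are $\lesssim m_0+\pi_0$ for $g_1$ and $g_2$ and $\le 1-\alpha$ for $g_3$; so the $\theta$-indexed process splits into two VC-subgraph pieces with envelope $\lesssim m_0+\pi_0$ plus one centered Bernoulli average.

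With this in place the rest is standard. The maximal inequality for VC-subgraph classes (Haussler's bound on $L_2$-covering numbers plus Dudley's entropy integral, whose value is an absolute constant because the index does not grow) gives, for $g\in\{g_1,g_2\}$, $\mathbb{E}[\sup_\theta|(\mathbb{P}_{\mathcal{I}_2}-P)[g(\theta,\cdot)]|\mid\mathcal{D}_1]\lesssim (m_0+\pi_0)/\sqrt{|\mathcal{I}_2|}$; combining with $\mathbb{E}|(\mathbb{P}_{\mathcal{I}_2}-P)[g_3]|\le (1-\alpha)/(2\sqrt{|\mathcal{I}_2|})$, dividing by $\mathbb{P}(T=1)$, and using $1-\alpha<1$ and $\sqrt{a^2+b^2}\le a+b$ reproduces the stated bound on $\mathbb{E}[|\mathbf{I}|\mid\mathcal{D}_1]$. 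For the tail I would apply Talagrand's concentration inequality for suprema of bounded empirical processes (e.g.\ in Bousquet's or Massart's form) to each of $g_1,g_2$, bounding the uniform variance proxy crudely by the squared envelope: for $g\in\{g_1,g_2\}$, conditionally on $\mathcal{D}_1$, with probability at least $1-\delta$,
\[
\sup_\theta|(\mathbb{P}_{\mathcal{I}_2}-P)[g(\theta,\cdot)]| \;\le\; 2\,\mathbb{E}\bigl[\textstyle\sup_\theta|(\mathbb{P}_{\mathcal{I}_2}-P)[g(\theta,\cdot)]|\,\big|\,\mathcal{D}_1\bigr]+c_1\sqrt{\tfrac{(m_0+\pi_0)^2\log(1/\delta)}{|\mathcal{I}_2|}}+\tfrac{c_2(m_0+\pi_0)\log(1/\delta)}{|\mathcal{I}_2|},
\]
while Hoeffding's inequality gives $|(\mathbb{P}_{\mathcal{I}_2}-P)[g_3]|\le(1-\alpha)\sqrt{\log(1/\delta)/(2|\mathcal{I}_2|)}$. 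Summing over $g_1,g_2,g_3$ by a union bound, the $\mathbb{E}\sup$ terms contribute the $(m_0+\pi_0)^2$ summand (free of $\log(1/\delta)$), the Talagrand variance terms together with the Hoeffding term contribute the $(m_0+\pi_0+1-\alpha)^2\log(1/\delta)$ summand, and the remaining $O(1/|\mathcal{I}_2|)$ term is dominated and absorbed; dividing by $\mathbb{P}(T=1)$ yields the claimed high-probability inequality for $|\mathbf{I}|$.

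The main obstacle is exactly the bookkeeping of the second paragraph: arranging the decomposition so that \emph{all} dependence on the arbitrary nuisance estimators is confined to the envelopes while the metric entropy stays bounded by an absolute constant — which is precisely where monotonicity of $\widehat{m}$ in its first argument (\ref{assump:monotone}) and the boundedness/nonnegativity in \ref{assump:bounded} are used. Everything downstream — symmetrization, the VC maximal inequality, Talagrand's and Hoeffding's inequalities — is textbook; as an alternative one could control the indicator pieces $\mathbbm{1}\{r\le\theta\}$ via a weighted Dvoretzky--Kiefer--Wolfowitz inequality, but handling $g_1$ and $g_2$ wholesale through the empirical-process route is both cleaner and also covers the $\widehat{m}(\theta,\cdot)$ terms.
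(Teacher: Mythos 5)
Your proposal is correct and reaches the same bound, but via a route that differs from the paper's in two genuine ways.

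\textbf{Decomposition and entropy control.} The paper splits the centered empirical process into $\mathcal{R}_1$ (the $\mathbbm{1}\{t=0\}\widehat{\pi}(x)\mathbbm{1}\{r\le\theta\}$ piece), $\mathcal{R}_2$ (the $\widehat{m}(\theta,x)$ piece, which is \emph{not} pointwise monotone in $\theta$ because of the sign-changing factor $\mathbbm{1}\{t=1\}-\mathbbm{1}\{t=0\}$), and $\mathcal{R}_3$. To control $\sup_\theta|\mathcal{R}_2|$, they write $\widehat{m}(\theta,x)=\int_0^{m_0}\mathbbm{1}\{\widehat{m}(\theta,x)\ge u\}\,du$, use \ref{assump:monotone} to rewrite each layer as $\mathbbm{1}\{h(x,u)\le\theta\}$, pull the integral outside, and reduce to the same VC-indicator lemma used for $\mathcal{R}_1$. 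You instead regroup the terms into $g_1$ and $g_2$ so that each piece \emph{is} pointwise nondecreasing in $\theta$, then invoke the fact that a one-parameter pointwise-monotone family is VC-subgraph with universal index. These are essentially the same idea packaged differently — the layer-cake representation of a monotone function is exactly what makes a monotone family VC-subgraph — but your framing is cleaner and avoids having to handle the sign-changing coefficient explicitly; the paper's framing keeps each $\mathcal{R}_j$ equal to a ``natural'' term of $\mathrm{IF}$. Both give Dudley's entropy integral with an absolute constant and hence the $O(\sqrt{1/n})$ expectation bound with the stated envelope scaling (using, as you note, $m_0\le 1$ so that $\pi_0 m_0\lesssim m_0+\pi_0$).

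\textbf{Tail bound.} The paper applies McDiarmid's bounded-differences inequality to $\sup_\theta|\mathcal{R}_j|$, $j=1,2$, and Hoeffding to $\mathcal{R}_3$; this gives a pure sub-Gaussian tail $\E\sup + c\sqrt{\log(1/\delta)/n}$ with no lower-order term, which is exactly what the stated bound demands. You use Talagrand/Bousquet concentration, which introduces an extra $c(m_0+\pi_0)\log(1/\delta)/n$ summand. You wave this away as ``dominated and absorbed,'' but that is not true uniformly in $\delta$: for $\delta\lesssim e^{-|\mathcal{I}_2|}$ the Bernstein-type term is of the same order as, or larger than, the $\sqrt{\log(1/\delta)/|\mathcal{I}_2|}$ term, so your argument does not by itself yield the theorem's exact form with a universal constant for \emph{all} $\delta>0$. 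This is a real but easily repairable gap: either switch to McDiarmid's inequality, which only needs the bounded-differences constant $\lesssim(m_0+\pi_0)/\sqrt{n}$ and delivers the clean sub-Gaussian tail directly, or note that for such small $\delta$ the claimed bound already exceeds the deterministic envelope $|\mathbf{I}|\lesssim(m_0+\pi_0+1)/\PP(T=1)$ (again using $m_0\le 1$), so the inequality is trivially true and the Bernstein term can be discarded. Either fix closes the gap; the first matches the paper's route.

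Everything else — bounding $|\mathbf{I}|$ by the $\theta$-supremum to absorb the data dependence of $\widehat{r}_\alpha$, treating $\widehat{\pi},\widehat{m},r(\cdot)$ as fixed conditionally on $\mathcal{D}_1$, and Hoeffding for the $\theta$-free $g_3$ — coincides with the paper.
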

\begin{proof}
See Section \ref{appsec:proof-convergence-if-funcion} of the appendix.
\end{proof}
This theorem is proved using techniques from empirical process theory by bounding $\sup_\theta \bigl|\PP_N[\mathrm{IF}(\widehat{r}_\alpha,\dots)] - \PP[\mathrm{IF}(\widehat{r}_\alpha,\dots)] \bigr|$. It gives a convergence rate of $\PP_N[\mathrm{IF}(\widehat{r}_\alpha,\dots)]$ to $\PP[\mathrm{IF}(\widehat{r}_\alpha,\dots)]$ that scales as $O(N^{-1/2})$, if $|\mathcal{I}_1| \asymp |\mathcal{I}_2|\asymp N$.

Using the definition of $\widehat{r}_\alpha$ and then combining Theorems  \ref{thm:product-bias} and \ref{thm:convergence-if-function} together with~\eqref{eq:coverage-guarantee-split-IF} yields the following main result.

\begin{thm}\label{thm:coverage-final}
Under assumption \ref{assump:DGP}, 
for any estimators $\widehat{\pi}, \widehat{m}$ satisfying assumptions \ref{assump:bounded} and \ref{assump:monotone}, there exists a universal constant $\mathfrak{C}$ such that for any $\delta > 0$ with probability at least $1-\delta$,
\begin{equation}\label{eq:main-coverage}
\begin{aligned}
\mathbb{P}_{(X, Y) \sim Q_{X} \times P_{Y \mid X}} \Bigl(  Y \in \hat{C}(\widehat{r}_{\alpha} ; X) \mid \mathcal{D}^{\mathrm{tr}} \Bigr) &\geq  1-\alpha\\
&\quad-   \frac{\|\widehat{\pi} -  \pi^\star \|_2}{\mathbb{P}(T = 1)}    \sup_{\theta} \| \widehat{m}(\theta,\cdot) - m^\star(\theta,\cdot)\|_2\\ 
&\quad - \mathfrak{C} \frac{(m_0+\pi_0+1)}{\PP(T=1)} \sqrt{\frac{  \log \bigl( {1}/{\delta}  \bigr) +1 }{ |\mathcal{I}_2|}} .
\end{aligned}
\end{equation}
Moreover, 
\begin{equation}\label{eq:final-coverage-unconditional}
    \begin{split}
        \mathbb{P}_{(X, Y)\sim Q_X\otimes P_{Y|X}}\Bigl(Y \in \hat{C}(\widehat{r}_{\alpha} ; X) \Bigr) ~&\ge~ (1 - \alpha)
        - \mathbb{E}\left[\frac{\|\widehat{\pi} - \pi^{\star}\|_2}{\mathbb{P}(T = 1)}\sup_{\theta}\|\widehat{m}(\theta, \cdot) - m^{\star}(\theta, \cdot)\|_2\right]
        \\ 
        &\qquad- \frac{\mathfrak{C}}{\mathbb{P}(T = 1)}\frac{m_0 + \pi_0   +1}{\sqrt{|\mathcal{I}_2|}} .
    \end{split}
\end{equation}

\end{thm}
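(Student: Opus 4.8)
The plan is to combine the three-term decomposition~\eqref{eq:coverage-guarantee-split-IF} with the two ready-made bounds in Theorems~\ref{thm:product-bias} and~\ref{thm:convergence-if-function}; essentially all the analytic content has already been isolated, so what remains is bookkeeping and a clean-up of constants. First I would make the decomposition rigorous. Conditionally on the training sample $\mathcal{D}^{\mathrm{tr}}$, the map $R(\cdot,\cdot)$, the estimators $\widehat{\pi},\widehat{m}$ and the cutoff $\widehat{r}_{\alpha}$ are all non-random, while the test pair $(X,Y)\sim Q_X\otimes P_{Y\mid X}$ is independent of $\mathcal{D}^{\mathrm{tr}}$; hence Lemma~\ref{lem:connection-if-coverage}, applied with the true nuisances $(\pi,m)=(\pi^\star,m^\star)$ (both alternatives of the lemma then holding) and with the data-dependent $\theta=\widehat{r}_{\alpha}$, gives the exact identity $\mathbb{P}_{(X,Y)\sim Q_X\otimes P_{Y\mid X}}(Y\in\widehat{C}(\widehat{r}_{\alpha};X)\mid\mathcal{D}^{\mathrm{tr}})-(1-\alpha)=P[\mathrm{IF}(\widehat{r}_{\alpha},X,R,T;\pi^\star,m^\star)]/\mathbb{P}(T=1)$. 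Writing the numerator as $\mathbb{P}_{\mathcal{I}_2}[\mathrm{IF}(\widehat{r}_{\alpha};\widehat{\pi},\widehat{m})]+\{P[\mathrm{IF}(\widehat{r}_{\alpha};\widehat{\pi},\widehat{m})]-\mathbb{P}_{\mathcal{I}_2}[\mathrm{IF}(\widehat{r}_{\alpha};\widehat{\pi},\widehat{m})]\}+\{P[\mathrm{IF}(\widehat{r}_{\alpha};\pi^\star,m^\star)]-P[\mathrm{IF}(\widehat{r}_{\alpha};\widehat{\pi},\widehat{m})]\}$ reproduces~\eqref{eq:coverage-guarantee-split-IF}, whose leading term is $\ge 0$ because $\widehat{r}_{\alpha}$ is, by definition~\eqref{eq:definition-theta}, the smallest $\theta$ with $\mathbb{P}_{\mathcal{I}_2}[\mathrm{IF}(\theta;\widehat{\pi},\widehat{m})]\ge 0$. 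Thus $\mathbb{P}_{(X,Y)\sim Q_X\otimes P_{Y\mid X}}(Y\in\widehat{C}(\widehat{r}_{\alpha};X)\mid\mathcal{D}^{\mathrm{tr}})-(1-\alpha)\ge\mathbf{I}+\mathbf{II}\ge-|\mathbf{I}|-|\mathbf{II}|$.

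Next I would bound the two remainder terms. For $\mathbf{II}$, since $\widehat{\pi},\widehat{m}$ are fixed given $\mathcal{D}_1$, Theorem~\ref{thm:product-bias} applies, and its supremum over $\gamma\in\mathbb{R}$ covers the random argument $\widehat{r}_{\alpha}$; together with linearity of $P[\cdot]$ this gives the pointwise bound $|\mathbf{II}|\le\|\widehat{\pi}-\pi^\star\|_2\,\sup_{\theta}\|\widehat{m}(\theta,\cdot)-m^\star(\theta,\cdot)\|_2/\mathbb{P}(T=1)$, a $\mathcal{D}_1$-measurable quantity. For $\mathbf{I}$, Theorem~\ref{thm:convergence-if-function} directly bounds $|\mathbf{I}|$ with conditional-on-$\mathcal{D}_1$ probability $\ge 1-\delta$ (the statement being valid at the random $\widehat{r}_{\alpha}$ precisely because its proof controls $\sup_\theta|\mathbb{P}_{\mathcal{I}_2}[\mathrm{IF}(\theta;\widehat{\pi},\widehat{m})]-P[\mathrm{IF}(\theta;\widehat{\pi},\widehat{m})]|$). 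It remains only to tidy constants: since $m_0,\pi_0\ge 0$ and $1-\alpha\le 1$, one has $(m_0+\pi_0+1-\alpha)^2\log(1/\delta)+(m_0+\pi_0)^2\le(m_0+\pi_0+1)^2(\log(1/\delta)+1)$ for $\delta\in(0,1)$, so the bound of Theorem~\ref{thm:convergence-if-function} is at most $\mathfrak{C}(m_0+\pi_0+1)\sqrt{(\log(1/\delta)+1)/|\mathcal{I}_2|}/\mathbb{P}(T=1)$. Plugging the two bounds into the last display yields~\eqref{eq:main-coverage}; because the $\mathbf{I}$-tail statement holds conditionally on every $\mathcal{D}_1$ while the $\mathbf{II}$-bound holds pointwise, the coverage inequality holds both conditionally on $\mathcal{D}_1$ and, after marginalizing the defining event, with unconditional probability $\ge 1-\delta$. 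In particular the right-hand side of~\eqref{eq:main-coverage} converges to $1-\alpha$ as soon as one of $\|\widehat{\pi}-\pi^\star\|_2$, $\sup_\theta\|\widehat{m}(\theta,\cdot)-m^\star(\theta,\cdot)\|_2$ vanishes and $|\mathcal{I}_2|\to\infty$, which is the claimed double robustness.

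For the unconditional bound~\eqref{eq:final-coverage-unconditional}, I would take expectation over $\mathcal{D}^{\mathrm{tr}}$ of $\mathbb{P}_{(X,Y)\sim Q_X\otimes P_{Y\mid X}}(Y\in\widehat{C}(\widehat{r}_{\alpha};X)\mid\mathcal{D}^{\mathrm{tr}})-(1-\alpha)\ge-|\mathbf{I}|-|\mathbf{II}|$. The term $\mathbb{E}[|\mathbf{II}|]\le\mathbb{E}[\|\widehat{\pi}-\pi^\star\|_2\sup_\theta\|\widehat{m}(\theta,\cdot)-m^\star(\theta,\cdot)\|_2]/\mathbb{P}(T=1)$ follows by taking expectations in the pointwise bound above, and the expectation bound of Theorem~\ref{thm:convergence-if-function}, $\mathbb{E}[|\mathbf{I}|\mid\mathcal{D}_1]\le\frac{\mathfrak{C}'}{\mathbb{P}(T=1)}\frac{m_0+\pi_0+1}{\sqrt{|\mathcal{I}_2|}}$, gives $\mathbb{E}[|\mathbf{I}|]\le\frac{\mathfrak{C}'}{\mathbb{P}(T=1)}\frac{m_0+\pi_0+1}{\sqrt{|\mathcal{I}_2|}}$ by the tower rule. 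Relabeling $\mathfrak{C}'$ as $\mathfrak{C}$ gives~\eqref{eq:final-coverage-unconditional}.

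As for the main obstacle: within this assembly there really is none. The only points that need care are (i) that $\widehat{r}_{\alpha}$ is a function of $\mathcal{D}^{\mathrm{tr}}$ alone, hence independent of the test point, which is what legitimizes substituting $\theta=\widehat{r}_{\alpha}$ into Lemma~\ref{lem:connection-if-coverage}, and (ii) that the uniform-in-$\theta$ form of Theorems~\ref{thm:product-bias} and~\ref{thm:convergence-if-function} is exactly what lets one evaluate those bounds at the random $\widehat{r}_{\alpha}$ without any further argument --- which is why those results were stated uniformly in their first argument. The genuine difficulty lives upstream, in establishing the doubly robust identity of Lemma~\ref{lem:connection-if-coverage} and, above all, the empirical-process bound of Theorem~\ref{thm:convergence-if-function}, which leans on the boundedness assumption~\ref{assump:bounded} and the monotonicity assumption~\ref{assump:monotone}.
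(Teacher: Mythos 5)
Your proposal is correct and follows exactly the route the paper intends: the paper's proof is simply the one-line statement that the result follows from Lemma~\ref{lem:connection-if-coverage}, the decomposition~\eqref{eq:coverage-guarantee-split-IF}, and Theorems~\ref{thm:product-bias}--\ref{thm:convergence-if-function}, and you have filled in precisely that assembly, including the correct observation that the leading term of~\eqref{eq:coverage-guarantee-split-IF} is nonnegative by the defining choice of $\widehat{r}_{\alpha}$, the legitimate substitution of the random $\widehat{r}_{\alpha}$ into the uniform-in-$\theta$ bounds, and the clean constant absorption $(m_0+\pi_0+1-\alpha)^2\log(1/\delta)+(m_0+\pi_0)^2\le(m_0+\pi_0+1)^2(\log(1/\delta)+1)$ that delivers the stated form. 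The unconditional bound via the tower rule is handled exactly as the paper intends.
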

\begin{proof}
This is a direct result of Lemma \ref{lem:connection-if-coverage},~\eqref{eq:coverage-guarantee-split-IF}, and Theorem~\ref{thm:convergence-if-function}.
\end{proof}

Equation \eqref{eq:main-coverage} of Theorem \ref{thm:coverage-final} provides a coverage guarantee conditional on the training data and \eqref{eq:final-coverage-unconditional} provides a bound on the unconditional coverage. Note that the slack for the coverage is the sum of two terms: the product bias from the estimation of $\pi^\star$ and $m^\star$ and a term of order $O(N^{-1/2})$ if we assume the two splits $\mathcal{I}_1$ and $\mathcal{I}_2$ are of similar size.  Proposition 2a of~\cite{vovk2012conditional} establishes a conditional prediction coverage guarantee which also involves a slack analogous to the last term of~\eqref{eq:main-coverage}. 
The miscoverage error slacks in~\eqref{eq:main-coverage} and~\eqref{eq:final-coverage-unconditional} have clear meaning. The first slack (product of errors) comes from the double robustness property of $\mathrm{IF}$ and the second slack (of order $|\mathcal{I}_2|^{-1/2}$) comes from approximating $P[\mathrm{IF}(\cdots)]$ with $\mathbb{P}_{\mathcal{I}_2}[\mathrm{IF}(\cdots)]$.
\begin{remark}
Theorem~\ref{thm:coverage-final} only provides lower bounds on the (conditional or unconditional on $\mathcal{D}^{\mathrm{tr}}$) coverage probability. Without a continuity assumption on the distribution of $R(X, Y)$ when $(X, Y) \sim Q_X\otimes P_{Y|X}$, it is not possible to provide an upper bound. If there is an $r_{\alpha}$ such that $P[\mathrm{IF}(r_{\alpha}, X, R, T; \pi^\star, m^\star)] = 0$, then the conclusions of Theorem~\ref{thm:coverage-final} can be made two-sided. The condition of existence of $r_{\alpha}$ which makes $P[\mathrm{IF}(\cdots)] = 0$ is same as saying that there exists an $r_{\alpha}$ such that $\mathbb{P}(R(X, Y) \le r_{\alpha}|T = 1) = 1 - \alpha$, i.e., there are no jumps in the distribution of $R(X, Y)|T = 1$ at $(1 - \alpha)$-th quantile, or equivalently, the distribution function takes the value of $1-\alpha$. Under this condition, inequality~\eqref{eq:final-coverage-unconditional}, for instance, can be strengthened to
\begin{equation}\label{eq:final-coverage-unconditional-2}
    \begin{split}
        \Bigl|
        \mathbb{P}_{(X, Y)\sim Q_X\otimes P_{Y|X}}\Bigl(Y \in \hat{C}(\widehat{r}_{\alpha} ; X) \Bigr) - (1 - \alpha)
        \Bigr|
        &\le \mathbb{E}\left[\frac{\|\widehat{\pi} - \pi^{\star}\|_2}{\mathbb{P}(T = 1)}\sup_{\theta}\|\widehat{m}(\theta, \cdot) - m^{\star}(\theta, \cdot)\|_2\right]
        \\ 
        &\qquad
        + \frac{\mathfrak{C}}{\mathbb{P}(T = 1)}\frac{m_0 + \pi_0   +1}{\sqrt{|\mathcal{I}_2|}} .
    \end{split}
\end{equation}
Similar strengthening also holds for~\eqref{eq:main-coverage}. See the proof of Theorem~\ref{thm:coverage-final} for details.
\end{remark}

\section{Methodology without sample splitting}\label{sec:method-full}
In this section, we provide an alternative methodology that builds upon Algorithm \ref{alg:influence-conformal-split} but is potentially more efficient by avoiding sample splitting. 
The procedure is summarized in Algorithm \ref{alg:influence-conformal-full}.
\begin{algorithm}[h]
    \SetAlgoLined
    \SetEndCharOfAlgoLine{}
   \KwIn{Training data $\mathcal{D}^{\text{tr}} = \mathcal{D}^{\text{tr}}_P \cup \mathcal{D}^{\text{tr}}_Q$; Coverage probability $1 - \alpha$, a training method $\mathcal{A}$ and estimators $\widehat{\pi}, \widehat{m}$, the point for prediction $x$.}
    \KwOut{A valid prediction set $\widehat{C}_{\alpha}(x)$.}
    Fit the training method $\mathcal{A}$ on $\mathcal{D}^{\text{tr}}$ and using fitted method $\mathcal{A}$, construct an increasing (nested) sequence of sets $\{\mathcal{F}_t\}_{t\in\mathcal{T}}$. Here $\mathcal{T}$ is a subset of $\mathbb{R}$. The nested sets $\{\mathcal{F}_t\}_{t\in\mathcal{T}}$ can depend arbitrarily on $\mathcal{D}^{\text{tr}}$.\;
  For each $i\in [N]$ that satisfies $T_i=0$, define the conformal score
    \[
    r_i ~=~ r(Z_i) := \inf\{t\in\mathcal{T}:\,Z_i\in\mathcal{F}_t\}.
    \]\;
    \vspace{-1em}
    Fit estimators $\widehat{\pi},\widehat{m}$ on $\mathcal{D}^{\text{tr}}$ and find the smallest $\hat{\theta} = \widehat{r}_{\alpha}$ such that $\mathbb{P}_{N}[\mathrm{IF}({\theta}, X,R,T; \widehat{\pi}, \widehat{m})] \ge 0$, where
    \begin{align*}
    \mathbb{P}_{N} \bigl[ \mathrm{IF}({\theta},X,R,T; \widehat{\pi},\widehat{m}) \bigr] ~&=~ \frac{1}{N}\sum_{i=1} ^N\mathbbm{1}\{t_i=0\} \widehat{\pi}(x_i) \big[ \mathbbm{1}\{r_i \leq {\theta} \} - \widehat{m}({\theta},x_i) \bigr]\\
    ~&\quad+~ \frac{1}{N} \sum_{i=1}^N \mathbbm{1}\{t_i=1\}[\widehat{m}({\theta},x_i) - (1-\alpha)].
    \end{align*}
  
    \Return the prediction set $\widehat{C}_{\alpha}(x):= \{ y: R(x,y) \leq \widehat{r}_{\alpha} \}$.
    \caption{Full doubly robust prediction}
    \label{alg:influence-conformal-full}
\end{algorithm}

Similar as in \eqref{eq:coverage-guarantee-split-IF}, let $\mathbf{I}:= \{P[\mathrm{IF}(\widehat{r}_{\alpha}, X, R, T; \widehat{\pi}, \widehat{m})] - \mathbb{P}_{N}[\mathrm{IF}(\widehat{r}_{\alpha}, X, R, T; \widehat{\pi}, \widehat{m})]\}/ \mathbb{P}(T = 1)$. We now provide a bound on $\mathbf{I}$ in Theorem \ref{thm:convergence-if-function-whole}. Note that in addition to assumptions~\ref{assump:DGP}--\ref{assump:monotone}, because we are doing the training and evaluating on the same dataset, we need some additional assumptions on the classes of estimators in order to apply results from empirical processes to ensure $\mathbf{I}$ converges to zero.

\begin{enumerate}[label=\bf(A\arabic*)]
\setcounter{enumi}{3}
\item \label{assump:covering} Assume that $\widehat{\pi}(\cdot)$ and its limit $\pi(\cdot)$ are in function classes $\mathcal{F}_{\pi}$ and $\widehat{m}(\cdot,\cdot)$ and its limit ${m}(\cdot,\cdot)$ are in $\mathcal{F}_{m}$, such that for some $\alpha_\pi, \alpha_m \geq 0$, the covering numbers satisfy $\forall \varepsilon >0,$
\[
\log N(\varepsilon, \mathcal{F}_{\pi}, L_2(Q)) \le C\varepsilon^{-\alpha_{\pi}}, \text{  and }\log N(\varepsilon, \mathcal{F}_{m}, L_2(Q)) \le C\varepsilon^{-\alpha_{m}},
\]
where $C$ is some constant and $Q$ is any discrete probability measure, and the covering number $N \bigl(\varepsilon, \mathcal{F}, L_r(Q) \bigr)$ is defined in the same way as in \cite{kearns1994introduction}.
\end{enumerate}

\begin{thm}\label{thm:convergence-if-function-whole}
Under assumption \ref{assump:DGP}, for any estimators $\widehat{\pi}, \widehat{m}$ satisfying assumptions \ref{assump:bounded},  \ref{assump:monotone} and \ref{assump:covering}, there exists a universal constant $\mathfrak{C}$ such that for any $\delta>0$, 
\begin{equation}
\begin{aligned}
     \PP \Biggl( |\mathbf{I}| \leq \mathfrak{C}  \biggl\{  &N^{- 1/(\alpha_m \vee 2)} \bigl(1+\mathbbm{1}{\{\alpha_m=2\}}\log N \bigr) + N^{- 1/(\alpha_\pi \vee 2)} \bigl(1+\mathbbm{1}{\{\alpha_\pi=2\}}\log N \bigr) +\\
     &\sqrt{\frac{ (m_0+\pi_0 +1- \alpha)^2 \log \bigl( \frac{1}{\delta}  \bigr) }{N}} \biggr\}/ \mathbb{P}(T = 1) \Biggr) \geq 1-\delta,
\end{aligned}
\end{equation}
where $\vee$ is the maximum operator. Moreover,  there exists a universal constant $\mathfrak{C}'$ such that
\[
\mathbb{E}\left[|\mathbf{I}|\right] ~\le~ \frac{\mathfrak{C}'}{\mathbb{P}(T = 1)} \biggl( N^{- 1/(\alpha_m \vee 2)} \bigl(1+\mathbbm{1}{\{\alpha_m=2\}}\log N \bigr) + N^{- 1/(\alpha_\pi \vee 2)} \bigl(1+\mathbbm{1}{\{\alpha_\pi=2\}}\log N \bigr) + \frac{m_0 + \pi_0 +1 - \alpha }{\sqrt{N}} \biggr).
\]
\end{thm}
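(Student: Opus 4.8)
The plan is to reduce the whole statement to a single uniform empirical-process inequality, since (unlike the final coverage theorem) no double-robustness structure enters here — $\mathbf{I}$ is a pure stochastic-equicontinuity term. Writing $\mathbf{I} = \PP(T=1)^{-1}\,(P - \PP_N)\bigl[\mathrm{IF}(\widehat r_\alpha, X, R, T; \widehat\pi, \widehat m)\bigr]$ and noting that $\PP(T=1)$ is a fixed population constant, it suffices to bound $\bigl|(\PP_N - P)[\mathrm{IF}(\widehat r_\alpha, X, R, T; \widehat\pi, \widehat m)]\bigr|$. The key difference from Theorem~\ref{thm:convergence-if-function} is that $\widehat\pi,\widehat m$ are now fit on the same data that defines $\PP_N$, so we cannot condition on them; instead, invoking assumption~\ref{assump:covering} (which places $\widehat\pi\in\mathcal F_\pi$ and $\widehat m\in\mathcal F_m$) together with $\widehat r_\alpha\in\mathbb{R}$, I would pass to the uniform bound
\[
\bigl|(\PP_N - P)[\mathrm{IF}(\widehat r_\alpha,\cdots;\widehat\pi,\widehat m)]\bigr| ~\le~ \sup_{\theta\in\mathbb{R},\,\pi\in\mathcal F_\pi,\,m\in\mathcal F_m}\bigl|(\PP_N - P)[\mathrm{IF}(\theta, X, R, T; \pi, m)]\bigr| ~=:~ \|\PP_N - P\|_{\mathcal G},
\]
where $\mathcal G$ is the class of maps $z=(x,t,(1-t)r)\mapsto \mathrm{IF}(\theta,x,r,t;\pi,m)$ indexed by $(\theta,\pi,m)$, with constant envelope of order $m_0+\pi_0+1-\alpha$ by assumption~\ref{assump:bounded} (using that $m^\star$ and its estimator are conditional CDFs, hence bounded by $1$).

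Next I would bound the uniform covering number $\sup_Q \log N(\varepsilon,\mathcal G, L_2(Q))$ over discrete $Q$. Decompose $\mathrm{IF}(\theta,x,r,t;\pi,m)$ as the sum of the constant $-\mathbbm{1}\{t=1\}(1-\alpha)$, the term $\mathbbm{1}\{t=1\}m(\theta,x)$, the product $\mathbbm{1}\{t=0\}\pi(x)\mathbbm{1}\{r\le\theta\}$, and the product $-\mathbbm{1}\{t=0\}\pi(x)m(\theta,x)$, and use the standard facts that covering numbers of sums and of products of uniformly bounded classes combine additively in $\log N$ after rescaling $\varepsilon$ by the uniform bounds. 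The component entropies are: (i) the $\theta$-indexed indicator family $\{z\mapsto\mathbbm{1}\{r\le\theta\}\}$ forms a VC class, giving $\log N(\varepsilon,\cdot,L_2(Q))\lesssim\log(1/\varepsilon)$; (ii) $\mathcal F_\pi$ contributes $C\varepsilon^{-\alpha_\pi}$ and $\mathcal F_m$ (as a class of bivariate functions of $(\theta,x)$) contributes $C\varepsilon^{-\alpha_m}$; (iii) the extra $\theta$-indexing of $m(\theta,\cdot)$ costs only $\log(1/\varepsilon)$ — this is exactly where assumption~\ref{assump:monotone} is essential, since monotonicity and boundedness of $\theta\mapsto m(\theta,\cdot)$ make $\theta\mapsto\int m(\theta,x)\,dQ(x)$ a bounded nondecreasing scalar function, so an $\varepsilon^2$-grid on its range furnishes an $\varepsilon$-net in $L_2(Q)$. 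Combining, $\sup_Q\log N(\varepsilon,\mathcal G,L_2(Q))\lesssim \varepsilon^{-\alpha_\pi}+\varepsilon^{-\alpha_m}+\log(1/\varepsilon)$.

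With this entropy bound I would apply a uniform maximal inequality. By symmetrization and Dudley's entropy integral (computed conditionally on the sample, with metric $L_2(\PP_N)$), $\E\|\PP_N-P\|_{\mathcal G}\lesssim N^{-1/2}\int_0^{CB}\sqrt{\varepsilon^{-(\alpha_\pi\vee\alpha_m)}+\log(1/\varepsilon)}\,d\varepsilon$ when the integral converges, i.e. when $\alpha_\pi\vee\alpha_m<2$, yielding the $N^{-1/2}$ rate; for $\alpha=2$ a truncation of the chaining at scale $N^{-1/2}$ produces the extra $\log N$ factor, and for $\alpha>2$ a peeling/truncated-chaining refinement of Dudley's bound delivers the slower $N^{-1/\alpha}$ rate. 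Since $\mathcal F_\pi$ and $\mathcal F_m$ enter through a sum of terms, their rates add, giving the claimed bound on $\E\|\PP_N-P\|_{\mathcal G}$, namely $N^{-1/(\alpha_m\vee 2)}(1+\mathbbm{1}\{\alpha_m=2\}\log N)+N^{-1/(\alpha_\pi\vee 2)}(1+\mathbbm{1}\{\alpha_\pi=2\}\log N)+(m_0+\pi_0+1-\alpha)/\sqrt N$. For the high-probability statement I would then apply Talagrand's concentration inequality for the supremum of an empirical process over a uniformly bounded class around its mean, with envelope and variance proxy of order $(m_0+\pi_0+1-\alpha)$, producing the additional $\sqrt{(m_0+\pi_0+1-\alpha)^2\log(1/\delta)/N}$ fluctuation (plus a lower-order $\log(1/\delta)/N$ term absorbed into the constant). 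Dividing throughout by $\PP(T=1)$ yields both displays.

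The hard part will be the entropy computation for $\mathcal G$: accommodating the $\theta$-indexing of the bivariate nuisance $m(\theta,x)$ together with its membership in $\mathcal F_m$ (where assumption~\ref{assump:monotone} is what keeps the $\theta$-cost logarithmic rather than polynomial), and carefully tracking the $\varepsilon$-rescalings through the products so that the polynomial uniform entropy $\varepsilon^{-\alpha_\pi}+\varepsilon^{-\alpha_m}$ is preserved up to constants — and then feeding a polynomial entropy with exponent possibly $\ge 2$ (the non-Donsker regime) through the maximal inequality to extract the sharp three-regime rate rather than mere consistency, which requires the truncated-chaining/peeling argument in place of the plain convergent entropy integral.
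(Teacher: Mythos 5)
Your overall plan is sound and takes essentially the same empirical-process route as the paper: reduce $\mathbf{I}$ to a uniform bound on $(\mathbb{P}_N - P)$ over a function class indexed by the nuisances and the threshold, bound the uniform covering numbers, feed a polynomial entropy (possibly in the non-Donsker regime) through a chaining argument to get the three-regime rate, and then concentrate around the mean. The paper reaches the same destination with slightly different bookkeeping: it splits $\mathrm{IF}$ into three pieces $\mathcal R_1, \mathcal R_2, \mathcal R_3$ and bounds each supremum separately, using change-of-measure arguments to factor out the envelopes, Lemma A.1 of Srebro--Sridharan in place of your truncated Dudley integral to obtain $N^{-1/(\alpha\vee 2)}(1+\mathbbm{1}\{\alpha=2\}\log N)$, and McDiarmid in place of Talagrand for the $\sqrt{\log(1/\delta)/N}$ deviation. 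Those are interchangeable tools; the substance matches.

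There is one technical gap worth flagging in your entropy computation for the $m$-component. You argue that for a \emph{fixed} $m$, monotonicity of $\theta\mapsto m(\theta,\cdot)$ makes $G_m(\theta):=\int m(\theta,x)\,dQ(x)$ a bounded nondecreasing scalar, so an $\varepsilon^2$-grid on its range gives an $L_2(Q)$ $\varepsilon$-net over $\theta$; that is correct. But the resulting $\theta$-grid depends on $m$ through $G_m$, so you cannot simply take the product of a fixed $\theta$-grid with an $L_2(Q)$-net of $\mathcal F_m$ and call it a net for the joint class $\{x\mapsto m(\theta,x):m\in\mathcal F_m,\ \theta\in\mathbb{R}\}$. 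To make the ``fiber over base'' covering rigorous you would need a $\sup_\theta$-type net on $\mathcal F_m$ (so that a nearby $m'$ remains close to $m$ uniformly over the per-$m$ grid), which is a stronger requirement than the bare $L_2(Q)$ bound in assumption~\ref{assump:covering}. The paper's proof sidesteps this by implicitly reinterpreting $\mathcal F_m$ as the class of univariate maps $z\mapsto \widehat m(\theta,x)$ indexed jointly by $(\widehat m,\theta)$ when it ``recalls'' $\mathcal F_m$ in the $\mathcal R_2$ step, and applying assumption~\ref{assump:covering} directly to that class --- in which case assumption~\ref{assump:monotone} actually plays no role in this theorem's proof (it is only used explicitly in the split-sample Theorem~\ref{thm:convergence-if-function}). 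Your route is arguably the more honest reading of assumption~\ref{assump:covering} as a bound on a class of bivariate functions, but as written the combination step over $(\theta,m)$ is incomplete; you should either establish a uniform-in-$m$ $\theta$-net (using that the range of every $G_m$ lies in a fixed interval of length $2m_0$, so a single grid of size $O(m_0^2/\varepsilon^2)$ works after covering $\mathcal F_m$ in a sup-$\theta$ norm) or, more simply, adopt the paper's interpretation of $\mathcal F_m$ as the $(\theta,m)$-indexed class and drop the monotonicity detour.
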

The proof of this theorem is in Section \ref{appsec:proof-convergence-if-funcion-whole} of the appendix.

Using the definition of $\widehat{r}_\alpha$ and combining Theorems  \ref{thm:product-bias} and \ref{thm:convergence-if-function} together with~\eqref{eq:coverage-guarantee-split-IF} yields the following result for the final coverage of the prediction region from Algorithm \ref{alg:influence-conformal-full}.

\begin{thm}\label{thm:coverage-final-full}
Under assumption \ref{assump:DGP}, 
for any estimators $\widehat{\pi}, \widehat{m}$ satisfying assumptions \ref{assump:bounded}, \ref{assump:monotone} and \ref{assump:covering}, there exists a universal constant $\mathfrak{C}$ such that for any $\delta > 0$ with probability at least $1-\delta$,
\begin{equation}\label{eq:main-coverage-full}
\begin{aligned}
&\mathbb{P}_{(X, Y) \sim Q_{X} \times P_{Y \mid X}}\Bigl( Y \in \hat{C}(\widehat{r}_{\alpha} ; X) \mid \mathcal{D}^{\mathrm{tr}} \Bigr) \geq  1-\alpha
- \frac{1}{\PP(T=1)}\biggl(  \|\widehat{\pi} -  \pi^\star \|_2    \sup_{\theta} \| \widehat{m}(\theta,\cdot) - m^\star(\theta,\cdot)\|_2 \\&\quad+ N^{- 1/(\alpha_m \vee 2)} \bigl(1+\mathbbm{1}{\{\alpha_m=2\}}\log N \bigr) + N^{- 1/(\alpha_\pi \vee 2)} \bigl(1+\mathbbm{1}{\{\alpha_\pi=2\}}\log N \bigr) + \mathfrak{C} (m_0+\pi_0+1) \sqrt{\frac{  \log \bigl( {1}/{\delta}  \bigr) +1 }{ N}} \biggr).
\end{aligned}
\end{equation}
Moreover, 
\begin{equation}\label{eq:final-coverage-unconditional-full}
    \begin{split}
        &\mathbb{P}_{(X, Y)\sim Q_X\otimes P_{Y|X}}\Bigl(Y \in \hat{C}(\widehat{r}_{\alpha} ; X) \Bigr) - (1 - \alpha)\\ 
        &\qquad\ge -\frac{\mathfrak{C}}{\mathbb{P}(T = 1)}\biggl( N^{- 1/(\alpha_m \vee 2)} \bigl(1+\mathbbm{1}{\{\alpha_m=2\}}\log N \bigr) + N^{- 1/(\alpha_\pi \vee 2)} \bigl(1+\mathbbm{1}{\{\alpha_\pi=2\}}\log N \bigr) + \frac{m_0 + \pi_0 +1 }{\sqrt{N}} \biggr)\\
        &\qquad \quad - \mathbb{E}\left[\frac{\|\widehat{\pi} - \pi^{\star}\|_2}{\mathbb{P}(T = 1)}\sup_{\theta}\|\widehat{m}(\theta, \cdot) - m^{\star}(\theta, \cdot)\|_2\right].
    \end{split}
\end{equation}
\end{thm}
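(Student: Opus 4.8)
The plan is to mirror the proof of Theorem~\ref{thm:coverage-final} essentially verbatim, the only structural change being that the sample-split empirical average $\mathbb{P}_{\mathcal{I}_2}[\cdot]$ is replaced throughout by the full-sample average $\mathbb{P}_N[\cdot]$ of Algorithm~\ref{alg:influence-conformal-full}, and that the empirical-process fluctuation term is controlled by Theorem~\ref{thm:convergence-if-function-whole} in place of Theorem~\ref{thm:convergence-if-function}.

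First I would invoke Lemma~\ref{lem:connection-if-coverage} with $\theta = \widehat{r}_\alpha$ and the pair $(\pi, m) = (\pi^\star, m^\star)$ (which trivially satisfies the hypothesis that one of the two functions coincides with its true version). Since $(X,Y)\sim Q_X\otimes P_{Y|X}$ is independent of $\mathcal{D}^{\mathrm{tr}}$, conditioning on $\mathcal{D}^{\mathrm{tr}}$ fixes $\widehat{r}_\alpha$, $\widehat\pi$ and $\widehat m$, and the lemma gives the exact identity
\[
\mathbb{P}_{(X,Y)\sim Q_X\otimes P_{Y|X}}\bigl(Y \in \widehat{C}(\widehat{r}_\alpha; X)\mid \mathcal{D}^{\mathrm{tr}}\bigr) - (1-\alpha) = \frac{P[\mathrm{IF}(\widehat{r}_\alpha, X, R, T; \pi^\star, m^\star)]}{\mathbb{P}(T=1)}.
\]
The right-hand side is not observable, but only the identity is needed. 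I would then telescope it exactly as in~\eqref{eq:coverage-guarantee-split-IF}, writing it as the sum of $\mathbb{P}_N[\mathrm{IF}(\widehat{r}_\alpha; \widehat\pi, \widehat m)]/\mathbb{P}(T=1)$, the term $\mathbf{I} := \{P[\mathrm{IF}(\widehat{r}_\alpha; \widehat\pi, \widehat m)] - \mathbb{P}_N[\mathrm{IF}(\widehat{r}_\alpha; \widehat\pi, \widehat m)]\}/\mathbb{P}(T=1)$, and the term $\mathbf{II} := \{P[\mathrm{IF}(\widehat{r}_\alpha; \pi^\star, m^\star)] - P[\mathrm{IF}(\widehat{r}_\alpha; \widehat\pi, \widehat m)]\}/\mathbb{P}(T=1)$; this is a pure algebraic (telescoping) rearrangement. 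The first summand is nonnegative because, by construction in Algorithm~\ref{alg:influence-conformal-full}, $\widehat{r}_\alpha$ is the smallest $\theta$ with $\mathbb{P}_N[\mathrm{IF}(\theta; \widehat\pi, \widehat m)]\ge 0$.

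Next I would bound the two remaining terms. For $\mathbf{II}$, Theorem~\ref{thm:product-bias} — whose bound is uniform over the first argument — gives $|\mathbf{II}| \le \|\widehat\pi - \pi^\star\|_2\,\sup_\theta\|\widehat m(\theta,\cdot) - m^\star(\theta,\cdot)\|_2/\mathbb{P}(T=1)$ deterministically given $\mathcal{D}^{\mathrm{tr}}$, hence in particular at the data-dependent argument $\widehat{r}_\alpha$. For $\mathbf{I}$, Theorem~\ref{thm:convergence-if-function-whole}, which under assumptions~\ref{assump:DGP}, \ref{assump:bounded}, \ref{assump:monotone} and~\ref{assump:covering} is stated precisely for this quantity, furnishes both a bound holding with probability at least $1-\delta$ over $\mathcal{D}^{\mathrm{tr}}$ and a bound on $\mathbb{E}|\mathbf{I}|$. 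Combining $\mathbb{P}_{(X,Y)\sim Q_X\otimes P_{Y|X}}(Y\in\widehat{C}(\widehat{r}_\alpha;X)\mid\mathcal{D}^{\mathrm{tr}}) - (1-\alpha) \ge -|\mathbf{I}| - |\mathbf{II}|$ with the high-probability bound on $|\mathbf{I}|$ yields~\eqref{eq:main-coverage-full}; for~\eqref{eq:final-coverage-unconditional-full} I would take expectation over $\mathcal{D}^{\mathrm{tr}}$, use the tower property together with $(X,Y)\perp\mathcal{D}^{\mathrm{tr}}$ so that the conditional coverage averages to $\mathbb{P}_{(X,Y)\sim Q_X\otimes P_{Y|X}}(Y\in\widehat{C}(\widehat{r}_\alpha;X))$, bound $\mathbb{E}|\mathbf{I}|$ by the expectation bound of Theorem~\ref{thm:convergence-if-function-whole}, and retain $\mathbb{E}[\mathbf{II}]$ as the product-bias expectation in the statement.

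The genuine work is entirely inside Theorem~\ref{thm:convergence-if-function-whole}, so I expect no further obstacle here: unlike the split algorithm one cannot condition on a held-out fitting sample, and one must instead control $\sup_\theta\bigl|\mathbb{P}_N[\mathrm{IF}(\theta;\widehat\pi,\widehat m)] - P[\mathrm{IF}(\theta;\widehat\pi,\widehat m)]\bigr|$ uniformly over the function classes $\mathcal{F}_\pi$, $\mathcal{F}_m$ and over $\theta$, which is exactly where assumption~\ref{assump:covering} and the monotonicity in~\ref{assump:monotone} enter; granting that theorem, the assembly above is routine. Finally, as in the remark following Theorem~\ref{thm:coverage-final}, if there is an $r_\alpha$ with $P[\mathrm{IF}(r_\alpha;\pi^\star,m^\star)]=0$ — equivalently, the law of $R(X,Y)\mid T=1$ has no jump at its $(1-\alpha)$-quantile — the same argument applied to the absolute coverage gap upgrades~\eqref{eq:main-coverage-full} and~\eqref{eq:final-coverage-unconditional-full} to two-sided bounds.
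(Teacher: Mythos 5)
Your decomposition into the sample-average term (nonnegative by the definition of $\widehat r_\alpha$ in Algorithm~\ref{alg:influence-conformal-full}), the empirical-process fluctuation $\mathbf{I}$ bounded via Theorem~\ref{thm:convergence-if-function-whole}, and the product-bias term $\mathbf{II}$ bounded via Theorem~\ref{thm:product-bias}, followed by taking expectation for the unconditional statement, is exactly the argument the paper gives, which simply cites Lemma~\ref{lem:connection-if-coverage}, display~\eqref{eq:coverage-guarantee-split-IF}, and Theorem~\ref{thm:convergence-if-function-whole}. The proposal is correct and essentially identical to the paper's proof.
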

\begin{proof}
This is a direct result of Lemma \ref{lem:connection-if-coverage},~\eqref{eq:coverage-guarantee-split-IF}, and Theorem~\ref{thm:convergence-if-function-whole}.
\end{proof}

Comparing the results of Theorem \ref{thm:coverage-final-full} with Theorem \ref{thm:coverage-final} in terms of the slack in the miscoverage probability, we notice that the full data based prediction set can have larger miscoverage error than the split data version. However, in terms of how close $\widehat{r}_{\alpha}$ is to $r_{\alpha}$, we expect the full data version to perform better compared to the split data version. This is expected because with full data version the quantile estimator is based on $N$ observations rather than $|\mathcal{I}_2|$ observations which is smaller than $N$. The variance of the quantile estimator can be better up to a constant of $N/|\mathcal{I}_2|$. We do not pursue these variance comparisons for $\widehat{r}_{\alpha}$ as that is not our goal.   

\section{Simulation Studies}\label{sec:simulations}
In practice, unless $R(\cdot,\cdot)$ is a map that is independent of data, we further split $\mathcal{D}_1$ of Algorithm \ref{alg:influence-conformal-split} and use the first split to train $R(\cdot,\cdot)$, while the second split is used to estimate the two nuisance parameters $\pi^\star(\cdot)$ and $m^\star(\cdot,\cdot)$. We include the two algorithms in Algorithm \ref{alg:influence-conformal-split} and Algorithm \ref{alg:influence-conformal-full} under the name  ``DRP w. three splits" and ``DRP w. full data" respectively. We also include simulation results where $R$, $\widehat{m}$ and $\widehat{\pi}$ are trained on the same split and the remainder of the data is used for validation under the name ``DRP w. two splits" in Section \ref{sec:absolute-residuals} of the appendix. For both synthetic data and the real data set we use two kinds of score functions to estimate $R(\cdot,\cdot)$, 
\begin{enumerate}
    \item absolute residual score $|y - \widehat{\mu}(x)|$ where $\mu(x) := \E(Y|X=x)$ is estimated by ridge regression,
    \item conditional quantile regression \footnote{the cutting edge algorithm employed in \cite{lei2020conformal} paper};
\end{enumerate}
The nuisance parameters $\pi^\star$ and $m^\star(\cdot, \cdot)$ are estimated through SuperLearner\footnote{SuperLearner uses cross-validation to estimate the performance of multiple machine learning models and then creates an optimal weighted average of those models using the test data. This approach has been proven to be asymptotically as accurate as the best possible prediction algorithm that is tested. For details please refer to \cite{Polley2010SuperLI}.} that includes both RandomForest and generalized linear model (GLM).  To avoid numerical issues, 
we clip the propensity score at 0.99 to prevent $\widehat{\pi}$ from becoming unbounded. The proposed methods are compared against the weighted conformal prediction (WCP) method proposed in \cite{tibshirani2019conformal}. Note that in the weighted conformal prediction method, the prediction interval is given by
\begin{equation}
\widehat{C}_{n}(x)=\mu_{0}(x) \pm \text { Quantile }\biggl(1-\alpha ; \sum_{i=1}^{n} p_{i}^{w}(x) \delta_{\left|Y_{i}-\mu_{0}\left(X_{i}\right)\right|}+p_{n+1}^{w}(x) \delta_{\infty}\biggr),
\end{equation}
where $p^w(x)$ is a function which depends on the likelihood ratio between the two covariate distributions, or $\pi^\star(x)$.
Therefore, when the distribution shift is too ``large", i.e. $p^w_{n+1}(x)$ is larger than $\alpha$, the width becomes $\infty$. And indeed we observe infinite widths for this method over $50\%$ of the time across all the predicted points and the Monte Carlo replications on synthetic data and over $90\%$ of the time on real data. For illustrative purposes, we truncate the width for WCP at 10 and 50 on synthetic data and real data respectively, and demonstrate the mean width from 500 Monte Carlo simulations. 

It is shown in \cite{lei2020conformal} that the CQR score would guarantee asymptotic conditional coverage $\mathbb{P}\bigl(Y_f \in \widehat{C}_{N,\alpha} (X_f) | X_f = x_f \bigr)$,  we also conduct an experiment using this score for our method under the setting of Section \ref{sec:synthetic}, where we specify 200 test points of $X_f$ (generated from standard normal distributions $N(0,I_4)$) and for each test point, 100 $Y_f$'s following the distribution \eqref{eq:synthetic} are generated to test if they fall into the prediction sets trained by our method and WCP, where we report the average (reflected points) for each test point in Figure \ref{fig:cqr-conditional},which shows the coverage and width for each test point that is represented by the $L_2$ norm (the X axis). We also fit a smoothing spline (with default parameters of the R function 
\textbf{smooth.spline}) with these points.

\subsection{Real data}\label{sec:real}
We demonstrate the use of conformal prediction in the covariate shift setting in an empirical example. We re-analyze the data set used in \cite{tibshirani2019conformal} which is the airfoil data set from the UCI Machine Learning Repository which contains $N=1503$ observations of a response $Y$ (scaled sound pressure level of NASA airfoils), and a vector of covariates $X$ with $d=5$ dimension (log frequency, angle of attack, chord length, free-stream velocity, and suction side log displacement thickness). Label missingness was then generated as in \cite{tibshirani2019conformal} using a propensity score model analogous to the one specified in Section 7.2 below. 
\begin{figure}[!h]
    \centering
    \includegraphics[width=\textwidth]{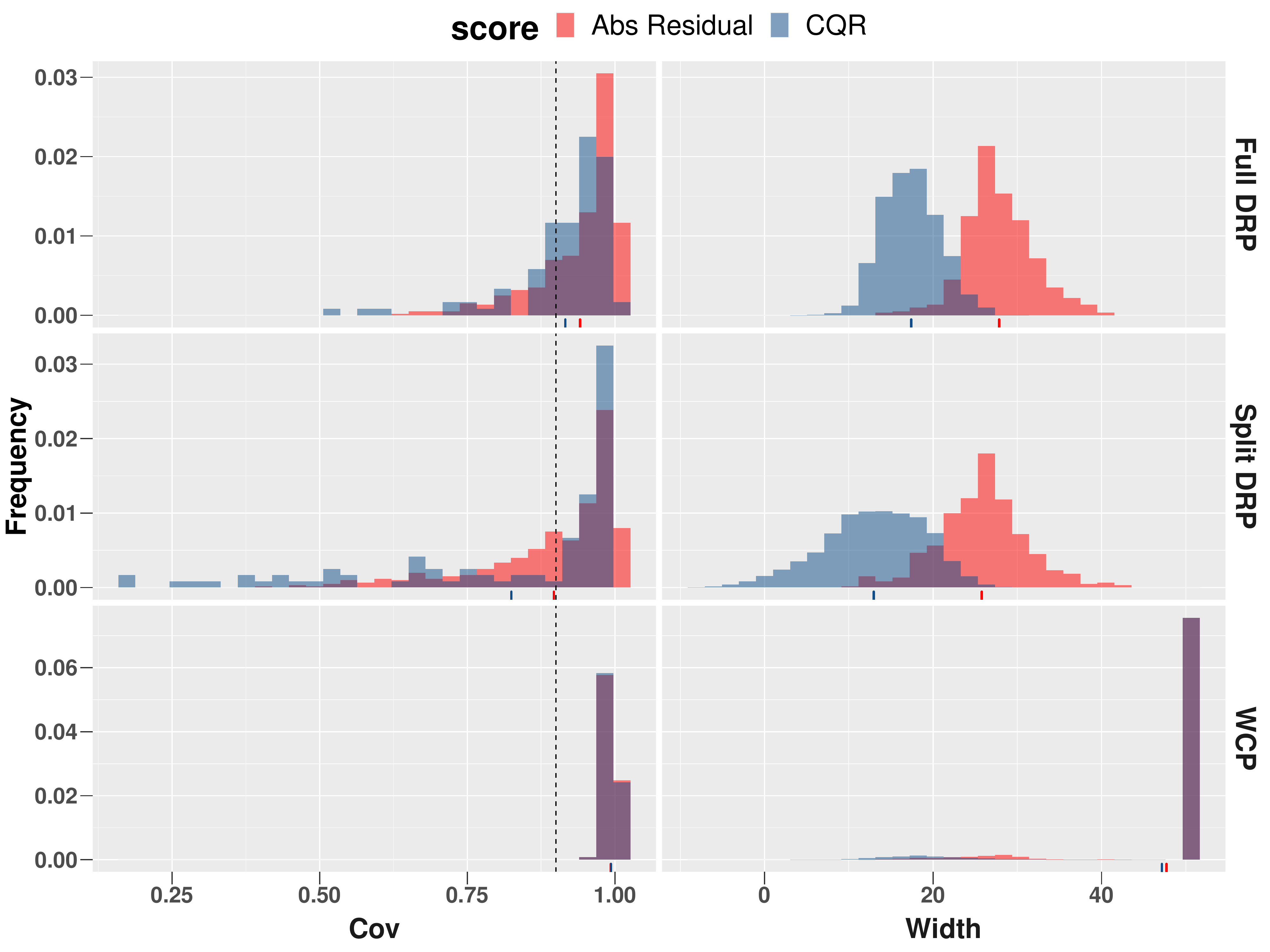}
    \caption{Histograms of coverage and width of Double Robust Prediction (DRP) and Weighted Conformal Prediction (WCP) on real data through either absolute residual score or the CQR score, where the width is truncated at 50 when WCP produces infinity width.}
    \label{fig:cqr-real-all}
\end{figure}


\begin{figure}[!h]
    \centering
    \includegraphics[width=\textwidth]{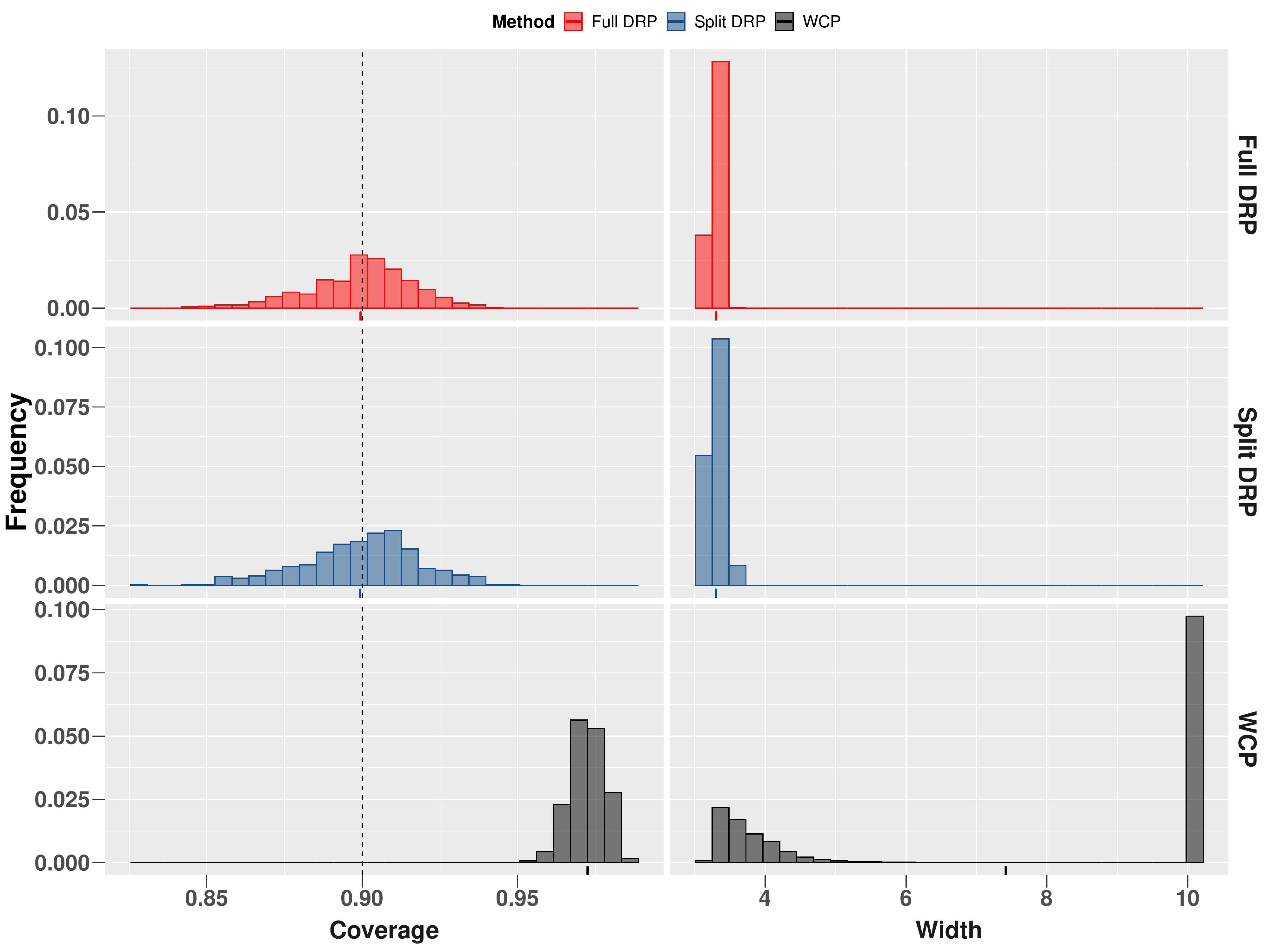}
    \caption{Histograms of coverage and width of Doubly Robust Prediction (DRP) and Weighted Conformal Prediction (WCP) on synthetic data using the absolute residual score. The width is truncated at 10 for WCP.}
    \label{fig:synthetic_new}
\end{figure}


\subsection{Synthetic data}\label{sec:synthetic}

Here we use the setting from \cite{kang2007demystifying} where for each unit $i=1, \ldots, n=2000$, suppose that $\left(x_{i 1}, x_{i 2}, x_{i 3}, x_{i 4}\right)^{\top}$ is independently distributed as $N(0, I_4)$ where $I_4$ is the $4 \times 4$ identity matrix. The $y_{i}$'s are generated as
\begin{equation}\label{eq:synthetic}
y_{i}=210+27.4 x_{i 1}+13.7 x_{i 2}+13.7 x_{i 3}+13.7 x_{i 4}+\varepsilon_{i},
\end{equation}
where $\varepsilon_{i} \sim N(0,1)$, and the true propensity scores are
$$
\PP(T=1 |x_i)=\operatorname{expit}\left(-x_{i 1}+0.5 x_{i 2}-0.25 x_{i 3}-0.1 x_{i 4}\right), \text{ where } \mathrm{expit}(x_i^\top \alpha) = \frac{\exp(x_i^\top \alpha)}{1+\exp(x_i^\top \alpha)}.
$$


\subsection{Simulation results}
The mean coverage and width from 500 Monte Carlo simulations using the absolute residual score are shown in Table \ref{tab:synthetic-real}, where the middle column corresponds to synthetic data results and the rightmost column to real data results. Figure \ref{fig:cqr-real-all}  display histograms of coverage and width from 500 runs on real data using either the absolute residual score and the CQR score while \ref{fig:synthetic_new} shows the use of absolute residual score. For synthetic data, we keep the results using the two scores separate, with the absolute residual score in Figure \ref{fig:synthetic_new} and the CQR score to Appendix \ref{sec:cqr-synthetic} for easier comparison between our DRP methods and the WCP method. 
Below we make a few observations on simulation results:
\begin{itemize}
 \item WCP produces wider width and therefore, tends to over-cover by a considerable amount (by more than $7\%$ over the nominal coverage of $90\%$).
 \item Doubly robust prediction with full data and multiple splits have similar performance with valid coverage. In practice, because using three splits guaranteed nominal coverage in sufficiently large sample with fewest assumptions, we recommend this approach as it does not appear to suffer much efficiency loss.
 \item For the conditional coverage simulation results described in the last paragraph at the start of Section \ref{sec:simulations}, we see that DRP has similar coverage and much smaller width compared to WCP. Note that they both attain desired coverage when the norm of the test data is less than 2, which is what the $L_2$ norm of a standard 4-dimensional normal r.v. concentrates to. As the norm gets past 2, there are less data points and hence we get fewer observations.
\end{itemize}

\begin{table}[H]
\centering
\begin{tabular}{c|cc|cc}
    Mean coverage and width& \multicolumn{2}{c|}{ Synthetic data} & \multicolumn{2}{c}{Real data} \\
    from 500 monte carlo runs & Coverage          & \multicolumn{1}{c|}{Width}                 & Coverage     & \multicolumn{1}{c}{Width}         \\ \hline\hline
DRP w. full data  & 0.90       & 3.29       & 0.94   & 27.85\\
DRP w. three splits  & 0.90       & 3.30       & 0.90   & 25.79\\
DRP w. two splits  & 0.90       & 3.29       & 0.88   & 25.19\\
WCP & 0.97     & 7.41       & 0.99   & 47.71 \\
\hline
\end{tabular}
\caption{Coverage and width of DRP and WCP on synthetic and real data. Clearly, DRP improves on WCP in terms of width while maintaining coverage close to the nominal level of 0.9.}\label{tab:synthetic-real}
\end{table}



\begin{figure}[!h]
    \centering
    \includegraphics[width=\textwidth]{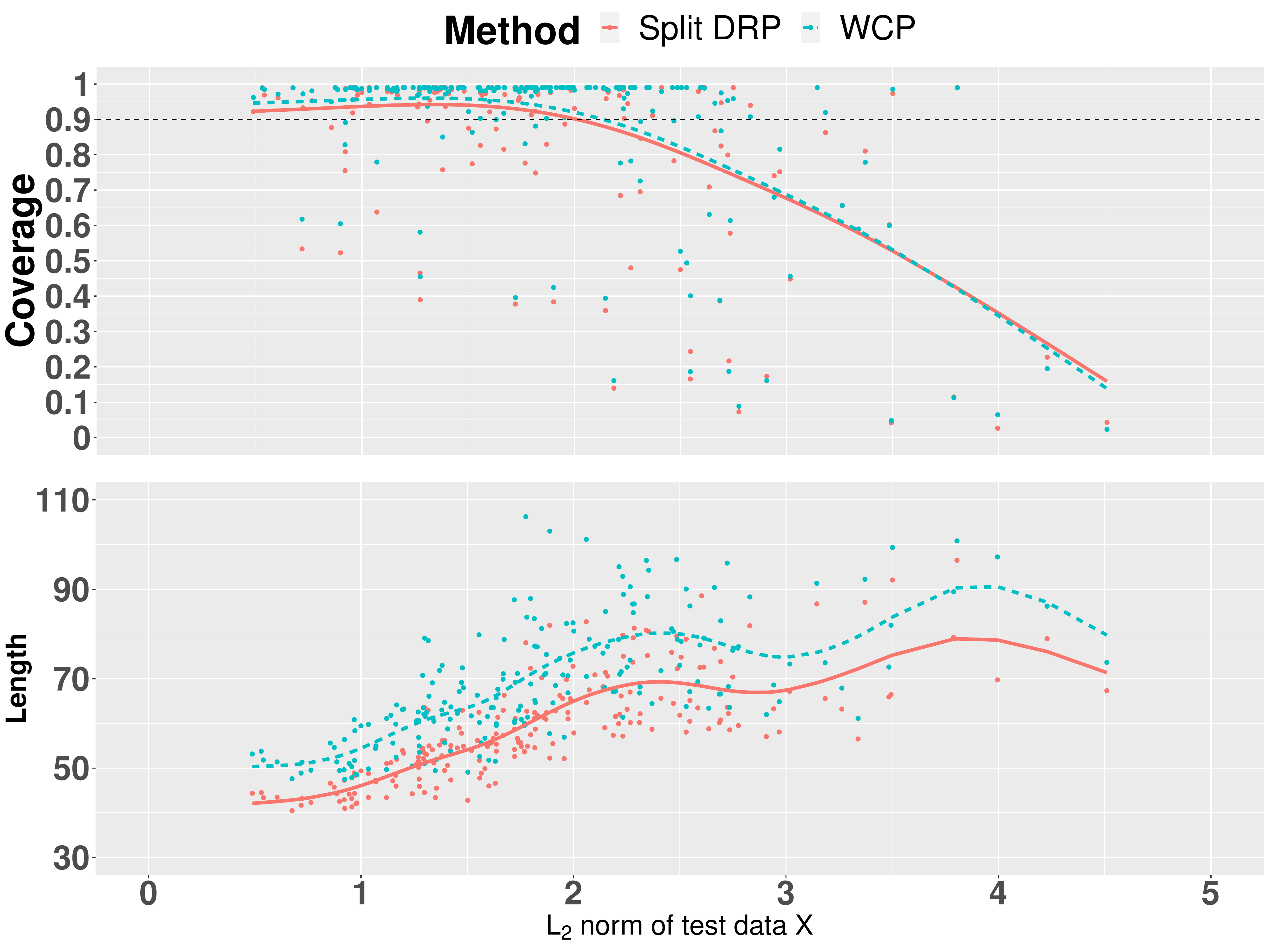}
    \caption{Estimated conditional coverage and length at 200 test points with Split DRP and WCP using the CQR score. The setting is described in the last paragraph at the start of Section \ref{sec:simulations}. The points are the average of the coverage and width from 100 Monte Carlo simulations and the lines are drawn by fitting a smoothing splines for those points.}
    \label{fig:cqr-conditional}
\end{figure}

\section{Aggregation of prediction sets}\label{sec:efcp}
When, as typically the case in practice, multiple training methods are available, we propose to combine the proposed approach with that of  \cite{yang2021finite} in order to construct a better prediction set with smaller width. The detailed procedure is given in Algorithm \ref{alg:influence-conformal-efcp} and can be shown by arguments given in \cite{yang2021finite}  to retain coverage validity while attaining the smallest width in large samples with high probability.   
\begin{algorithm}[h]\label{alg:efcp}
    \SetAlgoLined
    \SetEndCharOfAlgoLine{}
    \KwIn{Training data $\mathcal{D}^{\text{tr}} = \mathcal{D}^{\text{tr}}_P \cup \mathcal{D}^{\text{tr}}_Q$ split into $\mathcal{D}_1$ and $\mathcal{D}_2$, where $\mathcal{D}_1 = \{Z_i \in \mathcal{D}^{\text{tr}}, i \in \mathcal{I}_1\}$ and $\mathcal{D}_2 = \{Z_i \in \mathcal{D}^{\text{tr}}, i \in \mathcal{I}_2\}$; Estimators $\widehat{\pi}, \widehat{m}$,  and training methods $\mathcal{A}_k, k\in [K]$; The prediction point $x$.}
    \KwOut{A valid prediction set $\widehat{C}^{\mathrm{EFCP}}_{\alpha}(x)$ with smallest width.}
    Fit training methods $\mathcal{A}_1, \ldots, \mathcal{A}_K$ on $\mathcal{D}_1$ and for each fitted method $\mathcal{A}_k$, construct an increasing (nested) sequence of sets $\{\mathcal{F}_t^{(k)}\}_{t\in\mathcal{T}}$. Here $\mathcal{T}$ is a subset of $\mathbb{R}$.\;
     For each $i\in\mathcal{I}_2$ that satisfies $T_i=0$, define the conformal score
    \[
    r_k(Z_i) := \inf\{t\in\mathcal{T}:\,Z_i\in\mathcal{F}_t^{(k)}\}.
    \]\;
    \vspace{-1em}
    Fit estimators $\widehat{\pi},\widehat{m}$ on $\mathcal{D}_1$ and solve for $\hat{\theta} = \widehat{r}_{\alpha,k}$ as the solution to $\mathbb{P}_{\mathcal{I}_2} \big[\mathrm{IF}(\hat{\theta}, \mathcal{D}_2,r_k; \widehat{\pi}, \widehat{m}) \bigr] = 0$, where
    \begin{align*}
    \mathbb{P}_{\mathcal{I}_2}[\mathrm{IF}(\hat{\theta}, X, R, T; \widehat{\pi}, \widehat{m})] ~&=~ \frac{1}{|\mathcal{I}_2|}\sum_{i \in \mathcal{I}_2} \mathbbm{1}\{t_i=0\} \widehat{\pi}(x_i) \big[ \mathbbm{1}\{r_i \leq \hat{\theta} \} - \widehat{m}(\hat{\theta},x_i) \bigr]\\
    ~&\quad+~ \frac{1}{|\mathcal{I}_2|} \sum_{i \in \mathcal{I}_2} \mathbbm{1}\{t_i=1\}[\widehat{m}(\hat{\theta},x_i) - (1-\alpha)].
    \end{align*}
    Compute the corresponding conformal prediction set as
    $\widehat{C}_k(x) ~:=~ \{y:\, r_k(x,y) \le \widehat{r}_{\alpha,k}\}.$\;
    Set $$\widehat{k} := \argmin_{1\le k\le K}\,\mbox{Width}(\widehat{C}_k (x)).$$
    Here $\mbox{Width}(\cdot)$ can be any measure of width or volume of a prediction set. The quantity $\widehat{k}$ need not be unique and any minimizer can be chosen.\;
    \Return the prediction set $\widehat{C}_{\widehat{k}}$ as $\widehat{C}^{\mathrm{EFCP}}_{\alpha}$.
    \caption{Efficient doubly robust prediction}
    \label{alg:influence-conformal-efcp}
\end{algorithm}

Table \ref{tab:efcp} reports results from a simulation study comparing the proposed doubly robust prediction algorithm to select an optimal tuning parameter for ridge regression-based conformal score, with cross-validation aimed at minimizing the ridge regression MSE. The simulation results confirm the proposed algorithm's ability to preserve marginal prediction coverage while optimizing prediction interval width, both in synthetic and real data sets.  

\begin{table}[H]
\centering
\begin{tabular}{c|cc|cc}
    Mean coverage and width& \multicolumn{2}{c|}{ Synthetic data} & \multicolumn{2}{c}{Real data} \\
    from 500 monte carlo runs & Coverage          & \multicolumn{1}{c|}{Width}                 & Coverage     & \multicolumn{1}{c}{Width}         \\ \hline\hline
Efficient DRP  & 0.90       & 3.30       & 0.89   & 27.30\\
DRP w. CV  & 0.89       & 3.32       & 0.84   & 18.61\\
\hline
\end{tabular}
\caption{Coverage and width of efficient doubly robust prediction and doubly robust prediction with cross-validation on synthetic and real data. Efficient DRP improves on CV in terms of width while maintaining coverage close to the nominal level of 0.9.}
\label{tab:efcp}
\end{table}

\section{Sensitivity analysis for latent covariate shift}\label{sec:sensitivity}
Thus far, we have assumed that the covariate shift problem is primarily due to observed covariates, which we have denoted explainable covariate shift (or equivalently that outcomes for the target population are missing at random), an assumption that cannot be confirmed empirically without invoking a different non-testable assumption. In this section, we relax this assumption $P_{Y|X} = Q_{Y|X}$ and propose a sensitivity analysis for obtaining doubly robust calibrated prediction sets accounting for a latent covariate shift problem encoded in a sensitivity parameter. We define a sensitivity function as
\begin{align*}
\gamma^\star(x,y) ~=~ \log  \left(\frac{ P_{Y=y| X }   }{ Q_{Y=y| X }  } \bigg/ \frac{ P_{Y=y_0| X }  }{ Q_{Y=y_0| X }  }\right) , 
\end{align*}
where $y_0$ is any baseline value for $Y$. Here $\gamma^\star(x,y)$ is the sensitivity analysis function encoding a hypothetical departure from the assumption that $X$ suffices to account for the covariate shift problem with $\gamma^\star(x,y_0)=0$, and $\gamma^\star(x,y) = 0$ for all $y$ recovers the standard assumption of explainable covariate shift. For simplicity we take $y_0=0$. Our sensitivity analysis is inspired by a semiparametric approach for accounting for data missing not at random (MNAR), due to Chapter 5 of \cite{robins2000sensitivity} in the missing data literature.  Formally, the sensitivity function $\gamma^\star$ can be represented in the missing data notation as
\begin{align*}
\gamma^\star(x,y) ~=~ \log \frac{\PP(T=0|X=x,Y=y)\PP(T=1|X,Y=0)}{\PP(T=0|X,Y=0)\PP(T=1|X=x,Y=y)}.
\end{align*}
For any three functions $\eta(\cdot)$, $m(\cdot, \cdot)$, and $\gamma(\cdot, \cdot)$, let  
\begin{equation}\label{eq:if-sensitivity}
\begin{aligned}
\mathrm{IF}(\theta, x,y,r,t;\eta,m, \gamma) :
&= \mathbbm{1}\{t = 0\} \exp \bigl\{  - \eta(x) - \gamma(x,y)\bigr\}\Big[\mathbbm{1}\{r \le \theta\} -m(\theta,x) \Big] \\ 
&\quad+ \mathbbm{1}\{t = 1\}\Big[ m(\theta,x) - (1 - \alpha)\Big],
\end{aligned}
\end{equation}
where the two nuisance functions are
\begin{equation}
\begin{split}
\eta^\star(x) ~&:=~ \log \frac{\PP(T=0|X=x,Y=0)}{\PP(T=1|X=x,Y=0)}\\
m^\star(\theta,x) ~&:=~  \PP(R \leq \theta| X=x,T=1).
\end{split}
\end{equation}
Per the sensitivity analysis framework, we assume $\gamma^\star(x, y) = \gamma(x, y)$ is known.
See also Example 2 of \cite{tsiatis2014sensitivity} for some explanations on why this is a nonparametric identified model.
In this framework we have the following theorem.
\begin{thm}\label{thm:sensitivity-if}
Under the assumption that $\E  \frac{\PP^2(T=1|X,Y )}{\PP(T=0|X,Y)}$ is finite and that the density of the conditional distribution of $R|T=1$ at $r_\alpha$ is bounded away from zero, the efficient influence function of  $r_\alpha$, the $(1-\alpha)-$quantile of $R|T=1$ in the  nonparametric model which allows the distribution of $Z$ to remain unrestricted, and the conditional log odds ratio function relating $T$ to $R$ given $X$ is known to equal $\gamma^\star$ is given up to a proportionality constant by
\begin{equation}\label{eq:if-sensitivity-exact}
\begin{aligned}
\psi(z) = \mathrm{IF}(r_\alpha, x,y,r,t;\eta^\star,m^\star, \gamma^\star)&= \mathbbm{1}\{t = 0\} \exp \bigl\{  - \eta^\star(x) - \gamma^\star(x,y)\bigr\}\Big[\mathbbm{1}\{r \le r_\alpha\} -m^\star(r_\alpha,x) \Big] \\ 
&\quad+ \mathbbm{1}\{t = 1\}\Big[ m^\star(r_\alpha,x) - (1 - \alpha)\Big]
\end{aligned}
\end{equation}
Furthermore, the moment function $\mathrm{IF}(r_\alpha, x,y,r,t;\eta,m,\gamma^\star)$ satisfies the double robustness property that
\begin{equation}
    \E \bigl[ \mathrm{IF}(r_\alpha, x,y,r,t;\eta,m, \gamma^\star) \bigr] = 0, 
\end{equation}
if either $\eta = \eta^\star$ or $m=m^\star$. 
\end{thm}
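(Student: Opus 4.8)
The plan is to prove the two parts of Theorem~\ref{thm:sensitivity-if} — the form of the efficient influence function and the product-bias (double robustness) property — separately, treating the result as a ``tilted'' version of Lemma~\ref{lem:if}, which is recovered by setting $\gamma^\star\equiv 0$.

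\emph{Efficient influence function.} The first step is to record the identity forced by knowing $\gamma^\star$: from the definitions of $\eta^\star$ and $\gamma^\star$ one has $\eta^\star(x)+\gamma^\star(x,y) = \log\{\PP(T=0\mid x,y)/\PP(T=1\mid x,y)\}$, hence the pointwise relation $\PP(T=0\mid x,y)\exp\{-\gamma^\star(x,y)\} = \exp\{\eta^\star(x)\}\,\PP(T=1\mid x,y)$. Integrating, for every integrable $h$,
\begin{equation*}
\E\bigl[\mathbbm{1}\{T=0\}\exp\{-\gamma^\star(X,Y)\}\,h(X,Y)\,\big|\,X=x\bigr] \;=\; \exp\{\eta^\star(x)\}\;\E\bigl[\mathbbm{1}\{T=1\}\,h(X,Y)\,\big|\,X=x\bigr].
\end{equation*}
Choosing $h\equiv\exp\{-\eta^\star(X)\}\mathbbm{1}\{R\le\theta\}$ and averaging over $X$ gives the analogue of the weighted-conformal identity~\eqref{eq:Weighted-conformal-equation}, namely $\PP(R\le\theta,\,T=1) = \E[\mathbbm{1}\{T=0\}\exp\{-\eta^\star(X)-\gamma^\star(X,Y)\}\mathbbm{1}\{R\le\theta\}]$, so that $\Psi(\theta):=\PP(R\le\theta\mid T=1)$ is a functional of the observed-data law once $\gamma^\star$ is specified. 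I would then follow the derivation in Section~\ref{sec:if}, with $\exp\{-\eta^\star(x)\}$ playing the role of $\pi^\star(x)$ and the extra tilt $\exp\{-\gamma^\star(x,y)\}$ carried along: parametrize a regular one-dimensional submodel, note that fixing $\gamma^\star$ imposes \emph{no} restriction on the observed-data distribution — it only determines the unobserved conditional $p(y\mid x,T=1)$ as an exponential tilt of $p(y\mid x,T=0)$, cf.\ \cite{robins2000sensitivity} and \cite{tsiatis2014sensitivity} — so the observed-data tangent space is all of $L_2^0$ and any gradient is \emph{the} efficient influence function; then compute the pathwise derivative of $\Psi(\theta)$ and read off that a gradient of $\Psi(\theta)$ is proportional to $z\mapsto\mathbbm{1}\{t=0\}\exp\{-\eta^\star(x)-\gamma^\star(x,y)\}[\mathbbm{1}\{r\le\theta\}-m^\star(\theta,x)] + \mathbbm{1}\{t=1\}[m^\star(\theta,x)-\Psi(\theta)]$, with $m^\star(\theta,x)=\PP(R\le\theta\mid X=x,T=1)$. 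As in the MAR case this function depends on $y$ only through the factor $\mathbbm{1}\{t=0\}$, so it is a function of the observed data. Finally, since by hypothesis $\Psi$ is differentiable at $r_\alpha$ with derivative equal to the conditional density of $R\mid T=1$ at $r_\alpha$, which is bounded away from zero, the quantile $r_\alpha$ (defined by $\Psi(r_\alpha)=1-\alpha$) is pathwise differentiable, and by the standard implicit-function/delta-method argument for quantiles its EIF is $-1/\Psi'(r_\alpha)$ times the EIF of $\Psi$ evaluated at $\theta=r_\alpha$; discarding this proportionality constant and substituting $\Psi(r_\alpha)=1-\alpha$ yields~\eqref{eq:if-sensitivity-exact}. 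Square-integrability of $\psi$, hence regularity of the functional, is precisely what the assumption $\E[\PP^2(T=1\mid X,Y)/\PP(T=0\mid X,Y)]<\infty$ guarantees.

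\emph{Double robustness.} For the product-bias property I would compute $\E[\mathrm{IF}(r_\alpha,X,Y,R,T;\eta,m,\gamma^\star)]$ directly, splitting into the $T=0$ and $T=1$ contributions. Applying the change-of-measure identity with $h(X,Y)=\exp\{-\eta(X)\}\{\mathbbm{1}\{R\le r_\alpha\}-m(r_\alpha,X)\}$ collapses the $T=0$ term to $\E[\exp\{\eta^\star(X)-\eta(X)\}\mathbbm{1}\{T=1\}\{\mathbbm{1}\{R\le r_\alpha\}-m(r_\alpha,X)\}]$. If $\eta\equiv\eta^\star$, this equals $\PP(R\le r_\alpha,T=1)-\E[\mathbbm{1}\{T=1\}m(r_\alpha,X)]$, which cancels the $T=1$ term $\E[\mathbbm{1}\{T=1\}m(r_\alpha,X)]-(1-\alpha)\PP(T=1)$ down to $\PP(R\le r_\alpha,T=1)-(1-\alpha)\PP(T=1)$, zero by the definition of $r_\alpha$. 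If instead $m\equiv m^\star$, then $\E[\mathbbm{1}\{R\le r_\alpha\}-m^\star(r_\alpha,X)\mid X,T=1]=0$ makes the $T=0$ term vanish, while the $T=1$ term is $\E[\mathbbm{1}\{T=1\}m^\star(r_\alpha,X)]-(1-\alpha)\PP(T=1)=\PP(R\le r_\alpha,T=1)-(1-\alpha)\PP(T=1)=0$ by the tower property together with the definition of $r_\alpha$. In both cases the expectation is zero. As a sanity check, $\gamma^\star\equiv 0$ gives $\exp\{-\eta^\star(x)\}=\pi^\star(x)$ and $m^\star(\theta,x)=\PP(R\le\theta\mid X=x)$, so~\eqref{eq:if-sensitivity-exact} and this computation reduce to~\eqref{eq:if} and~\eqref{eq:double-robustness}.

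\emph{Anticipated obstacle.} The double-robustness calculation is routine once the change-of-measure identity is in hand. The delicate points are in the first part: justifying carefully that fixing $\gamma^\star$ leaves the observed-data model locally nonparametric — so that a unique gradient exists and is automatically efficient — and passing from the EIF of the non-smooth distribution-function functional $\Psi$ to that of the quantile $r_\alpha$, which is exactly where the assumption that the conditional density of $R\mid T=1$ is bounded away from zero at $r_\alpha$ enters. Both steps are standard in form but require care in tracking which conditional laws are identified directly from the observed data and which are identified only through the known tilt $\gamma^\star$.
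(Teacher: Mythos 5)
Your proposal is correct and matches the paper's strategy: the key step in both is the change-of-measure identity $\E[\mathbbm{1}\{T=0\}e^{-\gamma^\star(X,Y)}h(X,Y)\mid X]=e^{\eta^\star(X)}\E[\mathbbm{1}\{T=1\}h(X,Y)\mid X]$, and your two double-robustness cases reproduce the paper's computations (yours is in fact cleaner — the paper's $\eta$-correct case detours through $\PP(R\le\theta\mid X)$ before collapsing back to $\PP(R\le\theta,T=1)$, which is an unnecessary step). Where you go slightly beyond the paper is on the efficiency claim: the paper proves the double-robustness property and then simply asserts that Neyman orthogonality, together with the nonparametric observed-data model, ``concludes our claim'' that $\mathrm{IF}$ is the EIF, pointing implicitly at the derivation in Section~\ref{sec:if}. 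You instead propose to carry out that pathwise-differentiability derivation explicitly in the tilted model — parametrize a regular submodel, note the observed-data tangent space is all of $L_2^0$ because $\gamma^\star$ does not restrict the observed law, compute the gradient of $\Psi(\theta)$, and pass to the quantile via the implicit-function argument, which is exactly where the bounded-density and finiteness assumptions enter. This is the rigorous version of the step the paper handles informally, so your plan is if anything more complete, not a different route.
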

\begin{proof}
See Section~\ref{sec:proof-of-sensitivity-if} for a proof.
\end{proof}

Theorem \ref{thm:sensitivity-if} gives the efficient influence function of $r_{\alpha}$, which provides a moment equation for $r_{\alpha}$. And this can be leveraged as an estimating equation similar to Lemma \ref{lem:if} and Theorem \ref{thm:coverage-final} to yield a valid prediction set with the product bias from estimating the nuisance functions. In this case, estimation of nuisance functions is not straightforward and not further pursued in this paper. 
We also include a review of existing literature on this topic. The recent paper \cite{jin2021sensitivity} generalized the covariate shift setting to distributional shift where the joint distribution of covariates and response can be different between the test and training data. The motivation is sensitivity analysis for individual treatment effects. 
They proposed a robust conformal prediction algorithm that builds upon the weighted conformal inference method from \cite{tibshirani2019conformal}. The paper derives prediction sets that achieve marginal coverage if the propensity score is known exactly, allowing for latent covariate shift of a magnitude bounded by a known sensitivity parameter. 
They also proposed a second algorithm 
where the coverage is attained with high probability $1-\delta$ conditional on the training data, and showed their results provide tight prediction sets in some cases that cannot be improved upon under their assumptions. The coverage bias of both methods is first order.
Both their methods require the test point to be specified in advance so that they can compute a different quantile for every new $x$ and thus may be computationally intensive.  
In terms of sensitivity analysis for unmeasured confounding, we note that their paper considers a different sensitivity framework than ours,  as they posit the existence of an unobserved confounder $U$ which together with $X$ completely accounts for confounding. They encode the magnitude of unmeasured confounding in terms of upper and lower bounds 
for the likelihood ratio of density of $U$ in the treated and untreated samples over the support of $U$. Thus, their sensitivity parameter appears to capture both the extent of residual confounding, but also reflects aspects of the density of $U$ that may not be of scientific interest, and for which an investigator may not have any prior information. An implication of this choice of parameterization is that for a given sensitivity bound, the bound may reflect small amount of confounding over a wide support of $U$, or a large amount of confounding, over a narrow support of $U$, rendering such sensitivity analysis difficult to interpret. In contrast, we 
 favor the approach of \cite{robins2000sensitivity} with a more direct sensitivity analysis, which in the counterfactual setting, encodes the departure from unconfoundedness in terms of a likelihood ratio for the counterfactual outcome in view of the treated and control arms conditional on observed covariates. 
\cite{yin2021conformal} also studied the sensitivity analysis of ITE using a similar approach as the first method proposed in \cite{jin2021sensitivity} while their method of analysis offers a different perspective.

\section{Discussion }\label{sec:discussion}
This paper has proposed three separate algorithms to construct prediction regions which are adaptive to unknown covariate shift between a population from which labeled data are available and an unlabeled population for which outcome prediction is in view. Our three proposed methods have been described as ``Split doubly robust prediction", ``Full doubly robust prediction" and ``Efficient doubly robust prediction". The paper provided a rigorous analysis of the coverage properties of these algorithms, notably establishing that all have coverage bias of a product form and providing formal conditions under which all are asymptotically well-calibrated, in the sense of attaining the nominal coverage rate in large samples. ``Split doubly robust prediction"  has coverage guarantees in large samples under minimal conditions, but requires one to use a non-negligible subset of the data for training; in contrast, ``Full doubly robust prediction" uses the entire data set both for training and prediction, and attains the nominal coverage in large samples under relatively stronger conditions. ``Efficient doubly robust prediction" combines ``Split doubly robust prediction" and the EFCP algorithm from \cite{yang2021finite}, which is empirically shown to potentially outperform a standard cross-validation approach. We conjecture that the proposed efficient doubly robust prediction algorithm is nearly as efficient as an oracle with a priori knowledge of the optimal prediction interval, although formally proving this result is left to future work. 
An important advantage of our framework is that its large sample efficiency and validity guarantees hold for any collection of machine learning techniques and their respective tuning parameters, under the relatively mild requirement that at least one of two estimated nuisance functions is consistent, without necessarily requiring fast convergence rates for the latter. 

Another important contribution of the paper is to draw upon a key equivalence between the explainable covariate shift problem, the MAR assumption in the missing data literature, and the notion of unconfoundedness in the causal inference literature, to develop a sensitivity analysis approach to evaluate the extent to which prediction regions might be impacted by hypothetical departures from this assumption. Notably, the proposed methods readily extend to accommodate such sensitivity analysis via a slight modification of our procedure to incorporate a sensitivity parameter, without compromising the product bias or double robustness property of the approach. Fully developing prediction inference for this sensitivity analysis framework however requires care in estimation of nuisance functions which, due to space limitation, we plan to consider in future work.
Overall, this paper reveals and leverages deep connections between modern literatures of semiparametric theory, missing data and causal inference, and emerging methods for well-calibrated prediction inference. To the best of our knowledge, such connections have previously not been drawn upon as deliberately as shown to be possible in this work which we hope will generate both interest and further developments towards even more robust and efficient well-calibrated prediction. One possible line of future investigation might be to build on recent theory of higher order influence functions due to Robins and colleagues, see e.g. \cite{robins2008higher} and \cite{robins2017minimax}, which in principle could be used to reduce the second order product bias obtained in this paper to a higher order product bias, therefore potentially improving finite sample coverage over a wider range of regimes.

\bibliographystyle{plainnat}
\bibliography{ref}

\newpage
\setcounter{section}{0}
\setcounter{equation}{0}
\setcounter{figure}{0}
\renewcommand{\thesection}{S.\arabic{section}}
\renewcommand{\theequation}{E.\arabic{equation}}
\renewcommand{\thefigure}{A.\arabic{figure}}
\setcounter{page}{1}
  \begin{center}
  \Large {\bf Supplement to ``Doubly robust calibration of prediction sets under covariate shift''}
  \end{center}
       
\begin{abstract}
This supplement contains the proofs to all the main results in the paper and some supporting lemmas. 
\end{abstract}

\section{A review of conformal inference with non-exchangeable data}\label{sec:non-exchangeable}
\cite{politis2015model} considered a transformation based approach that transforms non-i.i.d. data to i.i.d. data and applies i.i.d. data prediction sets on the transformations. For example, in an AR(1) model, transform the observed data to obtain estimated innovations which are assumed i.i.d. and predict the future innovation, then observed data forecast is obtained using the AR(1) model estimates and the prediction of future innovation. \cite{chernozhukov2018exact} dealt with time series data where they developed a randomization method by including block structure in the permutation scheme and showed asymptotic validity under some modeling assumptions on the conformity score when exchangeability fails, see also \cite{chernozhukov2021distributional} and \cite{chernozhukov2021exact} from the same authors under more general settings. \cite{cattaneo2021prediction} recently described a different approach to obtain prediction regions with time series data in a synthetic control framework. Recently,~\cite{oliveira2022split} proved that the split conformal prediction methodology retains asymptotic coverage guarantee for several dependent data settings.

In a separate strand of work, \cite{gibbs2021adaptive} developed an adaptive approach based on ideas from conformal inference that builds  predictions sets in an online setting where the data generating distribution is allowed to vary over time and established coverage validity over the long term; see also  follow-up work in \cite{zaffran2022adaptive} for an adaptive, tuning-free method. These works partly combine ideas from online learning and sequential prediction literature.

 \cite{candes2021conformalized} developed a method for survival analysis subject to administrative censoring that has approximate marginal coverage if the censoring mechanism or the conditional survival function is estimated well, and \cite{teng2021t} focused on a similar censoring scenario under a Cox proportional hazards model under the strong ignorability condition. 
A recent paper \cite{barber2022conformal} designed a new technique for non-exchangeable data that does not treat data points symmetrically and is robust against distribution shift. In almost all of these cases, the coverage guarantee is attained asymptotically as the number of training samples diverges to infinity.

\section{Connection to Causal Inference}\label{sec:notation-causal}
In this section, we briefly discuss how the goal of prediction in the covariate shift setting connects with that of prediction of counterfactuals and individual treatment effects in a potential outcome framework~\citep{rubin1974estimating,splawa1990application}. \cite{lei2020conformal} were the  first to formally draw this connection. Given $N$ subjects, let $A_{i} \in\{0,1\}$ denote a binary treatment indicator, 
$(Y_{i}(1), Y_{i}(0) )$ be the pair of potential outcomes for unit $i$, and $X_{i}$ be the corresponding vector of measured covariates needed to control of confounding. We assume that
$$
\bigl( Y_{i}(1), Y_{i}(0), A_{i}, X_{i} \bigr) \stackrel{\text { i.i.d. }}{\sim}(Y(1), Y(0), A, X),
$$
where $(Y(1), Y(0), A, X)$ is a random vector. Under the stable unit treatment value assumption (SUTVA) commonly assumed in the literature (see e.g. \cite{rubin1990formal}), the observed dataset consists of triples $(Y_{i}^{\text {obs}}, A_{i}, X_{i} )$ where
$$
Y_{i}^{\mathrm{obs}}=
\begin{cases}
Y_i(1), & A_i = 1,\\
Y_i(0), & A_i = 0.
\end{cases}
$$
For regularity, we assume the distributions of $X | A=1$ and $X| A=0$ are absolutely continuous with respect to each other.
In this counterfactual setting, we wish to predict the individual treatment effect (ITE) $\tau_{i}:= Y_{i}(1)-Y_{i}(0)$, which cannot be observed because for every unit only one potential outcome is observed while the other is missing. In order to obtain such prediction, we make the standard unconfoundedness assumption (also known as strong ignorability condition) that $(Y(1), Y(0) ) \perp A \,|\, X$, the counterfactual analog of MAR. Our approach thus yields prediction intervals for ITE with valid coverage for subjects in the study, for whom one potential outcome is observed; below, we briefly also discuss how one might obtain prediction intervals for subjects not in the study, for whom both potential outcomes are missing, but a covariate $X$ is available.

For any treated unit $i$ in the study, i.e. with $A_i = 1$,  we construct a prediction interval $\widehat{C}^{\mathrm{ITE}}_i$ for $\tau_i$ such that $\widehat{C}^{\mathrm{ITE}}_i  = Y_{i}^{\mathrm{obs}} - \widehat{C}_0(X_i)$, where $\widehat{C}_0(x)$ satisfies 
\begin{equation}\label{eq:treated}
\mathbb{P}\bigl(Y(0) \in \hat{C}_{0}(X) \mid A=1\bigr) \geq 1-\alpha;
\end{equation}
Similarly, for any untreated unit, we construct $\widehat{C}^{\mathrm{ITE}}_i  = \widehat{C}_1(X_i) - Y_{i}^{\mathrm{obs}}$, where $\widehat{C}_1(x)$ satisfies
\begin{equation}\label{eq:untreated}
\mathbb{P}\bigl(Y(1) \in \hat{C}_
{1}(X) \mid A=0\bigr) \geq 1-\alpha.
\end{equation}
Thus, such construction has a guaranteed coverage for $\tau_i$ because 
\begin{align*}
    \PP \bigl(Y_i (1) - Y_i (0) \in   \widehat{C}^{\mathrm{ITE}}_i  \bigr) & = \PP(A_i =1) \PP(Y_i (0) \in \widehat{C}_0 (X_i) | A_i = 1 ) + \PP(A_i =0) \PP(Y_i (1) \in \widehat{C}_1 (X_i) | A_i = 0 ) \\
    & \geq (1-\alpha) \bigl(\PP(A_i = 1) + \PP(A_i = 0) \bigr) = 1- \alpha, \quad \text{ if } \eqref{eq:treated} \text{ and } \eqref{eq:untreated} \text{ both hold.}
\end{align*}
A more general goal than \eqref{eq:treated} would be to build a prediction set $\widehat{C}_0$ such that
\begin{equation}\label{eq:causal-general-q}
\mathbb{P}_{(X, Y(0)) \sim Q_{X} \times P_{Y(0) \mid X}}\left(Y(0) \in \hat{C}_{0}(X)\right) \geq 1-\alpha,
\end{equation}
where $Q_X$ is some distribution for $X$. Note that \eqref{eq:treated} can be seen as a special case of \eqref{eq:causal-general-q} with $Q_X = P_{X|A=1}$. Based on the untreated samples, we learn the distribution of $Y(0) \mid X, A= 0$. And by the unconfoundedness assumption, this has the same distribution as $Y(0) \mid X, A= 1$, and also $Y(0) \mid X$.

A more challenging goal potentially of interest in many settings, might be to obtain prediction sets for future subjects not in the study, for whom neither $Y_i(1)$ nor $Y_i(0)$ is observed, but $X_i$ is available. Without any knowledge of the relationship between the distributions of $Y(1)$ and $Y(0)$, a simple approach would be to obtain a pair of prediction intervals at level $1- \alpha/2$, namely $(\widehat{Y}^L_1(x), \widehat{Y}^R_1(x))$ for $Y(1)$ and $(\widehat{Y}^L_0(x), \widehat{Y}^R_0(x))$ for $Y(0)$. Then taking the difference of the two sets such that $\widehat{C}^\mathrm{ITE} = \bigl( \widehat{Y}^L_1(x) - \widehat{Y}^R_0(x), \widehat{Y}^R_1(x) - \widehat{Y}^L_0(x) \bigr)$ would in principle yield a valid $(1-\alpha)$--coverage for $Y(1) - Y(0)$. See, e.g. Section 4.1 of \cite{lei2020conformal}. We also note that \cite{fan2007usharp} has provided sharp bounds on the distribution of the treatment effect which might aid in building a more precise prediction set. 


\section{Semiparametric theory and influence functions}\label{sec:if}
In this section, we provide the definition of an influence function in the literature of semi-parametrics theory and give the derivation of \eqref{eq:if}.

\begin{definition}\label{def:if} Given a semiparametric model $\mathcal{F}$, a law ${F}^{*}$ in $\mathcal{F}$, and a class $\mathcal{A}$ of reg. parametric submodels of $\mathcal{F}$, a real valued functional
$$
\theta: \mathcal{F} \rightarrow \mathbb{R}
$$
is said to be a {\bf pathwise differentiable} or regular parameter at F* wrt $\mathcal{A}$ in model $\mathcal{F}$ iff there exists $\psi_{{F}^ *}({x})$ in $\mathcal{L}_{2}(\dot{F}^{*})$ such that for each submodel in $\mathcal{A}$, say indexed by $t$ and with ${F}^{*}={F}_{t^{*}}$, and score, say $S_{t}\left(t^{*}\right)=s_{t}\left(X ; t^{*}\right)$ at $t^{*}$, it holds that
$$
\left.\frac{\partial}{\partial t} \theta\left(F_{t}\right)\right|_{t=t^{*}}=\E_{F^{*}}\left[\psi_{F^{*}}(X) S_{t}\left(t^{*}\right)\right]
$$
$\psi_{{F}^*}(.)$ is called a {\bf{gradient}} of $\theta$ at ${F}^{*}$ (wrt $\left.\mathcal{A}\right)$.
If, in addition, $\psi_{{F}^*}( {X})$ has mean zero under $\mathrm{F}^{*}, \psi_{{F}^*}( {X})$ is most commonly referred to as an {\bf{influence function}} of the functional $\theta$ at $F^{*}$.
\end{definition}

In our case of finding the $(1-\alpha)$-quantile for $R|T=1$, let $t$ denote the index for the parametric submodels, $\theta(t)$ be the desired $(1-\alpha)$-quantile for $R|T=1$ with $\theta(t^*):=r_\alpha$. Let $u(R; \theta):= \mathbbm{1}(R \leq \theta | T=1)-(1-\alpha)$ with $\theta(t^*)$ satisfies that $\E_{t}[u(R; \theta(t))] =0$.
Then, 
\begin{align}
    0 &= \frac{\partial}{\partial t} \E_{t}\bigl[u(R;\theta(t)) \bigr] \bigr|_{t=t^*}\\
    &=\frac{\partial}{\partial t} \E_{t}\bigl[u(R;\theta(t^*))\bigr] \bigr|_{t=t^*} + \frac{\partial}{\partial \theta} \E_{t^*}[u(R;\theta)] |_{\theta=\theta(t^*)} \frac{\partial \theta(t)}{\partial t} \biggr|_{t=t^*}.
\end{align}
From this we conclude that 
\begin{align}
 \frac{\partial  \theta(t)}{\partial t} \, \biggr|_{t = t^*} &= -\biggl\{ \frac{\partial}{\partial \theta} \E_{t^*}  \bigl[ u(R;\theta) \bigr] \bigr|_{\theta=\theta(t^*)} \biggr\}^{-1} \frac{\partial}{\partial t} \E_{t}\bigl[u(R; \theta(t^*))\bigr] \bigr|_{t=t^*}\\
  &\propto  \frac{\partial}{\partial t} \E_t \biggl\{ \E_t \Big( \mathbbm{1}\{ R \leq \theta \bigr\} \mid T=1,X \Bigr) \Bigr| T=1 \biggr\} \, \biggr|_{t^*}  \\
  &= \frac{\partial}{\partial t} \E_t \biggl\{ \E_t \Big( \mathbbm{1}\{ R \leq \theta \bigr\} \mid T=0,X \Bigr) \Bigr| T=1 \biggr\} \, \biggr|_{t^*}  \\
     &=  \frac{\partial}{\partial t} \E_t \biggl\{ \E_{t^*} \Bigl(   \mathbbm{1}\{ R \leq \theta \bigr\} \mid T=0,X \Bigr) \Bigm| T=1  \biggr\} \, \biggr|_{t^*} \\
    & \qquad  +\frac{\partial}{\partial t} \E_{t^*} \biggl\{ \E_{t} \Bigl(   \mathbbm{1}\{ R \leq \theta \bigr\} \mid T=0,X \Bigr) \Bigm| T=1  \biggr\} \, \biggr|_{t^*} \\
    &=: \rom{1} + \rom{2}.
\end{align}

Let $S(Z):=S((1-T)Y, T,X)=\mathbbm{1}\{T=0\} S(Y|T=0,X)+S(T|X)+S(X)$ be the score vector for all the observed data. We analyze the two terms separately. For the first term, because $R \perp T | X$, we have that
\begin{align}
    \rom{1}&= \E_{t^*} \biggl[   E_{t^*} \bigl( \mathbbm{1}\{ R \leq \theta \bigr\} \mid T=0,X \bigr) S_{X|T=1 } \Bigm| T=1  \biggr]\\
    &= \E_{t^*} \biggl[  \frac{ \mathbbm{1}\{T=1\} }{\PP(T=1)}    E_{t^*} \bigl( \mathbbm{1}\{ R \leq \theta \bigr\} \mid T=0,X \bigr) S_{X|T=1 }  \biggr]\\
        &= \E_{t^*} \biggl[  \frac{ \mathbbm{1}\{T=1\} }{\PP(T=1)}    \PP( R \leq \theta \mid X ) S_{X|T }  \biggr]\\
                &= \E_{t^*} \biggl[  \frac{ \mathbbm{1}\{T=1\} }{\PP(T=1)}    \big\{ \PP( R \leq \theta \mid X ) - \E( \PP( R \leq \theta \mid X ) | T)  \bigr\} S_{X|T }  \biggr]\\
                &= \E_{t^*} \biggl[  \frac{ \mathbbm{1}\{T=1\} }{\PP(T=1)}    \big\{ \PP( R \leq \theta \mid X ) - \E( \PP( R \leq \theta \mid X ) | T)  \bigr\} S_{X,T }  \biggr]\\
                 &= \E_{t^*} \biggl[  \frac{ \mathbbm{1}\{T=1\} }{\PP(T=1)}    \big\{ \PP( R \leq \theta \mid X ) - \E( \PP( R \leq \theta \mid X ) | T)  \bigr\} S_{Z }  \biggr]\\
    &= \E_{t^*} \biggl[  \mathbbm{1}\{T=1\}  \Bigl\{ \E_{t^*} \bigl( \mathbbm{1}\{ R \leq \theta \bigr\} \mid X \bigr) - (1-\alpha)  \Bigr\} S(Z)  \biggr]/\PP(T=1),
\end{align}
For the second term, for brevity let $A:=\mathbbm{1}\{R \leq \theta\}$,
\begin{equation*}
\begin{aligned}
\rom{2} &= \frac{\partial}{\partial t} \E_{t^*} \biggl\{  \frac{ \mathbbm{1}\{T=1\} }{\PP(T=1)}   \E_{t} \Bigl(   \mathbbm{1}\{ R \leq \theta \bigr\} \mid T=0,X \Bigr)   \biggr\} \, \biggr|_{t^*}\\
&= \E_{t^*} \biggl\{  \frac{ \mathbbm{1}\{T=1\} }{\PP(T=1)}   \E_{t^*} \Bigl(  A S_{A | T=0,X}\mid T=0,X \Bigr)   \biggr\} \\
&= \E_{t^*} \biggl\{  \frac{ \mathbbm{1}\{T=1\} }{\PP(T=1)}   \E_{t^*} \Bigl(  A S_{A | T=0,X}(A) \Bigr| T=0,X \Bigr)   \biggr\} \, \\
&= \E_{t^*} \biggl[ \E_{t^*} \biggl\{ A  S_{A|T=0,X} (A,T=0,X) \Bigr| T=0,X  \biggr\}  \biggm| T=1      \biggr]\\
&= \E_{t^*} \biggl[ \E_{t^*} \biggl\{ \frac{A \mathbbm{1}\{T=0\} }{\PP(T=0 | X)}S_{A|T=0,X} (A,T=0,X) \Bigl| X  \biggr\}  \biggm| T=1     \biggr]\\
&= \E_{t^*} \biggl[ \E_{t^*} \biggl\{ \frac{A \mathbbm{1}\{T=0\} }{\PP(T=0 | X)}S_{A|T,X} (A,T,X) \Bigl| X  \biggr\}  \biggm| T=1     \biggr]\\
&= \int \int \int \frac{A \mathbbm{1}\{T=0\} }{\PP(T=0 | X=x)}S_{A|T,x} (r,T,X) f(r |T,x) f(T|x) f(x|T=1) \mathrm{d}r \mathrm{d}x \mathrm{d}T   \\
&=  \int \int \int \frac{A \mathbbm{1}\{T=0\} \PP(T=1 | X=x) }{\PP(T=1)\PP(T=0 | X=x)}S_{A|T,X} (r,T,X) f(r |T,X)  f(T|x) f(x) \mathrm{d}r \mathrm{d}x \mathrm{d}T  \\
&=  \E\Bigl[A \mathbbm{1}\{T=0\}\pi^{\star}(X) S_{A|T,X} (r,T,X) \Bigr] /\PP(T=1) \\
&=  \E\Bigl[ (A - \E(A|T=0,X)) \mathbbm{1}\{T=0\}\pi^{\star}(X)  S_{A|T,X} (R,T,X) \Bigr] /\PP(T=1) \\
&=\E\Bigl[ \bigl(\mathbbm{1} \{R \leq \theta \} - m^{\star}(t, X) \bigr) \mathbbm{1}\{T=0\}\pi^{\star}(X)  \bigl\{  \mathbbm{1}\{T=0\} S_{A|T,X} (R,T,X) + S_{T,X}(T,X)  \bigr\} \Bigr]/\PP(T=1) \\
&=\E\Bigl[ \bigl(\mathbbm{1} \{R \leq \theta \} - m^{\star}(t, X) \bigr) \mathbbm{1}\{T=0\} \pi^{\star}(X) S(Z) \Bigr] /\PP(T=1).
\end{aligned}
\end{equation*}
Combining the two terms together gives us 
\begin{align}
     \frac{\partial }{\partial t} \theta(t) \, \biggr|_{t^*}
     &\propto \E_{t^*} \biggl[ \Bigl\{ \mathbbm{1}\{T=1\}  \Bigl( m^{\star}(\theta, X) - (1-\alpha) \Bigr)  \\
     &\,\,+ \mathbbm{1}\{T=0\}\pi^\star(x)  \Bigl(\mathbbm{1} \{R \leq \theta \} - m^{\star}(\theta, X) \Bigr)\Bigr\}    S(Z)) \biggr].
\end{align}
We therefore conclude that the following function $\phi(\cdot)$ is proportional to a mean zero gradient of $\theta(t)$ at $t^*$,
\begin{equation}
\begin{aligned}
   \phi(\theta, X,R,T; \pi^{\star}, m^{\star})
   &= \mathbbm{1}\{T = 0\}\pi^{\star}(X)\Big[\mathbbm{1}\{R \le \theta\} - m^{\star}(\theta, X)\Big] + \mathbbm{1}\{T=1\}  \Big[m^{\star}(\theta, X) - (1 - \alpha)\Big]. 
\end{aligned}
\end{equation}

\section{Proof of Lemma~\ref{lem:connection-if-coverage}}\label{sec:proof-of-connection-if-coverage}
\begin{proof}[Proof of Lemma~\ref{lem:connection-if-coverage}]
Throughout the proof, we will write $R$ instead of $R(X, Y)$ for convenience.
Firstly, note that
\begin{equation}\label{eq:coverage}
\begin{split}
\mathbb{P}_{(X, Y) \sim Q_{X} \times P_{Y \mid X}}( R(X,Y) \leq \theta \big|\theta) &= \E_{X \sim Q_X} \bigl[ \PP(R \leq \theta \mid X, \theta)\big|\theta\bigr]\\
&=\E_{X \sim Q_X} \bigl[ \PP(R \leq \theta  \mid X,T=1,\theta)\big|\theta \bigr]\\
&=\int_{\chi} \int_{-\infty}^\theta p(x | T=1) p(r | x,T=1) \mathrm{d}r \mathrm{d}x\\
&=\int_{\chi} \int_{-\infty}^\theta p(x | T=1) p(r | x,T=0) \mathrm{d}r \mathrm{d}x \text{  (Using~\eqref{eq:conditional-dist-same})}\\
&=\int_{\chi} \int_{-\infty}^\theta \frac{p(x | T=1)}{p(x | T=0)} p(r | x,T=0) p(x | T=0) \mathrm{d}r \mathrm{d}x\\
&=\E \biggl[ \frac{p(X | T=1)}{p(X | T=0)}  \mathbbm{1}\{R \leq \theta  \} \,\Bigr|\, T=0, \theta  \biggr]. 
\end{split}
\end{equation}
On the other hand, we can prove that
\begin{equation}\label{eq:P-IF-simplification}
P[\mathrm{IF}(\theta, X,R,T;\pi,m)] = \PP(T=1) \biggl\{ \E \biggl[ \frac{p(X | T=1)}{p(X | T=0)}  \mathbbm{1}\{R \leq \theta  \} \,\Bigr|\, T=0, \theta  \biggr] - (1-\alpha )\biggr\}.
\end{equation}
This combined with~\eqref{eq:coverage} completes the proof of~\eqref{eq:connection-if-coverage}. We will prove~\eqref{eq:P-IF-simplification} in two steps.
\begin{equation}\label{eq:step-1-IF-simplify}
P[\mathrm{IF}(\theta, X,R,T;\pi,m)] = \E \Bigl[  \PP(T=1 |X) \bigl\{ \PP (R \leq \theta | X, \theta ) -  (1-\alpha) \bigr\}\big|\theta\Bigr],
\tag{Step 1}
\end{equation}
and
\begin{equation}\label{eq:step-2-IF-simplify}
\begin{split}
&\E \bigl[  \PP(T=1 |X) \bigl\{ \PP (R \leq \theta | X, \theta ) -  (1-\alpha) \bigr\}\big|\theta\bigr]\\ 
&\quad= \PP(T=1) \biggl\{ \E \biggl[ \frac{p(X | T=1)}{p(X | T=0)}  \mathbbm{1}\{R \leq \theta  \} \,\Bigr|\, T=0, \theta  \biggr] - (1-\alpha )\biggr\}.
\end{split}
\tag{Step 2}
\end{equation}
In the proof of~\eqref{eq:step-1-IF-simplify}, we will use the fact that either $\pi(\cdot)$ or $m(\cdot, \cdot)$ represents the correct density ratio or the correct conditional distribution function. The proof of~\eqref{eq:step-2-IF-simplify} follows essentially from Bayes rule.
\paragraph{Proof of~\eqref{eq:step-1-IF-simplify}.}
If $\pi(x) = \mathbb{P}(T = 1|X = x)/\mathbb{P}(T = 0|X = x)$ for all $x$ (i.e., density ratio is correct), then we have
\[
\mathbb{E}\left[\mathbbm{1}\{T = 0\}\pi(X)\big|X = x, R\right] = \mathbb{P}(T = 0|X = x, R)\frac{\mathbb{P}(T = 1|X = x)}{\mathbb{P}(T = 0|X = x)} = \mathbb{P}(T = 1|X = x).
\]
The second equality here follows because $T$ is independent of $R$ given $X$. This implies that
\begin{align*}
&\mathbb{E}\left[\mathbbm{1}\{T = 0\}\pi(X)\big\{\mathbbm{1}\{R \le \theta\} - m(\theta, X)\big\} \big| \theta \right]\\ 
&\quad= \mathbb{E}\Bigl[\mathbb{P}(T = 1|X)\big\{\mathbbm{1}\{R \le \theta \} - m(\theta, X)\big\} \big| \theta \Bigr]\\
&\quad= \mathbb{E}\Bigl[\mathbb{P}(T = 1|X) \bigl\{\mathbb{P}(R \le \theta|X, \theta) - m(\theta, X) \bigr\} \big| \theta \Bigr].
\end{align*}
Similarly, 
\[
\mathbb{E}\Bigl[\mathbbm{1}\{T = 1\}\left\{m(\theta, X) - (1 - \alpha)\right\} \big| \theta \Bigr] = \mathbb{E}\Bigl[\mathbb{P}(T = 1|X)\bigl\{m(\theta, X) - (1 - \alpha)\bigr\} \big| \theta \Bigr].
\]
Hence, if $\pi(\cdot)$ is the true density ratio, then
\begin{align*}
P [\mathrm{IF}(\theta, X, R, T; \pi, m)] &= \E \Bigl[\PP (T = 1|X) \bigl\{ \PP (R \le \theta |X,\theta) - (1 - \alpha) \bigr\} \big| \theta \Bigr].
\end{align*}
This completes the proof of~\eqref{eq:step-1-IF-simplify} when $\pi(\cdot)$ is the true density ratio.

{If $m(\gamma, x) = \mathbb{E}[\mathbbm{1}\{R \le \gamma\}|X = x]$ for all $\gamma\in\mathbb{R}$, $x\in\rchi$ (i.e., the conditional mean is correct)}, then using the conditional independence of $R$ and $T$ given $X$, we have 
\[
\mathbb{E} \Bigl[\mathbbm{1}\{T = 0\}\pi(X)\bigl\{ \mathbbm{1}\{R \le \theta\} - m(\theta, X) \bigr\} \big| \theta \Bigr] = 0.
\]
Hence,
\begin{align*}
P[\mathrm{IF}(\theta , X, R, T; \pi, m)] &= \mathbb{E}\Bigl[\mathbbm{1}\{T = 1\}\bigl\{ m(\theta , X) - (1 - \alpha)\bigr\} \big| \theta \Bigr]\\ 
&= \mathbb{E}\Bigl[\mathbb{P}(T = 1|X)\bigl\{m(\theta ,X) - (1 - \alpha)\bigr\} \big| \theta \Bigr]\\
&= \mathbb{E}\Bigl[\mathbb{P}(T = 1|X)\bigl\{\mathbb{P}(R \le \theta|X, \theta) - (1 - \alpha)\bigr\}\big|\theta\Bigr].
\end{align*}
This completes the proof of~\eqref{eq:step-1-IF-simplify} if $m(\cdot, \cdot)$ is the true conditional mean function.

\paragraph{Proof of~\eqref{eq:step-2-IF-simplify}.} 

\begin{equation}\label{eq:pif}
\begin{split}
&\E \bigl[  \PP(T=1 |X) \bigl\{ \PP (R \leq \theta | X, \theta ) -  (1-\alpha) \bigr\}\big|\theta\bigr]\\
    &\quad= \E \bigl[  \PP(T=1 |X)  \PP (R \leq \theta | X, \theta) \big|\theta \bigr] - \PP(T=1) (1-\alpha)  \\
    &\quad= \E \biggl[ \mathbbm{1}\{T=0\} \frac{\PP(T=1 |X)}{\PP(T=0 |X)}  \PP (R \leq \theta | X, \theta)\big|\theta \biggr] - \PP(T=1) (1-\alpha)  \\
    &\quad= \E \biggl[  \mathbbm{1}\{T=0\} \frac{\PP(T=1 |X)}{\PP(T=0 |X)}  \mathbbm{1}\{R \leq \theta  \} \big|\theta\biggr]  - \PP(T=1) (1-\alpha)  \\
    &\quad\stackrel{(b)}{=} \frac{\PP(T=1)}{\PP(T=0)} \E \biggl[  \mathbbm{1}\{T=0\} \frac{p(X | T=1)}{p(X | T=0)}  \mathbbm{1}\{R \leq \theta  \}  \bigl| \theta \biggr]  - \PP(T=1) (1-\alpha)  \\
    &\quad= \frac{\PP(T=1)}{\PP(T=0)} \E_{T} \biggl\{ \E \biggl[  \mathbbm{1}\{T=0\} \frac{p(X | T=1)}{p(X | T=0)}  \mathbbm{1}\{R \leq \theta  \}  \Big| T,\theta \biggr] \biggr\}  - \PP(T=1)(1-\alpha)  \\
    &\quad= \frac{\PP(T=1)}{\PP(T=0)} \PP(T=0) \E \biggl[ \frac{p(X | T=1)}{p(X | T=0)}  \mathbbm{1}\{R \leq \theta  \}  \Big| T=0,\theta \biggr] - \PP(T=1)(1-\alpha)  \\
    &\quad= \PP(T=1) \biggl\{ \E \biggl[ \frac{p(X | T=1)}{p(X | T=0)}  \mathbbm{1}\{R \leq \theta  \} \,\Bigr|\, T=0,\theta  \biggr] - (1-\alpha )\biggr\}, 
    \end{split}
\end{equation}
where equality (b) comes from Bayes rule. 
This completes the proof of~\eqref{eq:step-2-IF-simplify}.
\end{proof}

\section{Proof of Theorem~\ref{thm:product-bias}}\label{appsec:proof-product-bias}
\begin{proof}
By definition of the $\mathrm{IF}$ function, it holds $\forall \gamma \in \mathbb{R}$,
\begin{equation}\label{eq:pf-product-bias}
\begin{aligned}
    P \bigl[ \mathrm{IF}(\gamma, X, R,{\widehat\pi},\widehat{m} )\bigr]  &= P \Bigl[ \PP(T=0 |X) {\widehat\pi}(X) \bigr\{  \PP(R\leq \gamma |X) - \widehat{m}(\gamma, X) \bigr\}\\
    &\qquad + \PP (T=1 | X)  \bigl\{ \widehat{m}(\gamma,X)-(1-\alpha) \bigr\} \Bigr] \\
    &= P \Bigl[ \PP(T=0 |X) \bigl\{ \widehat\pi(X) - \pi^\star(X) \bigr\} \bigr\{ m^\star(\gamma,X) - \widehat{m}(\gamma,X) \bigr\} \Bigr]\\
    &\qquad + P \Bigl[ \PP (T=0 | X) \pi^\star(X)  \bigl\{ m^\star(\gamma, X) - \widehat{m}(\gamma, X)) \bigr\}  \\
    &\qquad + \PP (T=1 | X)  \bigl\{ \widehat{m}(\gamma, X)-(1-\alpha) \bigr\} \Bigr] \\
    &= P \Bigl[ \PP(T=0 |X) \bigl\{ \widehat\pi(X) - \pi^\star(X) \bigr\} \bigr\{ m^\star(\gamma,X) - \widehat{m}(\gamma, X) \bigr\} \Bigr]\\
    &\qquad + P \Bigl[ \PP (T=1 | X)   \bigl\{ m^\star(\gamma, X) - \widehat{m}(\gamma, X)) \bigr\}  \\
    &\qquad + \PP (T=1 | X)  \bigl\{ \widehat{m}(\gamma, X)-(1-\alpha) \bigr\} \Bigr] \\
    &= P \Bigl[ \PP(T=0 |X) \bigl\{\widehat\pi(X) - \pi^\star(X) \bigr\} \bigr\{ m^\star(\gamma,X) - \widehat{m}(\gamma, X) \bigr\} \Bigr]\\
    &\qquad + P \Bigl[ \PP (T=1 | X)   \bigl\{ m^\star(\gamma, X) - (1-\alpha) \bigr\} \Bigr].
\end{aligned}
\end{equation}

Repeating the same calculation in \eqref{eq:pf-product-bias} with either $\widehat\pi$ replaced by $\pi^\star$ and $\widehat{m}$ replaced by $m^\star$ yields $\forall \gamma \in \mathbb{R}$,
\begin{equation}
\begin{aligned}
    P \bigl[ \mathrm{IF}(\gamma, X, R,\pi,m )\bigr]  &=P \Bigl[ \PP (T=1 | X)   \bigl\{ m^\star(\gamma, X) - (1-\alpha) \bigr\} \Bigr].
\end{aligned}
\end{equation}
Therefore,
\begin{equation}
\begin{aligned}
    &\sup_{\gamma \in \mathbb{R}} \Bigl| P \bigl[  \mathrm{IF}(\gamma, X, R,\widehat{\pi},\widehat{m} ) - \mathrm{IF}(\gamma, X, R,{\pi^{\star}},{m^{\star}} ) \bigr] \Bigr|\\ &\quad= \sup_{\gamma \in \mathbb{R}}  \Bigl| P \bigl[ \PP(T=0 |X) \bigl\{ \pi^\star(X) - \widehat{\pi}(X) \bigr\} \bigr\{ m^\star(\gamma,X) - \widehat{m}(\gamma, X) \bigr\} \bigr] \Bigr| \\
    &\quad\leq \|\widehat{\pi} - \pi^\star \|_2 \sup_{\gamma} \| \widehat{m}(\gamma,\cdot) - m^\star(\gamma,\cdot)\|_2.
\end{aligned}
\end{equation}
The last inequality here follows from Cauchy--Schwarz inequality.  
\end{proof}

\section{Proof of Theorem~\ref{thm:convergence-if-function}}\label{appsec:proof-convergence-if-funcion}
\begin{proof}
Without loss of generality assume the indexes in $\mathcal{I}_2$ is $1, \dots, n$ with $n:= |\mathcal{I}_2| $, and we expand $\mathbb{P}_{\mathcal{I}_2} \bigl[ \mathrm{IF}(\theta,\mathcal{D}_2; \widehat{\pi},\widehat{m}) \bigr] -P \bigl[ \mathrm{IF}(\theta, X, R,T;\widehat\pi,\widehat{m} ) \bigr] 
$ into three parts,
\begin{align}
&\mathbb{P}_{\mathcal{I}_2} \bigl[ \mathrm{IF}(\theta,\mathcal{D}_2; \widehat{\pi},\widehat{m}) \bigr] - P \bigl[ \mathrm{IF}(\theta, X, R,T;\widehat\pi,\widehat{m} ) \bigr]\\
     = \;& \frac{1}{n}\sum_{i=1}^n \mathbbm{1}\{t_i=0\} \widehat{\pi}(x_i)  \mathbbm{1}\{r_i \leq \theta \} - P \bigl[ \mathbbm{1}\{t_i=0\} \widehat{\pi}(x_i)  \mathbbm{1}\{r_i \leq \theta \}  \bigr]\\
 & \quad +\frac{1}{n} \sum_{i=1}^n  \widehat{m}(\theta,x_i) \bigl( \mathbbm{1}\{t_i=1\} - \mathbbm{1}\{t_i=0\} \bigr) - P \bigl[\widehat{m}(\theta,x_i) (\mathbbm{1}\{t_i=1\} - \mathbbm{1}\{t_i=0\}) \bigr]\\
 & \quad - (1-\alpha) \biggl[ \frac{1}{n} \sum_{i=1}^n \mathbbm{1}\{t_i=1\} - \PP(T=1) \biggr] \\
=&:\; \mathcal{R}_1(\theta)+\mathcal{R}_2(\theta)+\mathcal{R}_3,
\label{eq:convergence-if-three-parts}
\end{align}
where the three terms will be controlled separately. In particular, $\sup_\theta |\mathcal{R}_1(\theta)| $ and $\sup_\theta |\mathcal{R}_2(\theta)|$ will be bounded using tools from the empirical processes theory. Also notice that conditional on training data $\mathcal{D}_1$, $\widehat{\pi}$ and $\widehat{m}$ are non-random functions and for ease of notation, we treat these functions as non-random and omit the conditioning part from this point onwards. 

For $W_i = (X_i, Y_i, T_i), i\in \mathcal{I}_2$ and any function $f:\mathbb{R}^{d+2}\to\mathbb{R}$,  for notation simplicity we define
\[
\mathbb{G}_{n} f ~:=~ \frac{1}{\sqrt{n}}\sum_{i=1}^n \bigl\{f(W_i) - \mathbb{E}[f(W_i)] \bigr\}.
\]

\paragraph{Bound on $\sup_\theta |\mathcal{R}_1(\theta)|$:}
We have a class of functions 
$$\mathcal{F}=\bigl\{f: f_{\widehat{\pi},\theta}(w)=\mathbbm{1}\{t=1 \} \widehat{\pi}(x) \mathbbm{1}{\{ R(x,y) \leq \theta\} } , \forall \widehat{\pi}(\cdot) \in \mathcal{F}_\pi \bigr\}.$$ Notice that $\forall \theta \in \mathbb{R}$ and $\widehat{\pi}(\cdot) \in \mathcal{F}_\pi$, we have $|f_\theta (w)| \leq \pi_0 \mathbbm{1}\{t=1\}$. Therefore, $F(w) := \pi_0 \mathbbm{1}\{t=1\}$ is an envelope function of $\{ f_\theta(\cdot): \theta \in \mathbb{R} \}$. 
Let $\|\cdot\|_{\mathcal{F}}$ denote the supremum norm $\|z\|_{\mathcal{F}}=\sup _{f \in \mathcal{F}}|z(f)|$.

Applying Lemma \ref{lem:expectation-empirical-proc-step} with $s(t,x) = \mathbbm{1}\{t=0\}\widehat{\pi}(x)$ and $h(x,y)=R(x,y)$ gives us
\begin{equation}\label{eq:expectation-r1}
\E  \| \mathbb{G}_n \|_{\mathcal{F} } \leq \mathfrak{C}\pi_0,
\end{equation}
where $\mathfrak{C}$ is a universal constant. Applying McDiarmid's inequality gives us 
\begin{equation}\label{eq:McDiarmid-application}
\PP \bigl( \| \mathbb{G}_n \|_{\mathcal{F}}- \E  \| \mathbb{G}_n \|_{\mathcal{F} } \geq u \bigr) \leq  \exp \biggl(-\frac{2u^2}{ \sum_{i=1}^n c_i^2}\biggr) \leq  \exp \biggl(-\frac{2u^2}{ \sum_{i=1}^n 4\pi_0^2/n}\biggr) = \exp(-u^2/2\pi_0^2),
\end{equation}
where 
\begin{align*}
c_i &:= \sup_{\substack{(x_i, y_i,t_i), (x_i', y_i',t_i'),\\ (x_1, y_1,t_1), \ldots, (x_n, y_n,t_n)}} \sup_{\theta} \sqrt{n} \biggl| \frac{1}{n}\sum_{j=1}^n \mathbbm{1}\{t_j=0\} \widehat{\pi}(x_j)  \mathbbm{1}\{r_j \leq \theta \}  - \frac{1}{n} \sum_{j=1, j\neq i}^n \mathbbm{1}\{t_j=0\} \widehat{\pi}(x_j)  \mathbbm{1}\{r_j \leq \theta \} \\
&\qquad \qquad \qquad -\frac{1}{n}\mathbbm{1}\{t_i'=0\} \widehat{\pi}(x_i')  \mathbbm{1}\{r_i' \leq \theta \}\biggr| \\
& \leq \sup_{(x_i, y_i,t_i), (x_i', y_i',t_i')} \sup_{\theta} \sqrt{n} \biggl| \frac{1}{n} \mathbbm{1}\{t_{i}=0\} \widehat{\pi}(x_{i})  \mathbbm{1}\{r_{i} \leq \theta \}- \frac{1}{n} \mathbbm{1}\{t_i'=0\} \widehat{\pi}(x_i')  \mathbbm{1}\{r_i' \leq \theta \}  \biggr| \le \frac{2 \pi_0}{\sqrt{n}}.
\end{align*}
Substituting the expectation bound~\eqref{eq:expectation-r1} in~\eqref{eq:McDiarmid-application} and setting the right hand side of~\eqref{eq:McDiarmid-application} to $\delta$ yields for another absolute constant $\mathfrak{C}^{\prime}$
\begin{equation}\label{eq:r1}
\PP \biggl(\| \mathbb{G}_n \|_{\mathcal{F}} \geq  \mathfrak{C}^\prime \sqrt{\pi_0^2 + \pi_0^2  \log \bigl(\frac{1}{\delta} \bigr)} \biggr) \leq \PP \biggl[ \| \mathbb{G}_n \|_{\mathcal{F}} \geq \mathfrak{C} \biggl\{ \pi_0 + \pi_0 \sqrt{2\log \bigl( \frac{1}{\delta} \bigr)} \biggr\} \biggr] \leq \delta.
\end{equation}

\paragraph{Bound on $\sup_\theta |\mathcal{R}_2(\theta)|$:}\label{pf:bound-r2} 
We define the class of functions $\mathcal{F}=\bigl\{f: f_\theta(w)=\widehat{m}(\theta,x_i) (\mathbbm{1}\{t_i=1\} - \mathbbm{1}\{t_i=0\}) \bigr\}$ with $F(w) = m_0 (\mathbbm{1}\{t_i=1\} - \mathbbm{1}\{t_i=0\} )$ as its envelope.

Note that 
\begin{equation}
    \begin{aligned}
    \sup_{\theta} | \mathbb{G}_n f | &= \sup_{\theta} \biggl| \mathbb{G}_n  \Bigl[ (\mathbbm{1}\{t_i=1\} - \mathbbm{1}\{t_i=0\}) \widehat{m}(\theta,x_i)   \Bigr] \biggr|\\
    &= \sup_{\theta} \biggl| \mathbb{G}_n  \Bigl[ (\mathbbm{1}\{t_i=1\} - \mathbbm{1}\{t_i=0\}) \int_{0}^{m_0} \mathbbm{1}\{\widehat{m}(\theta,x_i) \geq u \} \mathrm{d}u  \Bigr] \biggr|\\
    &= \sup_{\theta} \biggl|  \int_{0}^{m_0}  \mathbb{G}_n  \Bigl[ (\mathbbm{1}\{t_i=1\} - \mathbbm{1}\{t_i=0\})\mathbbm{1}\{\widehat{m}(\theta,x_i) \geq u \} \mathrm{d}u  \Bigr] \biggr|\\
    &\stackrel{(a)}{=} \sup_{\theta} \biggl|  \int_{0}^{m_0}  \mathbb{G}_n  \Bigl[ (\mathbbm{1}\{t_i=1\} - \mathbbm{1}\{t_i=0\})\mathbbm{1}\{h(x_i,u) \leq 
    \theta \} \mathrm{d}u  \Bigr] \biggr| \text{, for some function }h\\
    &\leq  \int_{0}^{m_0} \sup_{\theta} \biggl|  \mathbb{G}_n  \Bigl[ (\mathbbm{1}\{t_i=1\} - \mathbbm{1}\{t_i=0\})\mathbbm{1}\{h(x_i,u) \leq 
    \theta \} \Bigr] \biggr| \mathrm{d}u  ,
    \end{aligned}
\end{equation}
where equality (a) is from the monotonicity of $\widehat{m}(\theta, x)$ in $\theta$.
Taking the expectation on both sides gives us
\begin{equation}\label{eq:expectation-supremum-monotone}
\E \| \mathbb{G}_n \|_{\mathcal{F}} \leq \int_{0}^{m_0} \E \Bigl[ \sup_{\theta} \bigl|  \mathbb{G}_n  \bigl[ (\mathbbm{1}\{t_i=1\} - \mathbbm{1}\{t_i=0\})\mathbbm{1}\{h(x_i,u) \leq 
    \theta \} \bigr] \bigr| \Bigr] \mathrm{d}u.
\end{equation}

Applying Lemma \ref{lem:expectation-empirical-proc-step} for $s(t,x) = \mathbbm{1}\{t=1\} - \mathbbm{1}\{t=0\}$ gives us for any fixed $u$,
$$
\E \sup_{\theta} \Bigl|  \mathbb{G}_n  \bigl[ (\mathbbm{1}\{t_i=1\} - \mathbbm{1}\{t_i=0\})\mathbbm{1}\{h(x_i,u) \leq 
    \theta \} \bigr] \Bigr| \leq \mathfrak{C},
$$
where $\mathfrak{C}$ is a universal constant. Plugging this back to \eqref{eq:expectation-supremum-monotone} gives us 
\begin{equation}\label{eq:expectation-bound-r2}
\E \| \mathbb{G}_n \|_{\mathcal{F}} \leq \mathfrak{C}m_0.
\end{equation}

Using McDiarmid's inequality we have 
\begin{equation}\label{eq:McDiarmid-application-r2}
\PP ( \| \mathbb{G}_n \|_{\mathcal{F}}- \E  \| \mathbb{G}_n \|_{\mathcal{F} } \geq u ) \leq  \exp \biggl(-\frac{2u^2}{ \sum_{i=1}^n c_i^2}\biggr) \leq  \exp \biggl(-\frac{2u^2}{ \sum_{i=1}^n 4 m_0^2/n}\biggr) = \exp(-u^2/2m_0^2),
\end{equation}
where 
\begin{align*}
c_i &:= \sup_{\substack{(x_i, y_i,t_i), (x_i', y_i',t_i'),\\ (x_1, y_1,t_1), \ldots, (x_n, y_n,t_n)}} \sup_{\theta} \sqrt{n} \biggl| \frac{1}{n}\sum_{j=1}^n \widehat{m}(\theta,x_j) (\mathbbm{1}\{t_j=1\} - \mathbbm{1}\{t_j=0\} ) \\
&\qquad \qquad \qquad - \frac{1}{n} \sum_{j=1, j\neq i}^n \widehat{m}(\theta,x_j) (\mathbbm{1}\{t_j=1\} - \mathbbm{1}\{t_j=0\} ) -\frac{1}{n} \widehat{m}(\theta,x_i') (\mathbbm{1}\{t_i'=1\} - \mathbbm{1}\{t_i'=0\} ) \biggr|  \\
& \leq \sup_{(x_i, y_i,t_i), (x_i', y_i',t_i')} \sup_{\theta} \frac{1}{\sqrt{n} } \biggl|  \widehat{m}(\theta,x_i) (\mathbbm{1}\{t_i=1\} - \mathbbm{1}\{t_i=0\} )- \widehat{m}(\theta,x_i') (\mathbbm{1}\{t_i'=1\} - \mathbbm{1}\{t_i'=0\} )\biggr| \le \frac{2 m_0}{\sqrt{n}}.
\end{align*}
Substituting the expectation bound~\eqref{eq:expectation-bound-r2} in~\eqref{eq:McDiarmid-application-r2} and setting the right hand side of~\eqref{eq:McDiarmid-application-r2} to $\delta$ yields for another absolute constant $\mathfrak{C}^{\prime}$,
\begin{equation}\label{eq:r2}
\PP \biggl(\| \mathbb{G}_n \|_{\mathcal{F}} \geq  \mathfrak{C}^\prime m_0  \sqrt{1 + \log \bigl(\frac{1}{\delta} \bigr)} \biggr) \leq \PP \biggl[ \| \mathbb{G}_n \|_{\mathcal{F}} \geq \mathfrak{C} m_0  \biggl\{ 1 + \sqrt{2\log \bigl( \frac{1}{\delta} \bigr)} \biggr\} \biggr] \leq \delta.
\end{equation}

\paragraph{Bound on $\mathcal{R}_3$:}

Because the random variables in the averaging of $\mathcal{R}_3$ are i.i.d, applying Hoeffding's inequality yields
\begin{equation}\label{eq:r3-hoeffding}
\PP \biggl\{ \frac{1}{n} \sum_{i=1}^{n}\mathbbm{1}\{ t_i=1 \} - \PP(T=1) \geq t \biggr\} \leq \exp \biggl(-\frac{2 t^{2}}{n }\biggr).
\end{equation}
And this leads to
\begin{equation}\label{eq:r3}
\PP \biggl(R_3 \geq (1-\alpha) \sqrt{\frac{1}{2n} \log \bigl( \frac{1}{\delta} \bigr)} \biggr) \leq \delta.
\end{equation}

Combining \eqref{eq:r1}, \eqref{eq:r2} and \eqref{eq:r3} together using the union bound gives the result that for a universal constant $\mathfrak{C}$,
\begin{equation}\label{eq:r1-r2-r3}
\PP \biggl\{ \sup_\theta \bigl| \mathcal{R}_1(\theta)+\mathcal{R}_2(\theta) +\mathcal{R}_3 \bigr| \geq \mathfrak{C} \sqrt{\frac{ (m_0+\pi_0 +1- \alpha)^2 \log \bigl( \frac{1}{\delta}  \bigr) +(m_0+\pi_0)^2 }{n}} \biggr\} \leq \delta.
\end{equation}

\end{proof}

\section{Proof of Theorem~\ref{thm:convergence-if-function-whole}}\label{appsec:proof-convergence-if-funcion-whole}
\begin{proof}
Without loss of generality assume the indexes in $\mathcal{D}^{\mathrm{tr}}$ is $1, \dots, N$ with $N:= |\mathcal{D}^{\mathrm{tr}}| $, and we expand $\mathbb{P}_{N} \bigl[ \mathrm{IF}(\theta,\mathcal{D}^{\mathrm{tr}}; \widehat{\pi},\widehat{m}) \bigr] -P \bigl[ \mathrm{IF}(\theta, X, R,T,\widehat\pi,\widehat{m} ) \bigr] $ into three parts,
\begin{align}
&\mathbb{P}_{N} \bigl[ \mathrm{IF}(\theta,\mathcal{D}^{\mathrm{tr}}; \widehat{\pi},\widehat{m}) \bigr] - P \bigl[ \mathrm{IF}(\theta, X, R,T;\widehat\pi,\widehat{m} ) \bigr] \\
    = \;& \frac{1}{N}\sum_{i=1}^N \mathbbm{1}\{t_i=0\} \widehat{\pi}(x_i)  \mathbbm{1}\{r_i \leq \theta \} - \E \bigl[ \mathbbm{1}\{t_i=0\} \widehat{\pi}(x_i)  \mathbbm{1}\{r_i \leq \theta \}  \bigr]\\
 & \quad +\frac{1}{N} \sum_{i=1}^N \widehat{m}(\theta,x_i) \bigl( \mathbbm{1}\{t_i=1\} - \mathbbm{1}\{t_i=0\} \bigr) - \E \bigl[\widehat{m}(\theta,x_i) (\mathbbm{1}\{t_i=1\} - \mathbbm{1}\{t_i=0\}) \bigr]\\
 & \quad - (1-\alpha) \biggl[ \frac{1}{N} \sum_{i=1}^N \mathbbm{1}\{t_i=1\} - \PP(T=1) \biggr] \\
=&:\; \mathcal{R}_1(\widehat{\pi}, \theta)+\mathcal{R}_2(\widehat{m},\theta)+\mathcal{R}_3,
\label{eq:convergence-if-three-parts-whole}
\end{align}
where the three terms will be controlled separately. In particular, $\sup_{\theta} \mathcal{R}_1(\widehat{\pi},\theta) $ and $\sup_{\theta} \mathcal{R}_2(\widehat{m},\theta)$ will be bounded using tools from empirical processes. 


\paragraph{Bound on $\sup_{\theta} \mathcal{R}_1(\widehat{\pi},\theta)$:}
We have a class of functions $\mathcal{F}=\bigl\{f: f_{\widehat{\pi},\theta}(w)=\mathbbm{1}\{t=1 \} \widehat{\pi}(x) \mathbbm{1}{\{ R(x,y) \leq \theta\} } , \forall \widehat{\pi}(\cdot) \in \mathcal{F}_\pi, \theta \in \mathbb{R} \bigr\}$. Notice that $\forall \theta \in \mathbb{R}$ and $\widehat{\pi}(\cdot) \in \mathcal{F}_\pi$, we have $|f_\theta (w)| \leq \pi_0 \mathbbm{1}\{t=1\}$. Therefore, $F(w) := \pi_0 \mathbbm{1}\{t=1\}$ is an envelope function of $\{ f_{\widehat{\pi},\theta}(\cdot): \widehat{\pi}(\cdot) \in \mathcal{F}_\pi, \theta \in \mathbb{R} \}$. 

Note that because $\widehat{\pi}(\cdot)$ depends on the evaluation data, $\mathcal{R}_1(\widehat{\pi},\theta)$ is not an average of $N$ i.i.d random variables. However we can still bound this term by $\sup_{f \in \mathcal{F}} |\mathbb{G}_N f|$ where for any fixed $f \in \mathcal{F}, \mathbb{G}_N f/\sqrt{N}$ is an average of $N$ i.i.d. random variables with mean 0.

For any pair of functions $f_{\widehat{\pi}_1,\theta_1},f_{\widehat{\pi}_2,\theta_2} \in \mathcal{F}$,
\begin{align}
        \|f_{\widehat{\pi}_1,\theta_1} - f_{\widehat{\pi}_2,\theta_2} \|_{Q} &= \Bigl[ \sum_{i=1}^N  \mathbbm{1}^2 \{t_i=1\} \bigl(\widehat{\pi}_1(x_i)  \mathbbm{1}{\{ r_i \leq \theta_1 \}} - \widehat{\pi}_2(x_i) \mathbbm{1}{\{ r_i \leq \theta_2\} } \bigr)^2 Q(w_i) \Bigr]^{1/2}\\
        &= \frac{ \Bigl[ \sum_{i=1}^N  \mathbbm{1} \{t_i=1\} \bigl(\widehat{\pi}_1(x_i)  \mathbbm{1}{\{ r_i \leq \theta_1 \}} - \widehat{\pi}_2(x_i) \mathbbm{1}{\{ r_i \leq \theta_2\} } \bigr)^2 Q(w_i) \biggr]^{1/2} }{ \bigl[ \sum_{i=1}^N \mathbbm{1}^2\{t_i=1\} \ Q(w_i) \bigr]^{1/2}} \times\\
        &\qquad \qquad \qquad \qquad \qquad \qquad \qquad \qquad \bigl[ \sum_{i=1}^N \mathbbm{1}^2\{t_i=1\} ] Q(w_i) \bigr]^{1/2}\\
        &= \bigl\| \widehat{\pi}_1(x_i) \mathbbm{1}{\{ r_i \leq \theta_1\}} - \widehat{\pi}_2(x_i) \mathbbm{1}{\{ r_i \leq \theta_2\} } \bigr\|_{ \widetilde{Q}} \cdot \bigl[ \sum_{i=1}^N \mathbbm{1}\{t_i=1\} Q(w_i) \bigr]^{1/2}, \label{eq:change-of-measure-whole-1}
\end{align}
where $\widetilde{Q}$ is the new probability measure defined by 
\begin{equation}\label{eq:new-measure-whole}
\widetilde{Q}(w_i):=  \mathbbm{1}\{t_i=1\}  Q(w_i) /  \sum_{i=1}^N \mathbbm{1}\{t_i=1\} Q(w_i), i=1, \dots, N.
\end{equation}
Using $F(w)= \pi_0 \mathbbm{1}\{t=1\}$ as the envelope function of the class $\mathcal{F}$, \eqref{eq:change-of-measure-whole-1} becomes
\begin{equation}\label{eq:distance-relationship-whole}
        \| f_{\widehat{\pi}_1,\theta_1} - f_{\widehat{\pi}_2,\theta_2} \|_{Q} = \bigl\| \widehat{\pi}_1(x_i) \mathbbm{1}{\{ r_i \leq \theta_1\}} - \widehat{\pi}_2(x_i) \mathbbm{1}{\{ r_i \leq \theta_2\} } \bigr\|_{ \widetilde{Q}} \cdot \|F\|_Q / \pi_0.
\end{equation} 
Define a new class of functions $\mathcal{F}_1:= \bigl\{f: f_{\widehat{\pi},\theta}(z) = \widehat{\pi}(x) \mathbbm{1}\{R(x,y) \leq \theta \} \bigr\}$. Then   \eqref{eq:distance-relationship-whole} gives the  relationship between the covering numbers of the two classes $\mathcal{F}$ and $\mathcal{F}_1$,
\begin{equation}\label{eq:covering-change-of-measure-whole}
N\bigl(\varepsilon \| F \|_{Q}, \mathcal{F}, L_{2}(Q)\bigr) \leq    N \bigl(\pi_0 \varepsilon, \mathcal{F}_1, L_{2}(\widetilde{Q}) \bigr).
\end{equation}

We take the envelope function of $\mathcal{F}_1 $  to be $F_1 \equiv \pi_0$ and use Theorem 2.6.7 of \cite{van1996weak} with $r=2$ to get a bound on the covering number bound for $\mathcal{G}=\bigl\{g_\theta: g_\theta(z)= \mathbbm{1}\{R(x,y) \leq \theta \},\forall \theta \in \mathbb{R} \}$. Specifically, for any probability measure $Q$, there exists a universal constant $\mathfrak{C}$ such that
\begin{equation}
    N \bigl(\varepsilon, \mathcal{G}, L_{2}(Q)\bigr) \leq \frac{\mathfrak{C}}{\varepsilon}.
\end{equation}
Let $\mathcal{H}:= \{h:h=\widehat{\pi}(x), \forall \widehat{\pi} \in \mathcal{F}_\pi\}$. The relationship between covering and bracketing numbers gives us
\begin{equation}
N(\varepsilon, \mathcal{H},L_2(Q)) \leq N_{[\,]}(2 \varepsilon, \mathcal{H},L_2(Q)) \leq \exp \bigl( C(2\pi_0 \varepsilon)^{-\alpha_\pi} \bigr).
\end{equation}
Applying Lemma \ref{lem:covering-product} with $\mathcal{F}_1 = \mathcal{G}$, $\mathcal{F}_2 = \mathcal{H}$ and $C_1=1, C_2 =\pi_0$ gives 
\begin{equation}
N\bigl(\pi_0 \varepsilon, \mathcal{F}_1, L_{2}(Q)\bigr) \leq N \bigl(\varepsilon/2 , \mathcal{G}, L_{2}(Q)\bigr) N \bigl(\pi_0  \varepsilon/2 , \mathcal{H}, L_{2}(Q)\bigr) \leq \frac{2 \mathfrak{C}}{\varepsilon} \exp \bigl( C(\pi_0 \varepsilon)^{-\alpha_\pi} \bigr) . 
\end{equation}
Therefore, it holds for some constant $\mathfrak{C}'$ such that,
\begin{equation}\label{eq:covering1-whole}
\log N \bigl(\varepsilon \| F_1 \|_{Q}, \mathcal{F}_1, L_{2}(Q)\bigr) \leq C (\pi_0 \varepsilon)^{-\alpha_\pi} - \log(\varepsilon)+\mathfrak{C} \leq \mathfrak{C}'(\pi_0 \varepsilon)^{-\alpha_\pi} . 
\end{equation}
Because \eqref{eq:covering1-whole} holds for any probability measure $Q$, we choose $Q$ to be $\widetilde{Q}$ defined in \eqref{eq:new-measure-whole} to bound the right hand side of  \eqref{eq:covering-change-of-measure-whole}, 
\begin{equation}\label{eq:covering-whole}
\log N \bigl(\varepsilon \| F \|_{Q}, \mathcal{F}, L_{2}(Q) \bigr) \leq   \log N \bigl(\pi_0 \varepsilon, \mathcal{F}_1, L_{2}(\widetilde{Q})\bigr) \leq \mathfrak{C}' (\pi_0\varepsilon)^{-\alpha_\pi}.
\end{equation}
We apply Lemma A.1 of \cite{srebro2012optimistic} so that 
\begin{align}
\E  \| \mathbb{G}_N \|_{\mathcal{F} }/\pi_0 &\lesssim \inf_\eta \biggl\{ N^{1 / 2} \eta+\int_{\eta}^{1} \log _{+}^{1 / 2} N_{[]}\left(\pi_0 \varepsilon, \mathcal{F},\|\cdot\|_{L_{2}(P)}\right) \mathrm{d} \varepsilon \biggr\}+N^{-1 / 2} \log _{+} N_{[]}\left(\pi_0, \mathcal{F},\|\cdot\|_{L_{2}(P)}\right)\nonumber\\
&\leq \inf_\eta \biggl\{ N^{1 / 2} \eta+ \mathfrak{C} \int_{\eta}^{1}  (\pi_0\varepsilon)^{-\alpha_\pi/2} \mathrm{d} \varepsilon \biggr\} + N^{-1/2} \mathfrak{C} \pi_0^{-\alpha_\pi} \nonumber\\
&\leq  \frac{1}{\pi_0}N^{1 / 2 - 1/\alpha_\pi} \bigl(1+\mathbbm{1}\{\alpha_\pi=2\}\log N \bigr)+N^{-1/2} \mathfrak{C} \pi_0^{-\alpha_\pi} \nonumber\\
&\leq \frac{\mathfrak{C}}{\pi_0} N^{{(1 / 2 - 1/\alpha_\pi)}_{+}} \bigl(1+\mathbbm{1}\{\alpha_\pi=2\}\log N \bigr) \text{, when $N$ is large.} \label{eq:expectation-bound-whole}
\end{align}
Using McDiarmid's inequality we have 
\begin{equation}\label{eq:McDiarmid-application-whole}
\PP \bigl( \| \mathbb{G}_N \|_{\mathcal{F}}- \E  \| \mathbb{G}_N \|_{\mathcal{F} } \geq u \bigr) \leq  \exp \biggl(-\frac{2u^2}{ \sum_{i=1}^N c_i^2}\biggr) \leq  \exp \biggl(-\frac{2u^2}{ \sum_{i=1}^N 4\pi_0^2/N}\biggr) = \exp(-u^2/2\pi_0^2),
\end{equation}
where 
\begin{align*}
c_i &:= \sup_{\substack{(x_i, y_i,t_i), (x_i', y_i',t_i'),\\ (x_1, y_1,t_1), \ldots, (x_N, y_N,t_N)}} \sup_{\widehat{\pi},\theta} \sqrt{N} \biggl| \frac{1}{N}\sum_{j=1}^N \mathbbm{1}\{t_j=0\} \widehat{\pi}(x_j)  \mathbbm{1}\{r_j \leq \theta \}  - \frac{1}{N} \sum_{j=1, j\neq i}^N \mathbbm{1}\{t_j=0\} \widehat{\pi}(x_j)  \mathbbm{1}\{r_j \leq \theta \} \\
&\qquad \qquad \qquad -\frac{1}{N}\mathbbm{1}\{t_i'=0\} \widehat{\pi}(x_i')  \mathbbm{1}\{r_i' \leq \theta \}\biggr| \\
& \leq \sup_{(x_i, y_i,t_i), (x_i', y_i',t_i')} \sup_{\widehat{\pi},\theta} \sqrt{N} \biggl| \frac{1}{N} \mathbbm{1}\{t_{i}=0\} \widehat{\pi}(x_{i})  \mathbbm{1}\{r_{i} \leq \theta \}- \frac{1}{N} \mathbbm{1}\{t_i'=0\} \widehat{\pi}(x_i')  \mathbbm{1}\{r_i' \leq \theta \}  \biggr| \le \frac{2 \pi_0}{\sqrt{N}}.
\end{align*}
Substituting the expectation bound~\eqref{eq:expectation-bound-whole} in~\eqref{eq:McDiarmid-application-whole} and setting the right hand side of~\eqref{eq:McDiarmid-application-whole} to $\delta$ yields for another absolute constant $\mathfrak{C}^{\prime}$,
\begin{equation}\label{eq:r1-whole}
 \PP \biggl[ \| \mathbb{G}_n \|_{\mathcal{F}} \geq \mathfrak{C}^{\prime} \biggl\{  N^{{(1 / 2 - 1/\alpha_\pi)}_{+}} \bigl(1+\mathbbm{1}\{\alpha_\pi=2\}\log N \bigr) + \pi_0 \sqrt{2\log \bigl( \frac{1}{\delta} \bigr)} \biggr\} \biggr] \leq \delta.
\end{equation}
\paragraph{Bound on $\sup_{\theta} \mathcal{R}_2(\widehat{m},\theta)$:}\label{pf:bound-r2-whole}
We define the class of functions $\mathcal{F}=\{f: f_{\widehat{m},\theta}(w)=\widehat{m}(\theta,x_i) (\mathbbm{1}\{t_i=1\} - \mathbbm{1}\{t_i=0\} , \forall \widehat{m}(\cdot, \cdot) \in \mathcal{F}_m \text{ and } \theta \in \mathbb{R} \}$ with $F(w) = m_0 (\mathbbm{1}\{t_i=1\} - \mathbbm{1}\{t_i=0\} )$ as its envelope. Similar as before we bound this term by $\sup_{f \in \mathcal{F}} |\mathbb{G}_N f|$ and for any fixed $f \in \mathcal{F}, |\mathbb{G}_N f|/\sqrt{N}$ is an average of i.i.d. random variables.

For any pair of functions $f_{\widehat{m}_1,\theta_1},f_{\widehat{m}_2,\theta_2} \in \mathcal{F}$,
\begin{align}
        \|f_{\widehat{m}_1,\theta_1} - f_{\widehat{m}_2,\theta_2} \|_Q &= \Bigl[ \sum_{i=1}^N  \bigl( \mathbbm{1} \{t_i=1\} - \mathbbm{1} \{t_i=0\}  \bigr)^2 \bigl(\widehat{m}_1(\theta_1, x_i)  - \widehat{m}_2(\theta_2,x_i)   \bigr)^2 Q(w_i) \Bigr]^{1/2}\\
        &= \frac{ \Bigl[ \sum_{i=1}^N  \bigl( \mathbbm{1} \{t_i=1\} - \mathbbm{1} \{t_i=0\}  \bigr)^2 \bigl(\widehat{m}_1(\theta_1, x_i)  - \widehat{m}_2(\theta_2,x_i)   \bigr)^2 Q(w_i) \Bigr]^{1/2} }{ \bigl[ \sum_{i=1}^N  \bigl( \mathbbm{1} \{t_i=1\} - \mathbbm{1} \{t_i=0\}  \bigr)^2 Q(w_i) \bigr]^{1/2}} \times\\
        &\qquad \qquad \qquad \qquad \qquad \qquad \qquad \qquad \bigl[ \sum_{i=1}^N  \bigl( \mathbbm{1} \{t_i=1\} - \mathbbm{1} \{t_i=0\}  \bigr)^2  Q(w_i)  \bigr]^{1/2}\\
        &= \bigl\| \bigl(\widehat{m}_1(\theta_1, x_i)  - \widehat{m}_2(\theta_2,x_i) \bigr\|_{ \widetilde{Q}} \cdot \bigl[ \sum_{i=1}^n \bigl( \mathbbm{1} \{t_i=1\} - \mathbbm{1} \{t_i=0\}  \bigr)^2 Q(w_i) \bigr]^{1/2}, \label{eq:change-of-measure-whole}
\end{align}
where $\widetilde{Q}$ is the new probability measure defined by 
\begin{equation}\label{eq:new-measure-whole-r2}
\widetilde{Q}(w_i):=  \bigl( \mathbbm{1} \{t_i=1\} - \mathbbm{1} \{t_i=0\}  \bigr)^2  Q(w_i) /  \sum_{i=1}^N \bigl( \mathbbm{1} \{t_i=1\} - \mathbbm{1} \{t_i=0\}  \bigr)^2 Q(w_i), i=1,\dots, N.
\end{equation}
Using $F(w)= m_0 (\mathbbm{1}\{t_i=1\} - \mathbbm{1}\{t_i=0\} )$ as the envelope function of the class $\mathcal{F}$, \eqref{eq:change-of-measure-whole} becomes
\begin{equation}\label{eq:distance-relationship-whole-r2}
        \| f_{\widehat{m}_1,\theta_1} - f_{\widehat{m}_2,\theta_2} \|_Q = \bigl\| \widehat{m}_1(\theta_1, x_i)  - \widehat{m}_2(\theta_2,x_i) \bigr\|_{ \widetilde{Q}} \cdot \|F\|_Q / m_0.
\end{equation} 
Recall the class of functions $\mathcal{F}_m:= \bigl\{f: f_{\widehat{m},\theta}(z) = \widehat{m} (\theta, x) \bigr\}$ with its envelope $F_1 \equiv m_0$. Then  \eqref{eq:distance-relationship-whole-r2} gives the  relationship between the covering numbers of the two classes $\mathcal{F}$ and $\mathcal{F}_m$. Along with the bracketing number assumption on $\mathcal{F}_m$ it follows that
\begin{equation}\label{eq:covering-r2-whole}
\log N\bigl(\varepsilon \| F \|_{Q}, \mathcal{F}, L_{2}(Q)\bigr) \leq    \log N \bigl( m_0 \varepsilon, \mathcal{F}_m, L_{2}(\widetilde{Q}) \bigr) \leq \log N_{[]} \bigl( 2m_0 \varepsilon, \mathcal{F}_m, L_{2}(\widetilde{Q}) \bigr) \leq C (2 m_0\varepsilon)^{-\alpha_m}.
\end{equation}
Applying the same technique as we used in \eqref{eq:expectation-bound-whole}  when bounding $\sup_{\theta} \mathcal{R}_1$ gives us $\forall \alpha_m \geq 0$,
\begin{equation}\label{eq:expectation-bound-r2-whole}
\E  \| \mathbb{G}_N \|_{\mathcal{F} } \leq \mathfrak{C}  N^{{(1 / 2 - 1/\alpha_m)}_{+}} \bigl(1+\mathbbm{1}\{\alpha_m=2\}\log N \bigr).
\end{equation}
Using McDiarmid's inequality we have 
\begin{equation}\label{eq:McDiarmid-application-r2-whole}
\PP ( \| \mathbb{G}_N \|_{\mathcal{F}}- \E  \| \mathbb{G}_N \|_{\mathcal{F} } \geq u ) \leq  \exp \biggl(-\frac{2u^2}{ \sum_{i=1}^N c_i^2}\biggr) \leq  \exp \biggl(-\frac{2u^2}{ \sum_{i=1}^N 4 m_0^2/N}\biggr) = \exp(-u^2/2m_0^2),
\end{equation}
where 
\begin{align*}
c_i &:= \sup_{\substack{(x_i, y_i,t_i), (x_i', y_i',t_i'),\\ (x_1, y_1,t_1), \ldots, (x_N, y_N,t_N)}} \sup_{\widehat{m},\theta} \sqrt{N} \biggl| \frac{1}{n}\sum_{j=1}^N \widehat{m}(\theta,x_j) (\mathbbm{1}\{t_j=1\} - \mathbbm{1}\{t_j=0\} ) \\
&\qquad \qquad \qquad - \frac{1}{N} \sum_{j=1, j\neq i}^N \widehat{m}(\theta,x_j) (\mathbbm{1}\{t_j=1\} - \mathbbm{1}\{t_j=0\} ) -\frac{1}{N} \widehat{m}(\theta,x_i') (\mathbbm{1}\{t_i'=1\} - \mathbbm{1}\{t_i'=0\} ) \biggr|  \\
& \leq \sup_{(x_i, y_i,t_i), (x_i', y_i',t_i')} \sup_{\widehat{m},\theta} \frac{1}{\sqrt{N} } \biggl|  \widehat{m}(\theta,x_i) (\mathbbm{1}\{t_i=1\} - \mathbbm{1}\{t_i=0\} )- \widehat{m}(\theta,x_i') (\mathbbm{1}\{t_i'=1\} - \mathbbm{1}\{t_i'=0\} )\biggr| \le \frac{2 m_0}{\sqrt{N}}.
\end{align*}
Substituting the expectation bound~\eqref{eq:expectation-bound-r2-whole} in~\eqref{eq:McDiarmid-application-r2-whole} and setting the right hand side of~\eqref{eq:McDiarmid-application-r2-whole} to $\delta$ yields for another absolute constant $\mathfrak{C}^{\prime}$,
\begin{equation}\label{eq:r2-whole}
 \PP \biggl[ \| \mathbb{G}_N \|_{\mathcal{F}} \geq \mathfrak{C} \biggl\{  N^{{(1 / 2 - 1/\alpha_m)}_{+}} \bigl(1+\mathbbm{1}\{\alpha_m=2\}\log N \bigr) + m_0 \sqrt{2\log \bigl( \frac{1}{\delta} \bigr)} \biggr\} \biggr] \leq \delta.
\end{equation}
\paragraph{Bound on $\mathcal{R}_3$:}

Because the random variables in the averaging of $\mathcal{R}_3$ are i.i.d, applying Hoeffding's inequality yields
\begin{equation}\label{eq:r3-hoeffding-whole}
\PP \biggl\{ \frac{1}{N} \sum_{i=1}^{N}\mathbbm{1}\{ t_i=1 \} - \PP(T=1) \geq t \biggr\} \leq \exp \biggl(-\frac{2 t^{2}}{N }\biggr).
\end{equation}
And this leads to
\begin{equation}\label{eq:r3-whole}
\PP \biggl(R_3 \geq (1-\alpha)\sqrt{\frac{1}{2N} \log \bigl( \frac{1}{\delta} \bigr)} \biggr) \leq \delta.
\end{equation}

Combining \eqref{eq:r1-whole}, \eqref{eq:r2-whole} and \eqref{eq:r3-whole} together using the union bound gives the result that for a universal constant $\mathfrak{C}$,
\begin{equation}\label{eq:r1-r2-r3-whole}
\begin{aligned}
\PP \biggl\{ \sup_\theta \bigl( \mathcal{R}_1(\theta)+\mathcal{R}_2(\theta) +\mathcal{R}_3 \bigr) \geq &\mathfrak{C}  \biggl(  N^{- 1/(\alpha_m \vee 2)} \bigl(1+\mathbbm{1}\{\alpha_m=2\}\log N \bigr) + N^{- 1/(\alpha_\pi \vee 2)} \bigl(1+\mathbbm{1}\{\alpha_\pi=2\}\log N \bigr) \\
&\qquad+ \sqrt{\frac{ (m_0+\pi_0+1 - \alpha)^2 \log \bigl( \frac{1}{\delta}  \bigr) }{N}} \biggr) \biggr\} \leq \delta.
\end{aligned}
\end{equation}

\end{proof}

\section{Proof of Theorem \ref{thm:impossibility-finite}}\label{sec:impossibility-appendix}

The first half of the Theorem, specifically \eqref{eq:finite-marginal-set-infinite}, is adapted from Lemma S1 of \cite{qiu2022distribution}. We modify that proof a bit in our setting in the following and first state a Lemma that is adapted from Theorem 2 and Remark 4 from \cite{shah2020hardness}, which will contribute to the proof of Theorem \ref{thm:impossibility-finite}.
\begin{lem}\label{thm:conditional-independence-impossibility}(No-free-lunch for conditional independence testing). 
Assume that $X$ is continuous, given any $n \in \mathbb{N}, \alpha \in(0,1), M \in(0, \infty]$, and any potentially randomised test $\eta_n$ that has valid level $\alpha$ for the null hypothesis $\mathcal{P}_{0, M}$, we have that $\PP_Q\left(\eta_n=1\right) \leq \alpha$ for all $Q \in \mathcal{Q}_{0, M}$. Thus $\eta_n$ cannot have power against any alternative.
\end{lem}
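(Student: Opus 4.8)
The plan is to deduce the lemma from a single structural fact about the null class and then let the definition of "valid level" do the rest. First I would record what needs proving: since $\eta_n$ is a (possibly randomised) test, write $\PP_Q(\eta_n=1)=\E_{Q^{\otimes n}}[\phi_n]$ for a measurable $\phi_n\colon (\mathcal X\times\mathcal T\times\mathcal Y)^n\to[0,1]$, and "valid level $\alpha$ for $\mathcal P_{0,M}$" means $\sup_{P\in\mathcal P_{0,M}}\E_{P^{\otimes n}}[\phi_n]\le\alpha$. The key claim I would establish is that $\mathcal Q_{0,M}$ lies in the total–variation closure of the conditionally–independent class: \emph{for every $Q\in\mathcal Q_{0,M}$ and every $\epsilon>0$ there is $P\in\mathcal P_{0,M}$ (so $T\perp Y\mid X$ under $P$, with the density bounds in the definition of $\mathcal P_{0,M}$) such that $d_{\mathrm{TV}}(Q,P)<\epsilon$.} Granting this, the lemma is immediate: by sub-additivity of total variation over products, $d_{\mathrm{TV}}(Q^{\otimes n},P^{\otimes n})\le n\,d_{\mathrm{TV}}(Q,P)<n\epsilon$, and since $0\le\phi_n\le1$, $|\E_{Q^{\otimes n}}[\phi_n]-\E_{P^{\otimes n}}[\phi_n]|\le d_{\mathrm{TV}}(Q^{\otimes n},P^{\otimes n})<n\epsilon$; as $P\in\mathcal P_{0,M}$ gives $\E_{P^{\otimes n}}[\phi_n]\le\alpha$, we get $\PP_Q(\eta_n=1)\le\alpha+n\epsilon$, and letting $\epsilon\downarrow0$ yields $\PP_Q(\eta_n=1)\le\alpha$. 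Since $Q\in\mathcal Q_{0,M}$ was arbitrary, $\eta_n$ has no power against any such alternative.

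The substantive step, and the one I expect to be the main obstacle, is the construction of the conditionally–independent approximant $P$ for a given $Q\in\mathcal Q_{0,M}$; here I would follow \cite{shah2020hardness}. The resource that makes this possible is precisely the assumed continuity of the conditioning variable $X$: because $X$ has a density, the conditional–independence constraint $T\perp Y\mid X$ is, at a fixed sample size, essentially "invisible"—one cannot match observations sharing a value of $X$. Concretely, writing the density of $Q$ as $q(x)\,q(t\mid x)\,q(y\mid t,x)$, the idea is to fix a sufficiently fine measurable partition $\{B_j\}$ of the $X$-space (together with an exceptional set carrying $Q$-probability below $\epsilon$) and, on each cell $B_j$, to replace the conditional law of $(T,Y)$ given $X$ by one under which $T$ and $Y$ are conditionally independent, calibrated so that the marginal law of a single observation is perturbed by only $O(\epsilon)$ in total variation. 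The density bound $M$ in the definitions of $\mathcal P_{0,M}$ and $\mathcal Q_{0,M}$ is exactly what guarantees enough "room" to carry out this reshuffling while keeping $P$ inside $\mathcal P_{0,M}$; one then controls the total–variation cost cell by cell and sums, choosing the partition fine enough that the relevant conditional densities are nearly constant in $x$ across each cell, to obtain $d_{\mathrm{TV}}(Q,P)<\epsilon$. The real work lives here: measurability of the partition and of the constructed Markov kernels, checking the reshuffled object is a genuine probability kernel, and the cell-wise total–variation estimates.

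Finally I would dispose of the case $M=\infty$ recorded as Remark~4, where one only assumes $X$ has a conditional density with no uniform bound. This follows by truncation and a limiting argument: approximate $Q$ in total variation by some $Q_M\in\mathcal Q_{0,M}$ for $M$ large (truncating and renormalising the relevant conditional density on a set of large $Q$-probability), apply the bounded case to obtain $P_M\in\mathcal P_{0,M}\subseteq\mathcal P_{0,\infty}$ with $d_{\mathrm{TV}}(Q_M,P_M)$ small, and combine via the triangle inequality together with sub-additivity of total variation over the $n$-fold product. I anticipate no difficulty in this reduction or in the concluding argument—essentially all of the technical difficulty is concentrated in the cell-wise construction and estimate of the approximating conditionally–independent law.
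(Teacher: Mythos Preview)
The paper does not prove this lemma: it is simply quoted as an adaptation of Theorem~2 and Remark~4 of \cite{shah2020hardness} and then invoked as a black box in the proof of Theorem~\ref{thm:impossibility-finite}. There is therefore no ``paper's own proof'' to compare against; your proposal is an attempt to reproduce the Shah--Peters argument itself.

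At the strategic level your sketch is correct and matches the Shah--Peters approach: the result follows once one shows $\mathcal Q_{0,M}$ is contained in the total-variation closure of $\mathcal P_{0,M}$, and the continuity of $X$ is precisely what makes this density statement true. Your reduction from the density claim to the conclusion (sub-additivity of TV over products, $|\E_{Q^{\otimes n}}\phi_n-\E_{P^{\otimes n}}\phi_n|\le d_{\mathrm{TV}}(Q^{\otimes n},P^{\otimes n})$, then $\epsilon\downarrow 0$) is clean and correct, as is the truncation argument for $M=\infty$.

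One caution on the construction paragraph: the description ``on each cell $B_j$, replace the conditional law of $(T,Y)$ given $X$ by a conditionally independent one'' is not quite the mechanism Shah--Peters use, and as written it is not obvious the TV cost of such a replacement tends to zero with the mesh of the partition---making $T\perp Y\mid X$ on a cell by force changes the joint law by an amount governed by the conditional dependence in $Q$, not by the cell diameter. Their actual construction exploits continuity of $X$ differently: they perturb $X$ itself to inject a small amount of auxiliary randomness (effectively encoding the needed information in low-order digits of $X$), so that conditioning on the perturbed $X$ renders $T$ and $Y$ independent while the law of the triple moves only slightly in TV. Since you explicitly defer the details to \cite{shah2020hardness}, this is not a gap in your plan, but be aware that the partition-and-replace heuristic you wrote would need to be rephrased along these lines to go through.
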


Let $\bar{\mathcal{E}} \supseteq \bar{\mathcal{P}}^0$ be the space of distributions for the full data point $\bar{O}$ with the distribution of $(X, Y) \mid T = 0,1$ both dominated by the Lebesgue measure. These distributions may or may not satisfy the MAR assumption that $Y \perp T \mid X$. For any $x \in \mathcal{X}$, the prediction set $\hat{C}(x)$ is the short hand notation for $C\left(x ; O_1, \ldots, O_n\right)$. This notation is helpful to clarify the dependence of $\hat{C}$ on the observed training data $\left(O_1, \ldots, O_n\right)$. Let $\bar{O}_{n+1}$ to denote the full data point from a future draw.

The subtle difference between $\bar{O}$ and $O$ is that $\bar{O}$ represents the full but unobserved data $(X,Y,T)$ whereas $O$ corresponds to the observed data $(X,T,(1-T)Y)$.

Define the randomized test $\eta\left(\bar{O}_1, \ldots, \bar{O}_{n+1}\right)$ as follows: 
\begin{equation}\label{eq:test}
\eta\left(\bar{O}_1, \ldots, \bar{O}_{n+1}\right)=
\begin{cases}
 \mathbbm{1}\bigl\{ Y_{n+1} \notin C (X_{n+1} ; O_1, \ldots, O_n ) \bigr\}, \text{ if } T_{n+1}=1 ;\\
 \begin{cases}
1, \text{ w.p. } \alpha ,\\
0, \text{ w.p. } 1-\alpha,
\end{cases}
\text{ if } T_{n+1}=0.
\end{cases}
\end{equation}
 We note that although $\eta$ is a function of $n+1$ full data points $\left(\bar{O}_1, \ldots, \bar{O}_{n+1}\right)$, it only relies on $n$ observed training data points $\left(O_1, \ldots, O_n\right)$ and one future full data point $\bar{O}_{n+1}$. 
Furthermore, we can see that
\begin{align*}
    \PP_{\bar{P}^0}\left(\eta (\bar{O}_1, \ldots, \bar{O}_{n+1} )=1\right)  &=  \PP_{\bar{P}^0}\left(\eta (\bar{O}_1, \ldots, \bar{O}_{n+1} )=1 | T_{n+1} = 1 \right) \PP(T_{n+1} = 1)   \\
    &\qquad + \PP_{\bar{P}^0}\left(\eta (\bar{O}_1, \ldots, \bar{O}_{n+1} )=1 | T_{n+1} = 1 \right) \PP(T_{n+1} = 1)\\
    & = \PP_{\bar{P}^0} \bigl( Y \notin \widehat{C}(X) | T=1 \bigr) \PP(T = 1) + \PP_{\bar{P}^0} \bigl( \eta = 1 | T=0 \bigr) \PP(T = 0) \\
    &\leq \alpha  \PP(T = 1) + \alpha \PP(T=0) \text{, by \eqref{eq:finite-marginal-set} and the definition of }\eta\\
    &\leq \alpha.
\end{align*}
Therefore, $\eta$ can be viewed as a test with level $\alpha$ for the null hypothesis $Y \perp T \mid X$. Because Theorem \ref{thm:conditional-independence-impossibility} states that the power of $\eta$ against the alternative hypothesis is at most $\alpha$, we have that
\begin{align}\label{eq:power}
\PP_{\bar{Q}}\left(\eta\left(\bar{O}_1, \ldots, \bar{O}_{n+1}\right)=1\right) \leq \alpha \quad \text { for any distribution } \bar{Q} \in \bar{\mathcal{E}} \text {. }
\end{align}
Using the fact that $\eta$ is a completely randomized test when when $T_{n+1}) = 0$, this gives $\PP(\eta = 1 | T_{n+1} = 0) = \alpha$.
Therefore,
\begin{align*}
    \alpha \geq \PP_{\bar{Q}}\left(\eta\left(\bar{O}_1, \ldots, \bar{O}_{n+1}\right)=1\right) & = \PP_{\bar{Q}} (\eta = 1 | T_{n+1} = 0) \PP( T_{n+1} = 0) + \PP_{\bar{Q}} (\eta = 1 | T_{n+1} = 1) \PP( T_{n+1} = 1)\\
    &= \alpha \PP ( T_{n+1} = 0 ) + \PP_{\bar{Q}} (\eta = 1 | T_{n+1} = 1) \PP( T_{n+1} = 1).
\end{align*}
Therefore, 
\begin{align}\label{eq:power-conditional}
    \alpha \geq \PP_{\bar{Q}} (\eta = 1 | T_{n+1} = 1)  = \PP_{\bar{Q}}\bigl(Y_{n+1} \notin C (X_{n+1} ; O_1, \ldots, O_n ) \mid T_{n+1}=1\bigr).
\end{align}

For any $x \in \mathcal{X}$, let $D_x \subseteq \mathcal{Y}$ be any Lebesgue measurable set with nonzero finite measure and ${U}_x$ be the uniform distribution on $D_x$. Take $\bar{Q}$ to be a distribution such that 
\begin{enumerate}[label=(\roman*)]
    \item the distribution of $T$ is an arbitrary Bernoulli distribution with success probability in $(0,1)$;
    \item the distributions of $X \mid T=0,1$ satisfy the dominance condition \eqref{eq:measure-dominating}, which is that the source distribution ($X|T=0$) dominates the target distribution ($X|T=1$), and are arbitrary in all other aspects
    \item  the distribution of $Y \mid X=x, T=0$ is arbitrary and the distribution of $Y \mid X=x, T=1$ is ${U}_x$.
\end{enumerate}

We take $\bar{P}^0$ to be the distribution that is identical to $\bar{Q}$, except that the distribution of $Y \mid X=x, T=1$ is identical to $Y \mid X=x, T=0$ rather than ${U}_x$ under $\bar{P}^0$. Note that $\bar{P}^0 \in \bar{\mathcal{P}}^0$. Since $\hat{C}$ is trained only on observed data $\left(O_1, \ldots, O_n\right)$ and $\bar{P}^0$ and $\bar{Q}$ imply the same distribution of the observed data point $O=(X,T, (1-T)Y)$, following \eqref{eq:power-conditional}, we have that
$$
\begin{aligned}
&\PP_{\bar{Q}}\bigl(Y_{n+1} \notin C (X_{n+1} ; O_1, \ldots, O_n ) \mid T_{n+1}=1\bigr) \\
=&\int_y \PP_{\bar{P}_0}\bigl(y \notin C (X_{n+1} ; O_1, \ldots, O_n ) \mid T_{n+1}=1\bigr)U_{X_{n+1}}(\mathrm{d} y) \leq \alpha,
\end{aligned}
$$
where the probability is over training data and possible exogenous randomness in $C$. Since $D_x$ is arbitrary, it follows that the integrand is bounded by $\alpha$, namely
\begin{equation}\label{eq:impossibility-first-step}
\PP_{\bar{P}_0}\Bigl(y \notin C (X_{n+1} ; O_1, \ldots, O_n ) \mid T_{n+1}=1 \Bigr) \leq \alpha
\end{equation}
for a.e. $y \in \mathcal{Y}$. The desired result \eqref{eq:finite-marginal-set-infinite} follows by replacing the notations $X_{n+1}$ and $T_{n+1}$ with $X$ and $T$, respectively, and noting that $C\left(x ; O_1, \ldots, O_n\right)=\hat{C}(x)$ by definition.

Note that \eqref{eq:impossibility-first-step} implies
\[
\inf_{y\in\mathcal{Y}} \mathbb{P}_{\bar{P}_0}(y \in \widehat{C}(X)|T = 1) \ge 1 - \alpha,
\]
for any prediction set $\widehat{C}$ that satisfies~\eqref{eq:finite-marginal-set}. Define $L(X) = \inf\{t:\,t\in\widehat{C}(X)\}$ and $U(X) = \sup\{t:\,t\in\widehat{C}(X)\}$. Because $\widehat{C}(X)\subseteq[L(X),\infty)$, we get
\begin{align}
\mathbb{P}_{\bar{P}_0}\Bigl(\inf_{y\in\mathcal{Y}}y \ge L(X)\bigm|T = 1\Bigr) = \inf_{y\in\mathcal{Y}}\,\mathbb{P}_{\bar{P}_0}(y \ge L(X) \bigm|T = 1) = \inf_{y\in\mathcal{Y}}\mathbb{P}_{\bar{P}_0} \bigl(y\in [L(X), \infty) \bigm| T = 1 \bigr) \ge 1 - \alpha. 
\end{align}
Similarly, because $\widehat{C}(X)\subseteq (-\infty, U(X)]$,
\begin{align}
\mathbb{P}_{\bar{P}_0}\Bigl(U(X) \ge \sup_{y\in\mathcal{Y}}y\bigm|T = 1\Bigr) = \inf_{y\in\mathcal{Y}}\mathbb{P}_{\bar{P}_0}\bigl(y \le U(X)\bigm|T = 1\bigr) = \inf_{y\in\mathcal{Y}}\mathbb{P}_{\bar{P}_0}\bigl(y\in (-\infty, U(X)]\bigm|T = 1\bigr) \ge 1 - \alpha.
\end{align}
Hence with $\mathcal{Y} = \mathbb{R}$ here, then any prediction set with valid coverage must have at least one of the end points $\infty$ in absolute value with probability at least $1-\alpha$ and hence must have an infinite diameter with probability at least $1-\alpha$.

\section{Some useful propositions and lemmas}\label{sec:supporting}
\begin{prop}\label{prop:chernozhukov2009improving}(\cite{chernozhukov2009improving}, Proposition 2)
Let the target function $f_{0}: \mathcal{X}^{d} \rightarrow K$ be weakly increasing and measurable in $x .$ Let $\hat{f}: \mathcal{X}^{d} \rightarrow K$ be a measurable function that is an initial estimate of $f_{0}.$ 
\begin{enumerate}
    \item For each ordering $\pi$ of $1, \ldots, d$, the $\pi$-rearranged estimate $\hat{f}_{\pi}^{*}$ is weakly increasing. Moreover, $\hat{f}^{*}$, an average of $\pi$-rearranged estimates, is weakly increasing.
    \item A $\pi$-rearranged estimate $\hat{f}_{\pi}^{*}$ of $\hat{f}$ weakly reduces the estimation error of $\hat{f}$ :
$$
\biggl\{\int_{\mathcal{X}^{d}}\bigl|\hat{f}_{\pi}^{*}(x)-f_{0}(x)\bigr|^{p} d x\biggr\}^{1 / p} \leq\biggl\{\int_{\mathcal{X}^{d}}\bigl|\hat{f}(x)-f_{0}(x)\bigr|^{p} \mathrm{d} x\biggr\}^{1 / p}.
$$
\end{enumerate}
\end{prop}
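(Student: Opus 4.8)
The plan is to reduce the multivariate statement to the one–dimensional monotone rearrangement and then iterate over coordinates. First I would set up the univariate rearrangement operator: for a measurable function $g$ on an interval, write $g^{\star}$ for its increasing rearrangement, i.e. the quantile function of the law of $g(U)$ with $U$ uniform, which is weakly increasing and equidistributed with $g$. I would isolate two elementary facts about this operator. Fact (a) (monotonicity): if $g_1 \le g_2$ pointwise then $g_1^{\star}\le g_2^{\star}$ pointwise, which is immediate since the distribution function of $g_1(U)$ then dominates that of $g_2(U)$, so the quantile functions are ordered. Fact (b) ($L^p$–contraction toward a monotone target): if $h$ is weakly increasing then $\int|g^{\star}-h|^p \le \int|g-h|^p$. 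For $p=1$ this follows from the layer–cake identity $|a-b|=\int|\mathbbm 1\{a\le t\}-\mathbbm 1\{b\le t\}|\,dt$, Fubini, and the observation that for monotone $h$ the set $\{h\le t\}$ is an interval, so replacing $\{g\le t\}$ by the interval of equal measure can only shrink the symmetric difference (using $|A\triangle B|\ge\big||A|-|B|\big|$); for general $p\ge 1$ I would instead invoke the classical Hardy--Littlewood--P\'olya/Lorentz rearrangement inequality, noting that $(a,b)\mapsto|a-b|^p$ is submodular, so sorting both arguments in the same direction minimizes the integral over all rearrangements and $h^{\star}=h$ already.

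Given (a) and (b), I would prove the first assertion by induction on the rearrangement steps that compose the $\pi$–rearrangement, the $k$th step acting slicewise in the variable $x_{\pi(k)}$. The base step makes the estimate weakly increasing in $x_{\pi(1)}$. Assuming after step $k$ the current function is weakly increasing in $x_{\pi(1)},\dots,x_{\pi(k)}$, step $k+1$ makes it weakly increasing in $x_{\pi(k+1)}$ by construction, and it preserves weak monotonicity in each earlier variable $x_{\pi(j)}$, $j\le k$: raising $x_{\pi(j)}$ only raises the slice in $x_{\pi(k+1)}$ pointwise, so by Fact (a) the rearranged slices remain ordered. Hence $\hat f_\pi^{\star}$ is weakly increasing in all coordinates, and since a finite average of coordinatewise weakly increasing functions is again such, $\hat f^{\star}$ is weakly increasing too.

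For the second assertion I would chain Fact (b) across the $d$ steps: because $f_0$ is weakly increasing, every one–dimensional slice of $f_0$ in the active variable $x_{\pi(k)}$ is weakly increasing, so Fact (b) applies on each slice, and integrating the remaining coordinates out with Fubini shows each step does not increase $\|\cdot - f_0\|_{L^p(\mathcal X^d)}$; composing the $d$ steps yields the displayed inequality. (The analogue for the averaged estimate then follows from the triangle inequality, $\|\hat f^{\star}-f_0\|_p\le\max_\pi\|\hat f_\pi^{\star}-f_0\|_p\le\|\hat f-f_0\|_p$.) I expect the real obstacle to be Fact (b) — the one–dimensional statement that sorting a function toward monotonicity moves it closer in $L^p$ to any fixed monotone function; the rest is bookkeeping, and this is exactly the content attributed to Proposition 2 of \cite{chernozhukov2009improving}, whose proof rests on the classical rearrangement inequalities.
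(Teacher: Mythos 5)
The paper does not actually prove this proposition; it imports it verbatim from \cite{chernozhukov2009improving} (Proposition~2 there) purely to justify restricting attention to a monotone estimator $\widehat m(\cdot,x)$ in the main text. So there is no internal proof to compare against. That said, your reconstruction is correct and is, to my knowledge, essentially the argument given in the cited reference: reduce to the univariate rearrangement operator, establish its monotonicity in the pointwise order (your Fact~(a)) and its $L^p$-nonexpansivity toward a monotone target (your Fact~(b), the Lorentz/Cambanis--Simons--Stout submodular rearrangement inequality with $L(a,b)=|a-b|^p$, $p\ge 1$), then iterate slicewise over the coordinates in the order prescribed by $\pi$, preserving earlier monotonicity via Fact~(a) and controlling the $L^p$ error via Fact~(b) plus Fubini. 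The one-dimensional $p=1$ layer-cake argument and the pass to general $p\ge 1$ via convexity of $t\mapsto|t|^p$ (hence submodularity of $|a-b|^p$) are both right, and the final remark that the averaged estimate $\hat f^{\star}$ inherits the bound by the triangle inequality is a correct (if slightly stronger than stated) addendum. The only thing I would flag as worth making explicit if you were to write this out in full is the Fubini/measurability bookkeeping justifying the slicewise application of the one-dimensional facts, but that is routine.
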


\subsection{Lemma \ref{lem:expectation-empirical-proc-step} and its proof}\label{pf:lem:expectation-empirical-proc-step}
\begin{lem}\label{lem:expectation-empirical-proc-step}
There exists a universal constant $\mathfrak{C} < \infty$ such that for any functions $s(t,x)\in[-\kappa_0, \kappa_0]$ and $h(x,y)$,
\begin{equation}
    \E \Bigl[ \sup_{\theta} |\mathbb{G}_{n}[ s(t,x)\mathbbm{1}\{h(x,y) \leq \theta  \}]| \Bigr] \leq \mathfrak{C}\kappa_0.
\end{equation}
\end{lem}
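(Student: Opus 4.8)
The plan is to view the supremum as the empirical process indexed by the single function class
\[
\mathcal{F} ~=~ \bigl\{\, w=(x,y,t)\ \mapsto\ s(t,x)\,\mathbbm{1}\{h(x,y)\le\theta\} ~:~ \theta\in\mathbb{R}\,\bigr\},
\]
to show that $\mathcal{F}$ is a VC-type class with a \emph{constant} envelope, and then to invoke a standard uniform-entropy maximal inequality. First I would take the envelope $F(w)=|s(t,x)|\le\kappa_0$, so that $\|F\|_{P,2}\le\kappa_0$ automatically. The key structural fact is that the family of sets $\{\{(x,y):h(x,y)\le\theta\}:\theta\in\mathbb{R}\}$ is linearly ordered by inclusion, hence a VC class of sets of index at most one; consequently the indicator class $\mathcal{G}=\{(x,y)\mapsto\mathbbm{1}\{h(x,y)\le\theta\}:\theta\in\mathbb{R}\}$ satisfies $N(\varepsilon,\mathcal{G},L_2(Q))\le (A/\varepsilon)^{v}$ for universal constants $A,v$, uniformly over all probability measures $Q$ (this is exactly the bound obtained via Theorem 2.6.7 of \cite{van1996weak} already used in Section~\ref{appsec:proof-convergence-if-funcion-whole}).

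Next I would transfer this bound to $\mathcal{F}$ via the change-of-measure device used elsewhere in the paper: for any probability measure $Q$ and any $\theta_1,\theta_2$,
\[
\bigl\|\,s\,\mathbbm{1}\{h\le\theta_1\} - s\,\mathbbm{1}\{h\le\theta_2\}\,\bigr\|_{Q,2}
~=~ \|F\|_{Q,2}\,\bigl\|\,\mathbbm{1}\{h\le\theta_1\}-\mathbbm{1}\{h\le\theta_2\}\,\bigr\|_{\widetilde{Q},2},
\qquad
d\widetilde{Q} := s^2\,dQ\Big/\!\int s^2\,dQ ,
\]
so that $N(\varepsilon\|F\|_{Q,2},\mathcal{F},L_2(Q))\le N(\varepsilon,\mathcal{G},L_2(\widetilde{Q}))\le(A/\varepsilon)^v$, again uniformly in $Q$. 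In particular the uniform-entropy integral
\[
J_0 ~:=~ \int_0^1 \sqrt{1+\log\sup_Q N\bigl(\varepsilon\|F\|_{Q,2},\mathcal{F},L_2(Q)\bigr)}\;d\varepsilon ~\le~ \int_0^1\sqrt{1+v\log(A/\varepsilon)}\;d\varepsilon
\]
is a finite universal constant.

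Finally I would apply a standard maximal inequality for VC-type classes with square-integrable envelope (e.g.\ Theorem 2.14.1 of \cite{van1996weak}, or Lemma A.1 of \cite{srebro2012optimistic} as invoked above), which yields $\E\bigl[\sup_\theta|\mathbb{G}_n[\,s(t,x)\mathbbm{1}\{h(x,y)\le\theta\}\,]|\bigr] \le \mathfrak{C}_0\, J_0\,\|F\|_{P,2}\le \mathfrak{C}_0 J_0 \kappa_0$, and set $\mathfrak{C}=\mathfrak{C}_0 J_0$. I expect the only point requiring any care to be the uniform-entropy bound for $\mathcal{F}$ in the middle step, but this is genuinely routine here since the underlying set system is totally ordered and the product with the fixed function $s$ is handled by the same change-of-measure trick already exploited in the paper; measurability/separability issues do not arise since on any finite sample the indicator $\mathbbm{1}\{h(X_i,Y_i)\le\theta\}$ takes at most $n+1$ distinct values. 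As an alternative to the entropy route, one could symmetrize, apply the Ledoux--Talagrand contraction principle (the maps $u\mapsto s(T_i,X_i)\,u$ are $\kappa_0$-Lipschitz and vanish at $0$) to strip off $s$, and then bound $\E\sup_\theta|\mathbb{G}_n[\mathbbm{1}\{h\le\theta\}]|$ by a universal constant via the DKW inequality applied to the empirical CDF of $h(X_i,Y_i)$.
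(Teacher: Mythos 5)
Your main argument is precisely the proof the paper gives in Section S.9.1: same function class and constant envelope $F(w)=|s(t,x)|\le\kappa_0$, same change-of-measure from $Q$ to $\widetilde{Q}\propto s^2\,dQ$ to transfer the VC-type uniform entropy bound for the indicator class to $\mathcal{F}$, and the same maximal inequality (Theorem 2.14.1 of \cite{van1996weak}) to conclude. The alternative you sketch at the end (symmetrization, Ledoux--Talagrand contraction to strip off the bounded multiplier, then DKW for the indicator process) is a legitimate and arguably more elementary route, but the paper does not use it.
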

\begin{proof}
We have a class of functions $\mathcal{F}=\bigl\{f: f_\theta(w)=s(t,x) \mathbbm{1}{\{ h(x,y) \leq \theta\} }  \bigr\}$. Notice that $\forall \theta \in \mathbb{R}$, we have $|f_\theta (w)| \leq |s(t,x)|$. Therefore, $F(w) := |s(t,x)|$ is an envelope function of $\{ f_\theta(\cdot): \theta \in \mathbb{R} \}$. For any discrete probability measure $Q$, let $\|f(\cdot)\|_{Q}$ denote the empirical $L_2(Q)$ norm where $\|f(\cdot)\|_{Q} := \bigl( \sum_{i=1}^n f^2(x_i) Q(x_i) \bigr)^{1/2} $.

Let $h_i$ denote $h_i:=h(x_i, y_i)$. For any function $f \in \mathcal{F}$ and $\theta_1,\theta_2 \in \mathbb{R}$, 
\begin{align}
        \|f_{\theta_1} - f_{\theta_2} \|_Q &= \Bigl[ \sum_{i=1}^n  s(t_i,x_i)^2 \bigl( \mathbbm{1}{\{ h_i \leq \theta_1 \}} - \mathbbm{1}{\{ h_i \leq \theta_2\} } \bigr)^2 Q(w_i) \Bigr]^{1/2}\\
        &= \frac{ \Bigl[ \sum_{i=1}^n   s(t_i,x_i)^2 \bigl( \mathbbm{1}{\{ h_i \leq \theta_1 \}} - \mathbbm{1}{\{ h_i \leq \theta_2\} } \bigr)^2 Q(w_i) \biggr]^{1/2} }{ \bigl[ \sum_{i=1}^n  s(t_i,x_i)^2 Q(w_i) \bigr]^{1/2}} \times\\
        &\qquad \qquad \qquad \qquad \qquad \qquad \qquad \qquad \qquad \qquad \bigl[ \sum_{i=1}^n s(t_i,x_i)^2 Q(w_i) \bigr]^{1/2}\\
        &= \bigl\| \mathbbm{1}{\{ h_i \leq \theta_1\}} - \mathbbm{1}{\{ h_i \leq \theta_2\} } \bigr\|_{ \widetilde{Q}} \cdot \bigl[ \sum_{i=1}^n s(t_i,x_i)^2 Q(w_i) \bigr]^{1/2}, \label{eq:change-of-measure}
\end{align}
where $\widetilde{Q}$ is the new probability measure defined by 
\begin{equation}\label{eq:new-measure}
\widetilde{Q}(w_i):=  s(t_i,x_i)^2 Q(w_i) /  \sum_{i=1}^n s(t_i,x_i)^2 Q(w_i).
\end{equation}
Using the definition of $F(w)$ as the envelope function of the class $\mathcal{F}$, \eqref{eq:change-of-measure} becomes
\begin{equation}\label{eq:distance-relationship}
        \| f_{\theta_1} - f_{\theta_2} \|_Q = \bigl\| \mathbbm{1}{\{ h_i \leq \theta_1\}} - \mathbbm{1}{\{ h_i \leq \theta_2\} } \bigr\|_{ \widetilde{Q}} \cdot \|F\|_Q.
\end{equation} 
Define a new class of functions $\mathcal{F}_1:= \bigl\{f: f_\theta(z) = \mathbbm{1}\{h(x,y) \leq \theta \} \bigr\}$ with its envelope function $F_1 \equiv 1$. Then  \eqref{eq:distance-relationship} gives a relationship between the covering numbers of the two classes $\mathcal{F}$ and $\mathcal{F}_2$,
\begin{equation}\label{eq:covering-change-of-measure}
N\bigl(\varepsilon \| F \|_{Q}, \mathcal{F}, L_{2}(Q)\bigr) \leq    N \bigl(\varepsilon, \mathcal{F}_1, L_{2}(\widetilde{Q}) \bigr).
\end{equation}

From Chapter 2.6 of \cite{van1996weak}, we know that the VC dimension of the class $\mathcal{F}_1 $ is $1$. We use Theorem 2.6.7 of \cite{van1996weak} with $r=2$ to get a bound on the covering numbers. Specifically, for any probability measure $Q$, there exists a universal constant $\mathfrak{C}$ such that
\begin{equation}\label{eq:covering1}
 N\bigl(\varepsilon, \mathcal{F}_1, L_{2}(Q)\bigr) = N \bigl(\varepsilon \| F_1 \|_{Q}, \mathcal{F}, L_{2}(Q)\bigr) \leq \frac{\mathfrak{C}}{\varepsilon}. 
\end{equation}
Because \eqref{eq:covering1} holds for any probability measure $Q$, we choose $Q$ to be $\widetilde{Q}$ defined in \eqref{eq:new-measure} in order to bound the right hand side of  \eqref{eq:covering-change-of-measure}, 
\begin{equation}\label{eq:covering}
N \bigl(\varepsilon \| F \|_{Q}, \mathcal{F}, L_{2}(Q) \bigr) \leq    N \bigl(\varepsilon, \mathcal{F}_1, L_{2}(\widetilde{Q})\bigr) \leq \frac{\mathfrak{C}}{\varepsilon}.
\end{equation}

Next, we obtain an upper bound of the uniform-entropy 
$$
J(\delta, \mathcal{F}):=\sup _{Q} \int_{0}^{\delta} \sqrt{1+\log N\bigl(\varepsilon\|F \|_{Q}, \mathcal{F}, L_{2}(Q)\bigr)} d \varepsilon,
$$
where the supremum is taken over all discrete probability measures $Q$ with $\|F \|_{Q}>0.$ 
Applying $p=1$ to Theorem 2.14.1 of \cite{van1996weak} gives us
\begin{equation}\label{eq:expectation-bound}
 \E \sup_{ f \in \mathcal{F} } | \mathbb{G}_n f | ~\leq~ J(1, \mathcal{F} ) \|F \|_{Q} ~\lesssim~ \|F \|_{Q} \int_{0}^{1}  \mathfrak{C}^{1/2}\sqrt{\log(1/\varepsilon) } d \varepsilon ~\leq~ \mathfrak{C}_1 \kappa_0, 
\end{equation}
where the second inequality is from \eqref{eq:covering} and $\mathfrak{C}_1$ is a universal constant.
\end{proof}

\subsection{Lemma \ref{lem:covering-product} and its proof}
\begin{lem}\label{lem:covering-product}
For any two bounded function classes $\mathcal{F}_1$ and $\mathcal{F}_2$ with $\sup_{f_1 \in \mathcal{F}_1 } \|f_1 \|_\infty \leq C_1$ and $\sup_{f_2 \in \mathcal{F}_2} \|f_2 \|_\infty \leq C_2$, the following holds for the covering number of the class of functions $\mathcal{F}_1 \times \mathcal{F}_2 := \{f=f_1 f_2, f_1 \in \mathcal{F}_1 \text{ and }f_2 \in \mathcal{F}_2 \}$,
\begin{equation}\label{eq:covering-product}
   N \bigl(\varepsilon, \mathcal{F}_1 \times \mathcal{F}_2, L_2(Q) \bigr) \leq N \bigl(\frac{\varepsilon}{2C_2}, \mathcal{F}_1, L_2(Q) \bigr) N \bigl(\frac{\varepsilon}{2C_1}, \mathcal{F}_2, L_2(Q) \bigr),
\end{equation}
where $Q$ is some probability measure.
\end{lem}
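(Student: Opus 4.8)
The plan is to form a cover of $\mathcal{F}_1 \times \mathcal{F}_2$ from covers of the two factor classes by taking all pairwise products, and to control the approximation error with a single triangle inequality together with the uniform boundedness of the two classes.

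First I would fix $\varepsilon>0$, let $\{g_1,\dots,g_M\}$ be a minimal $\tfrac{\varepsilon}{2C_2}$-cover of $\mathcal{F}_1$ in $L_2(Q)$ with $M=N(\tfrac{\varepsilon}{2C_2},\mathcal{F}_1,L_2(Q))$, and let $\{h_1,\dots,h_L\}$ be a minimal $\tfrac{\varepsilon}{2C_1}$-cover of $\mathcal{F}_2$ in $L_2(Q)$ with $L=N(\tfrac{\varepsilon}{2C_1},\mathcal{F}_2,L_2(Q))$. Replacing each $g_i$ by its pointwise truncation to $[-C_1,C_1]$ does not increase its $L_2(Q)$ distance to any $f_1\in\mathcal{F}_1$, since such $f_1$ already takes values in $[-C_1,C_1]$ and truncation to an interval is $1$-Lipschitz; so I may assume without loss of generality that $\|g_i\|_\infty\le C_1$ while $\{g_i\}$ is still a valid cover. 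If the covering-number convention in the paper already forces cover points to lie in the class, this step is unnecessary.

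I then claim $\{g_ih_j:1\le i\le M,\ 1\le j\le L\}$ is an $\varepsilon$-cover of $\mathcal{F}_1\times\mathcal{F}_2$ in $L_2(Q)$. Given $f=f_1f_2$ with $f_1\in\mathcal{F}_1$ and $f_2\in\mathcal{F}_2$, choose $g_i$ with $\|f_1-g_i\|_{L_2(Q)}\le\tfrac{\varepsilon}{2C_2}$ and $h_j$ with $\|f_2-h_j\|_{L_2(Q)}\le\tfrac{\varepsilon}{2C_1}$. Adding and subtracting $g_if_2$,
\[
\|f_1f_2-g_ih_j\|_{L_2(Q)}\ \le\ \|(f_1-g_i)f_2\|_{L_2(Q)}+\|g_i(f_2-h_j)\|_{L_2(Q)}\ \le\ C_2\|f_1-g_i\|_{L_2(Q)}+C_1\|f_2-h_j\|_{L_2(Q)}\ \le\ \varepsilon,
\]
using $\|f_2\|_\infty\le C_2$ and $\|g_i\|_\infty\le C_1$. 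Hence $N(\varepsilon,\mathcal{F}_1\times\mathcal{F}_2,L_2(Q))\le ML$, which is exactly \eqref{eq:covering-product}.

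There is no genuine obstacle here: the core of the argument is a one-line triangle inequality, and the only point requiring a moment's care is ensuring that the product $g_if_2$ of a cover element with a function from the other class remains bounded by $C_1C_2$, which the truncation step guarantees. Accordingly I would keep the write-up to essentially the few lines above.
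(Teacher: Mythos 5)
Your proof is correct and follows essentially the same route as the paper's: form the product of $\tfrac{\varepsilon}{2C_2}$- and $\tfrac{\varepsilon}{2C_1}$-covers and bound $\|f_1f_2 - g_ih_j\|_{L_2(Q)}$ via the decomposition $(f_1-g_i)f_2 + g_i(f_2-h_j)$. The only difference is that the paper takes the covers to be subsets of $\mathcal{F}_1$ and $\mathcal{F}_2$ (so $\|g_i\|_\infty\le C_1$ automatically), whereas you allow external cover points and handle the boundedness by truncation — a slightly more careful treatment of the same argument.
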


\begin{proof}
For any two positive numbers $\epsilon_1$ and $\epsilon_2$, let $\mathcal{G}=\{g_1,\dots,g_K\} \subseteq \mathcal{F}_1$ be an $\varepsilon_1$-net for $\mathcal{F}_1$ and $\mathcal{H}=\{h_1,\dots,h_L\} \subseteq \mathcal{F}_2$ be an $\varepsilon_2$-net for $\mathcal{F}_2$, where $K=N (\varepsilon_1, \mathcal{F}_1, L_2(Q))$ and $L=N \bigl(\varepsilon_2, \mathcal{F}_2, L_2(Q))$. For any function $f \in \mathcal{F}_1 \times \mathcal{F}_2$ with $f=f_1 f_2$, where $f_1 \in \mathcal{F}_1$ and $f_2 \in \mathcal{F}_2$, let $\widetilde{g} $ and $\widetilde{h}$ be the closest element in $\mathcal{G}$ and $\mathcal{H}$ to $f_1$ and $f_2$ respectively. Then,
\begin{align*}
\|f-\widetilde{g} \widetilde{h} \|_Q &= \|f_1 f_2-\widetilde{g} \widetilde{h} \|_Q\\
&= \| (f_1 - \widetilde{g} )f_2 +  \widetilde{g} (f_2 - \widetilde{h}) \|_Q \\
&\leq  \| (f_1 - \widetilde{g} )f_2\|_Q +  \| \widetilde{g} (f_2 - \widetilde{h}) \|_Q \\
&\leq C_2 \varepsilon_1 +C_1 \varepsilon_2\\
& = \varepsilon, \text{ when we take }\varepsilon_1 =\varepsilon/2C_2 \text{ and } \varepsilon_2=\varepsilon/2C_1.
\end{align*}
This way $\{ g_k h_l: 1 \leq k \leq K, 1 \leq l \leq L\}$ is an $\varepsilon$-net of $\mathcal{F}_1 \times \mathcal{F}_2$ and the inequality \eqref{eq:covering-product} follows.
\end{proof}

\section{Some more simulation results}\label{sec:more-simulations}
\subsection{Absolute residual score}\label{sec:absolute-residuals}
Figure \ref{fig:synthetic_4methods} and \ref{fig:real_4methods} contain results comparing our doubly robust prediction method with different splits of data, i.e. we are comparing doubly robust prediction with full data described in Algorithm \ref{alg:influence-conformal-full} and split data (Algorithm \ref{alg:influence-conformal-split}) with two splits and three splits, and weighted conformal prediction, all under the absolute residual score as described at the start of Section \ref{sec:simulations}. It can be seen that DRP performs similarly across different splits and they all outperform WCP. 

\begin{figure}[H]
    \centering
    \includegraphics[width=\textwidth]{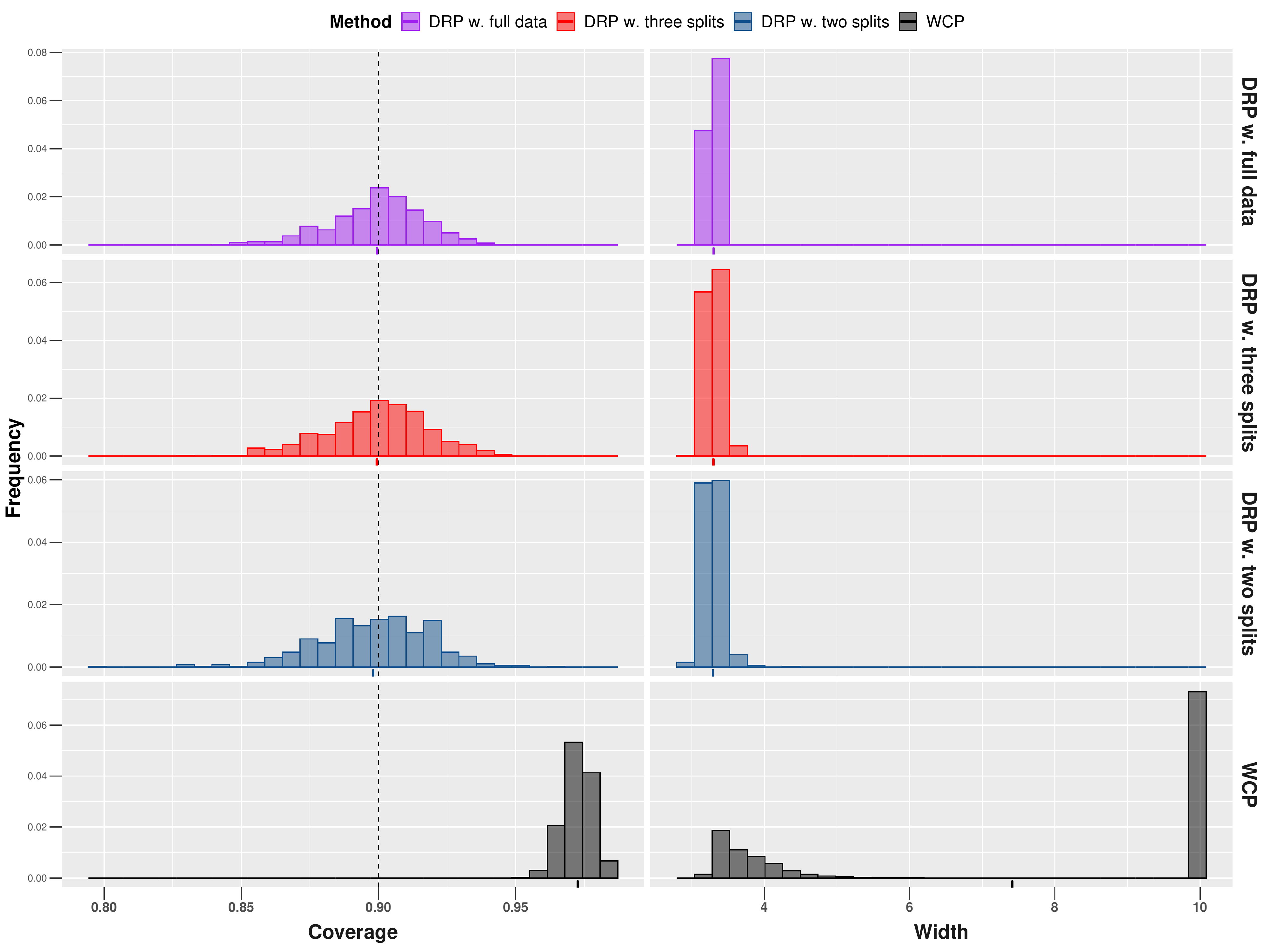}
    \caption{Coverage and width of Double Robust Prediction (DRP) with full data, two splits, three splits and Weighted Conformal Prediction (WCP) on  synthetic data with the absolute residual score. The width is truncated at 10 for WCP.}
    \label{fig:synthetic_4methods}
\end{figure}

\begin{figure}[H]
    \centering
    \includegraphics[width=\textwidth]{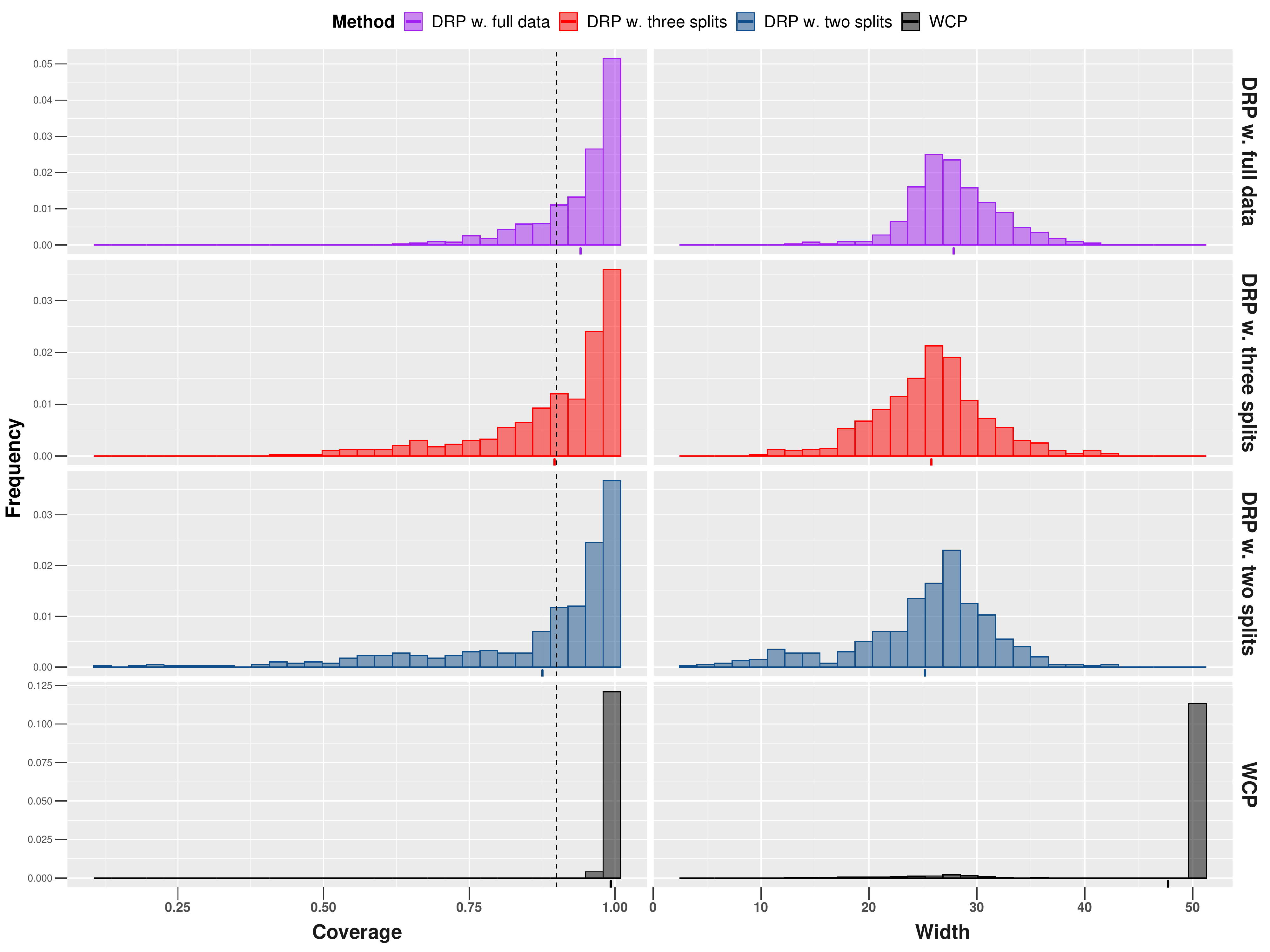}
    \caption{Coverage and width of Doubly Robust Prediction (DRP) with full data, two splits, three splits and Weighted Conformal Prediction (WCP) on  real data with the absolute residual score. The width is truncated at 50 for WCP.}
    \label{fig:real_4methods}
\end{figure}

\subsection{CQR score}\label{sec:cqr-synthetic}
Figure \ref{fig:synthetic_cqr} illustrates DRP and WCP's performance on synthetic data (described in Section \ref{sec:synthetic}) under the CQR score. It is found that all the methods achieve the desired coverage with WCP being overly conservative, i.e. having infinity width over half of the cases.
\begin{figure}[H]
    \centering
    \includegraphics[width=\textwidth]{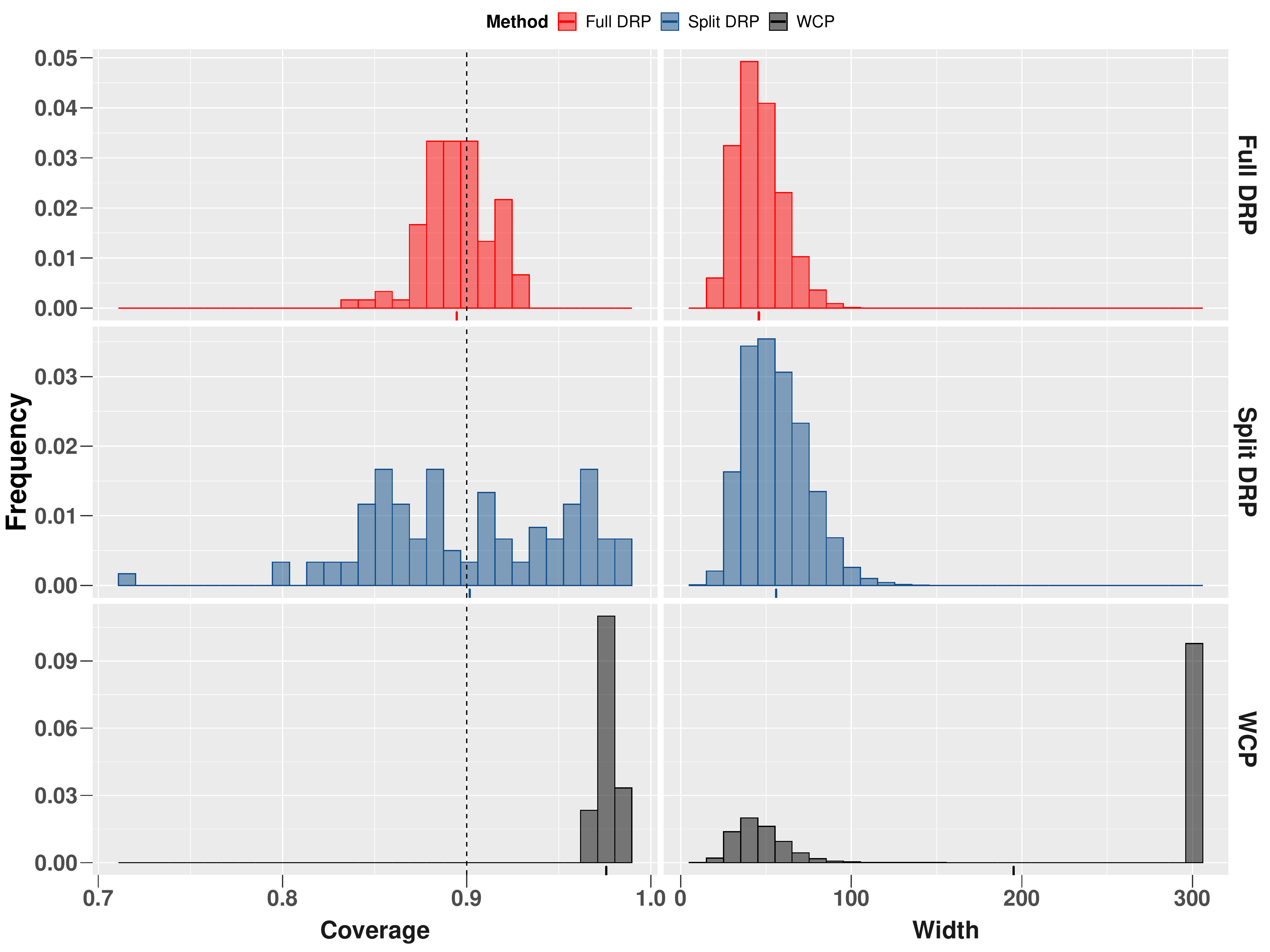}
    \caption{Histograms of coverage and width of Doubly Robust Prediction (DRP) and Weighted Conformal Prediction (WCP) on synthetic data through CQR score, where the width is truncated at 300 for WCP.}
    \label{fig:synthetic_cqr}
\end{figure}

\section{Proof of Theorem~\ref{thm:sensitivity-if}}\label{sec:proof-of-sensitivity-if}
\begin{proof}[Proof of Theorem~\ref{thm:sensitivity-if}]
By definition of the odds ratio function, we have the following expressions:
\begin{align}
&\log \frac{\PP (T=0|X=x,Y=y)}{\PP (T=1|X=x,Y=y)}=\log \frac{\PP(T=0|X=x,Y=0)}{\PP(T=1|X=x,Y=0)}+\gamma^\star(x,y); 
\end{align}
It follows from Bayes' rule that
\begin{align*}
    f(R|X,T=1) = f(R|X,T=0)\frac{\exp^{-\gamma^\star (X,Y)}}{\E \bigl\{\exp^{-\gamma^\star (X,Y) |T=0,X \bigr\} }}.
\end{align*}
And this leads to 
\begin{align}
&\E \Bigl(\mathbbm{1}\{R \leq \theta\}|X, T=1 \Bigr)=\frac{\E [ \mathbbm{1}\{R \leq \theta\} e^{-\gamma^\star (X,Y)}  \big| X,T=0 ]}{\E [ e^{-\gamma^\star (X,Y)} \big| X,T=0 ]} \label{eq:sens-nuisance2}.
\end{align} 
\begin{itemize}
\item If $\eta (x)$ is correct:
This would imply the density ratio is correct, i.e. $$ \exp \bigl\{  - \eta(x) - \gamma^\star(x,y)\bigr\} = \PP(T=1|X=x,Y=y) / \PP(T=0|X=x,Y=y),$$ then we have
\[
\mathbb{E}\left[\mathbbm{1}\{T = 0\}\exp \bigl\{  - \eta(X) - \gamma^\star(X,Y)\bigr\} \big|X = x, Y\right] = \mathbb{P}(T = 0|X = x, Y)\frac{\mathbb{P}(T = 1|X = x,Y)}{\mathbb{P}(T = 0|X = x,Y)} =\mathbb{P}(T = 1|X = x,Y) .
\]
This implies that
\begin{align*}
&\mathbb{E}\left[\mathbbm{1}\{T = 0\}  \exp \bigl\{  - \eta(X) - \gamma^\star(X,Y)\bigr\} \big\{\mathbbm{1}\{R \le \theta\} - m(\theta, X) \big\} \big| \theta   \right]\\ 
&\quad= \mathbb{E}\Bigl[\mathbb{P}(T = 1|X,Y)\big\{\mathbbm{1}\{R \le \theta \} - m(\theta, X)\big\} \big| \theta  \Bigr]\\
&\quad= \mathbb{E}\Bigl[\mathbb{P}(T = 1|X,Y) \bigl\{\mathbb{P}(R \le \theta|X) - m(\theta, X) \bigr\} \big| \theta   \Bigr].
\end{align*}
Similarly, 
\[
\mathbb{E}\Bigl[\mathbbm{1}\{T = 1\}\left\{m(\theta, X) - (1 - \alpha)\right\} \big| \theta  \Bigr] = \mathbb{E}\Bigl[\mathbb{P}(T = 1|X,Y)\bigl\{m(\theta, X) - (1 - \alpha)\bigr\} \big| \theta   \Bigr].
\]
Hence, if $\exp \bigl\{  - \eta(x) - \gamma^\star(x,y)\bigr\}$ is the true density ratio, then
\begin{align*}
\mathbb{E} [\mathrm{IF}(\theta, X, Y, R, T; \eta^\star, m, \gamma^\star)]
&= \E \Bigl[\PP (T = 1|X,Y) \bigl\{ \PP (R \le \theta |X, \theta) - (1 - \alpha) \bigr\} \big| \theta  \Bigr]\\
&= \E \Bigl[\PP (T = 1|X,Y) \bigl\{ \PP (R \le \theta |X, \theta)  \big| \theta \Bigr]- (1 - \alpha) \PP(T = 1)\\
& = \E \Bigl[ \E( \mathbbm{1}\{ T=1\} | X,Y ) \mathbbm{1}\{ R \leq\theta \}  \big| \theta \Bigr]  - (1 - \alpha) \PP(T = 1)\\
& =  \E \Bigl[ \E( 1\{ T=1\} \mathbbm{1} \{ R \leq \theta \} | X,Y, \theta) \Bigr] - (1 - \alpha) \PP(T = 1)\\
& = \E \Bigl[  1\{ T=1\} \mathbbm{1} \{ R \leq \theta \}  \big| \theta \Bigr] -  (1 - \alpha) \PP(T = 1)\\
& =  \PP(T = 1) \Big\{ \E \bigl[  \mathbbm{1} \{ R \leq \theta \} | T=1, \theta \bigr]   -  (1 - \alpha) \Bigr\}.
\end{align*}
Therefore,
\begin{align*}
    \E [\mathrm{IF}(r_\alpha, X, Y, R, T; \eta^\star, m, \gamma^\star)] = 0.
\end{align*}
\item If $m$ is correct: In this case,
\begin{align*}
&\mathbb{E}\left[\mathbbm{1}\{T = 0\}  \exp \bigl\{  - \eta(X) - \gamma^\star(X,Y)\bigr\} \big\{\mathbbm{1}\{R \le \theta\} - m(\theta, X)\big\}  \right]\\ 
&\quad= \mathbb{E}\Bigl\{ \exp \{ - \eta(X) \} \E \Bigl[ \mathbbm{1}(R \leq \theta) \exp\{ - \gamma^\star(X,Y) \} | T=0,X \Bigr]   \Bigr\}\PP(T=0)\\
& \qquad - \mathbb{E}\Bigl\{ \exp \{ - \eta(X) \} \PP(R \leq \theta | X,T=1) \E \Bigl[  \exp\{ - \gamma^\star(X,Y) \} | T=0,X \Bigr] \Bigr\}   \PP(T=0)\\
&\quad= 0 , \text{ by } \eqref{eq:sens-nuisance2}.
\end{align*}
Hence,
\begin{align*}
\mathbb{E}[\mathrm{IF}(\theta , X, Y, R, T; \eta, m^\star, \gamma^\star)] &= \mathbb{E}\Bigl[\mathbbm{1}\{T = 1\}\bigl\{ m^\star(\theta , X) - (1 - \alpha)\bigr\} \big| \theta \Bigr]\\ 
&= \mathbb{E}\Bigl[\mathbbm{1}\{T = 1\} \bigl\{\E \bigl( \mathbbm{1} \{R \le \theta \}|X,T=1, \theta \bigr) - (1 - \alpha)\bigr\} \big| \theta \Bigr]\\
& = \PP(T=1) \Bigl\{ \mathbb{E}\Bigl[ \E \bigl( \mathbbm{1} \{R \le \theta \}|X,T=1, \theta \bigr) | T=1, \theta \Bigr] - (1 - \alpha) \Bigr\}\\
& = \PP(T=1) \Bigl\{ \mathbb{E}\bigl[\mathbbm{1} \{ R \le \theta \}|  T=1, \theta \bigr] - (1 - \alpha) \Bigr\}.
\end{align*}
Therefore,
\begin{align*}
    \mathbb{E} [\mathrm{IF}(r_\alpha, X,Y, R, T; \eta, m^\star, \gamma^\star)] = 0.
\end{align*}
\end{itemize}
Finally, by the double robustness property of $\mathrm{IF}(\cdots)$, it can be easily verified that $\partial_t \E [ \mathrm{IF}(\cdots; \eta_t, m^\star, \gamma^\star)]/\partial \eta_t$ and $ \partial_t \E [ \mathrm{IF}(\cdots; \eta^\star, m_t, \gamma^\star)]/\partial m_t$ for regular parametric submodels $\eta_t$ and $m_t$ are both zero at the truth. And this concludes our claim that $\mathrm{IF}(\cdots)$ is the efficient influence function up to a proportionality constant.
\end{proof}

\end{document}